\let\old@tocline\@tocline
\let\section@tocline\@tocline
\newcommand{\subsection@dotsep}{4.5}
\newcommand{\subsubsection@dotsep}{4.5}
     \leaders\hbox{$\m@th
        \mkern \subsection@dotsep mu\hbox{.}\mkern \subsection@dotsep mu$}\hfill
\let\subsection@tocline\@tocline
\let\@tocline\old@tocline
     \leaders\hbox{$\m@th
        \mkern \subsubsection@dotsep mu\hbox{.}\mkern \subsubsection@dotsep mu$}\hfill
\let\subsubsection@tocline\@tocline
\let\@tocline\old@tocline
\let\old@l@subsection\l@subsection
\let\old@l@subsubsection\l@subsubsection
\def\@tocwriteb#1#2#3{%
  \begingroup
    \@xp\def\csname #2@tocline\endcsname##1##2##3##4##5##6{%
      \ifnum##1>\c@tocdepth
      \else \sbox\z@{##5\let\indentlabel\@tochangmeasure##6}\fi}%
    \csname l@#2\endcsname{#1{\csname#2name\endcsname}{\@secnumber}{}}%
  \endgroup
  \addcontentsline{toc}{#2}%
    {\protect#1{\csname#2name\endcsname}{\@secnumber}{#3}}}%
\newlength{\@tocsectionindent}
\newlength{\@tocsubsectionindent}
\newlength{\@tocsubsubsectionindent}
\newlength{\@tocsectionnumwidth}
\newlength{\@tocsubsectionnumwidth}
\newlength{\@tocsubsubsectionnumwidth}
\newcommand{\settocsectionnumwidth}[1]{\setlength{\@tocsectionnumwidth}{#1}}
\newcommand{\settocsubsectionnumwidth}[1]{\setlength{\@tocsubsectionnumwidth}{#1}}
\newcommand{\settocsubsubsectionnumwidth}[1]{\setlength{\@tocsubsubsectionnumwidth}{#1}}
\newcommand{\settocsectionindent}[1]{\setlength{\@tocsectionindent}{#1}}
\newcommand{\settocsubsectionindent}[1]{\setlength{\@tocsubsectionindent}{#1}}
\newcommand{\settocsubsubsectionindent}[1]{\setlength{\@tocsubsubsectionindent}{#1}}
\renewcommand{\l@section}{\section@tocline{1}{\@tocsectionvskip}{\@tocsectionindent}{\@tocsectionnumwidth}{\@tocsectionformat}}%
\renewcommand{\l@subsection}{\subsection@tocline{1}{\@tocsubsectionvskip}{\@tocsubsectionindent}{\@tocsubsectionnumwidth}{\@tocsubsectionformat}}%
\renewcommand{\l@subsubsection}{\subsubsection@tocline{1}{\@tocsubsubsectionvskip}{\@tocsubsubsectionindent}{\@tocsubsubsectionnumwidth}{\@tocsubsubsectionformat}}%
\newcommand{\@tocsectionformat}{}
\newcommand{\@tocsubsectionformat}{}
\newcommand{\@tocsubsubsectionformat}{}
\def\csname toc@1format\endcsname{\@tocsectionformat}
\def\csname toc@2format\endcsname{\@tocsubsectionformat}
\def\csname toc@3format\endcsname{\@tocsubsubsectionformat}
\newcommand{\settocsectionformat}[1]{\renewcommand{\@tocsectionformat}{#1}}
\newcommand{\settocsubsectionformat}[1]{\renewcommand{\@tocsubsectionformat}{#1}}
\newcommand{\settocsubsubsectionformat}[1]{\renewcommand{\@tocsubsubsectionformat}{#1}}
\newlength{\@tocsectionvskip}
\newcommand{\settocsectionvskip}[1]{\setlength{\@tocsectionvskip}{#1}}
\newlength{\@tocsubsectionvskip}
\newcommand{\settocsubsectionvskip}[1]{\setlength{\@tocsubsectionvskip}{#1}}
\newlength{\@tocsubsubsectionvskip}
\newcommand{\settocsubsubsectionvskip}[1]{\setlength{\@tocsubsubsectionvskip}{#1}}
\patchcmd{\tocsection}{\indentlabel}{\makebox[\@tocsectionnumwidth][l]}{}{}
\patchcmd{\tocsubsection}{\indentlabel}{\makebox[\@tocsubsectionnumwidth][l]}{}{}
\patchcmd{\tocsubsubsection}{\indentlabel}{\makebox[\@tocsubsubsectionnumwidth][l]}{}{}
\newcommand{\@sectypepnumformat}{}
\renewcommand{\contentsline}[1]{%
  \expandafter\let\expandafter\@sectypepnumformat\csname @toc#1pnumformat\endcsname%
  \csname l@#1\endcsname}
\newcommand{\@tocsectionpnumformat}{}
\newcommand{\@tocsubsectionpnumformat}{}
\newcommand{\@tocsubsubsectionpnumformat}{}
\newcommand{\setsectionpnumformat}[1]{\renewcommand{\@tocsectionpnumformat}{#1}}
\newcommand{\setsubsectionpnumformat}[1]{\renewcommand{\@tocsubsectionpnumformat}{#1}}
\newcommand{\setsubsubsectionpnumformat}[1]{\renewcommand{\@tocsubsubsectionpnumformat}{#1}}
\renewcommand{\@tocpagenum}[1]{%
  \hfill {\mdseries\@sectypepnumformat #1}}
\let\oldappendix\appendix
\renewcommand{\appendix}{%
  \leavevmode\oldappendix%
  \addtocontents{toc}{%
    \protect\settowidth{\protect\@tocsectionnumwidth}{\protect\@tocsectionformat\sectionname\space}%
    \protect\addtolength{\protect\@tocsectionnumwidth}{2em}}%
}
\let\oldtableofcontents\tableofcontents
\renewcommand{\tableofcontents}{%
  \vspace*{-\linespacing}
  \oldtableofcontents}
\numberwithin{equation}{section}
\theoremstyle{plain}
\newtheorem{thm}{Theorem}[section]
\newtheorem{lem}[thm]{Lemma}
\newtheorem{cor}[thm]{Corollary}
\theoremstyle{remark}
\newtheorem{rema}[thm]{Remark}
\newcommand{\Z}{\mathbb{Z}}
\newcommand{\C}{\mathbb{C}}
\newcommand{\V}{\mathbb{V}}
\newcommand{\ket}[1]{\left|#1\right\rangle}      
\newcommand{\bra}[1]{\left\langle #1\right|}     
\newcommand{\PROD}[3]{\mathop{\overrightarrow\prod}\limits_{#1 \le #2 \le #3 }}
\newcommand{\iPROD}[3]{\mathop{\overleftarrow\prod}\limits_{#1 \le #2 \le #3 }}
\newcommand{\gen}[1]{\mathrm{#1}}
\newcommand{\hypref}[2]{\ifx\href\asklfhas #2\else\href{#1}{#2}\fi}
\newcommand{\Secref}[1]{Section~\ref{#1}}
\newcommand{\Appref}[1]{Appendix~\ref{#1}}
\newcommand{\Figref}[1]{Figure~\ref{#1}}
\renewcommand{\eqref}[1]{(\ref{#1})}
\def\[{\begin{equation}}
\def\]{\end{equation}}
\def\<{\begin{eqnarray}}
\def\>{\end{eqnarray}}
\title[]{Functional relations in nineteen-vertex models \\ with domain-wall boundaries}
\author{A. Bossart}
\author{W. Galleas}
\address{Institut f\"ur Theoretische Physik, Eidgen\"ossische Technische Hochschule Z\"urich, Wolfgang-Pauli-Strasse 27, 8093 Z\"urich, Switzerland}
\email{albossar@student.ethz.ch}
\address{Institut f\"ur Theoretische Physik, Eidgen\"ossische Technische Hochschule Z\"urich, Wolfgang-Pauli-Strasse 27, 8093 Z\"urich, Switzerland}
\email{galleasw@phys.ethz.ch}
\subjclass[2010]{82B23; 39B32}
\keywords{Nineteen-vertex models, Izergin-Korepin model, Fateev-Zamolodchikov model, domain-wall boundaries}
\thanks{The work of W.G. is partially supported by the Swiss National Science Foundation through the NCCR SwissMAP}
\begin{document}

\begin{abstract}
This work is concerned with functional properties shared by partition functions of nineteen-vertex models with domain-wall boundary conditions. In particular, we describe both Izergin-Korepin and Fateev-Zamolodchikov models with the aforementioned boundary conditions and show their partition functions are governed by a system of functional equations originated from the associated Yang-Baxter algebra.
\end{abstract}

\maketitle

\tableofcontents

\section{Introduction} \label{sec:INTRO}

Vertex models of Statistical Mechanics can be regarded as a generalization of the ice model \cite{Bernal_Fowler_1933, Pauling_1935} proposed in the early 1930s 
aiming to describe the entropy of ice as its temperature goes to zero. In the case of ice we are actually considering 
$\mathrm{H}_2 \mathrm{O}$ molecules arranged in a crystalline structure and it is natural to suppose other molecular systems can also be described in the same way. For instance, this is the case of the $\mathrm{KH}_2 \mathrm{PO}_4$ molecule covered by the \textsc{KDP} model \cite{Slater_1941}. See also \cite{Rys_1963} and \cite{Nagle_1966} for other variants of the ice model. The aforementioned molecular structures are all particular cases of the well known six-vertex model, which turns out to be a two-dimensional abstraction of the former using concepts of \emph{graph theory}.
More accurately, the six-vertex model consists of a collection of \emph{colored graphs} embedded in a two-dimensional lattice where each vertex has degree four or one; and no loops are allowed. As for the edges, each one can then assume two distinct configurations or colors. 
By allowing each edge to assume three distinct colors, we then have the so-called nineteen-vertex model as a possible two-dimensional 
lattice system generalizing the ideas of the six-vertex model.

\subsection{Integrable nineteen-vertex model}
As a matter of fact, the denomination \emph{nineteen-vertex model} is very broad and one still needs to declare the statistical weights for each allowed graph configuration, in addition to the boundary conditions under consideration, in order to have the model fully defined. 
For instance, the previously mentioned six-vertex model is not a generic one as its statistical weights are carefully chosen in such a way that the model's partition function exhibits special properties allowing physical quantities to be computed exactly. In other words, here we are considering vertex models integrable in the sense of Baxter \cite{Baxter_book}; and this requires the model's statistical weights to satisfy the Yang-Baxter equation.

As for the symmetric six-vertex model, there is essentially only one solution of the associated Yang-Baxter equation. However, a similar uniqueness statement does not hold for generic two-dimensional vertex models. For instance, to the best of our knowledge the main representatives of nineteen-vertex models solving the Yang-Baxter equation corresponds to:
\begin{itemize}
\item Izergin-Korepin model \cite{Izergin_Korepin_1981},
\item Fateev-Zamolodchikov model \cite{Fateev_Zamolodchikov_1980};
\end{itemize}
as well as solutions based on the $q$-deformed Lie superalgebras $U_q \left[ \widehat{\mathfrak{osp}}(1|2) \right]$,
$U_q [ \widehat{\mathfrak{sl}}^{(2)}(1|2) ]$ and $U_q [ \widehat{\mathfrak{osp}}^{(2)}(2|2)]$ \cite{Bazhanov_Shadrikov_1987, Galleas_2004, Galleas_Martins_2006, Yang_Zhen_2001}.
In the present paper we shall restrict our attention to the Izergin-Korepin (IK) and Fateev-Zamolodchikov (FZ) models; and it is important to remark those models also exhibit an underlying quantum affine Lie algebra $U_q [\widehat{\mathfrak{G}}]$ \cite{Bazhanov_1984, Jimbo_1986b}. In the case of the IK model
we have $\widehat{\mathfrak{G}} = A_2^{(2)}$ while $\widehat{\mathfrak{G}} = B_1^{(1)}$ for the FZ model.

\subsection{Boundary conditions}

After having the statistical weights of a vertex model fixed, one still needs to define appropriate boundary conditions in order to having the model's partition function completely defined. Interestingly, different choices of boundary conditions not only influences the physical properties of the vertex model in the thermodynamical limit \cite{Korepin_Justin_2000} but also changes drastically the kind of mathematical problem one needs to deal with in order to obtain the sought partition function in closed form.

For instance, by choosing \emph{periodic boundary conditions} one can resort to Kramers and Wannier transfer matrix technique \cite{Kramers_1941a, Kramers_1941b}; and the evaluation of the model's partition function can be translated into the eigenvalue problem for the associated transfer matrix. As for the integrable nineteen-vertex models described above, the transfer matrix eigenvalue problem can be tackled through Tarasov's formulation of the Algebraic Bethe Ansatz \cite{Tarasov_1988}. However, there still exists several other classes of boundary conditions which render vertex models of interest from both Physics and Mathematics perspectives. For instance, among the possible choices of boundaries we have the so-called \emph{domain-wall boundary conditions}; and this is the case we give special emphasis in this work.
As for two-dimensional vertex models, domain-wall boundaries were introduced by Korepin in \cite{Korepin_1982} as a tool for studying scalar products of Bethe vectors. However, it was already realized in \cite{Korepin_1982} that this type of boundary conditions gives rise to genuine vertex models which deserves independent attention. Hence, given the rich physical and mathematical structures associated to the six-vertex model with such type of boundary conditions, it is natural to wonder if we can extend the previous studies to more sophisticated  two-dimensional vertex models. In this way, we find nineteen-vertex models to be natural targets as they, similarly to the six-vertex model, also constitute a pillar supporting hierarchies of integrable systems of Statistical Mechanics. 

\subsection{Previous results}
The literature devoted to nineteen-vertex models with domain-wall boundaries is to date quite modest when compared to the one studying the six-vertex model.
In the case of the six-vertex model, some unusual physical behavior have been noticed; and this was mainly due to Izergin's determinantal representation for the model's partition function \cite{Izergin_1987}. For instance, Izergin's formula has possibilitated the study of the influence of boundary conditions in the thermodynamical limit of the six-vertex model \cite{Korepin_Justin_2000, Bleher_2006, Bleher_2009, Bleher_2010} and the formation of limit shapes associated to spatial separation of phases \cite{Cohn_1996, Colomo_Pronko_2007}. However, it is important to remark that several other determinantal representations are also available \cite{Galleas_2016, Galleas_2016b, Galleas_2018}, as well as multiple contour integrals representations \cite{Galleas_2012, Galleas_2013}. 
In this way, it is compelling to try to extend the results available for the six-vertex model to nineteen-vertex models in order to further our understanding of the role played by boundary conditions in the thermodynamical limit of two-dimensional lattice models.

As for the FZ model with domain-wall boundaries, a determinantal representation has been obtained in \cite{Caradoc_2006} by identifying the FZ model with a spin-$1$ version of the six-vertex model. However, the problem is not that simple in the case of the IK model and a determinantal formula has been obtained in \cite{Garbali_2016} only for a special value of the anisotropy parameter.

\subsection{Our approach}
The determinantal formulae of \cite{Caradoc_2006} and \cite{Garbali_2016}, obtained respectively for the FZ model and for a special case of the IK model, result from recurrence relations satisfied by the models' partition functions. 
This recursive approach is essentially the same method originally put forward by Korepin in \cite{Korepin_1982} for the six-vertex model; ultimately leading to Izergin's representation \cite{Izergin_1987}.
However, one inherent step of this approach is making an \emph{educated guess} for the sought partition function; which can then be shown to correspond to the actual partition function if it satisfies the aforementioned recurrence relations in addition to extra properties. In this way, the construction of such determinantal representations can elude us in more sophisticated models.

An alternative method based on functional equations was put forward in \cite{Galleas_2010} and subsequently refined in the series of works \cite{Galleas_2011, Galleas_2012, Galleas_2013, Galleas_2016a, Galleas_2016b}. We shall refer to this approach as Algebraic-Functional (AF) method and it has been responsible, among other results, to the construction of single determinant representations for the elliptic solid-on-solid model with domain-wall boundaries \cite{Galleas_2016a, Galleas_2016b}; which were previously thought to not admit such type of representations.
Hence, given the above described scenario, the extension of the AF method to nineteen-vertex models with domain-wall boundaries is a sound problem and it is the main goal of the present paper.

\subsection{Outline} 
We have organized this paper as follows. In \Secref{sec:19V} we describe the algebraic formulation of integrable nineteen-vertex models, with special emphasis to the IK and FZ models as they are the specific vertex models we will be considering in the present work. In \Secref{sec:19V} we also precise the boundary conditions 
relevant to our forthcoming analysis and present properties expected from the models' partition functions. \Secref{sec:AFM} is then devoted to the formulation of the AF method to both IK and FZ models with domain-wall boundaries in an unified way. Functional equations governing our models' partition functions are then derived and inspected in \Secref{sec:PROP}. In particular, in \Secref{sec:PROP} we also discuss their \emph{strength} in characterizing the sought partition functions. \Secref{sec:CONCL} is
then left for concluding remarks and technical details and  extra results are presented in the appendices.

\section{Nineteen-vertex models} \label{sec:19V}

This work is concerned with nineteen-vertex models with particular domain-wall boundary conditions; however, our analysis will require a more general formulation allowing also for other choices of boundaries. In this way, we shall start this section with the introduction of conventions and concepts which will assist us throughout the next sections.

Write $[n] \coloneqq \{0, 1, \dots, n, n+1 \}$ and let $\mathrm{Rect}_{K,L} \coloneqq [K] \times [L] \subseteq \Z_{\geq 0}^2$ denote a two-dimensional
lattice with bulk grid formed by the crossing of $K$ rows and $L$ columns. Also, let
\[
\mathcal{G} =   \bigcup_{\substack{i \in [K] \backslash \{0, K+1 \} \\ j \in [L] \backslash \{0, L+1 \} }} g_{i,j}
\]
be a graph built from the juxtaposition of local subgraphs $g_{i,j}$. The latter consists of $g_{i,j} = \left( \mathcal{V}_{i,j}, \mathcal{E}_{i,j}   \right)$
with vertices $\mathcal{V}_{i,j} = \{ v_{i,j} , v_{i,j-1} , v_{i,j+1} , v_{i-1,j} , v_{i+1,j} \}$
and edges 
\[ \label{edges}
\mathcal{E}_{i,j} = \{ d_{v_{i,j}}( v_{i,j-1}) , d_{v_{i,j}}( v_{i,j+1} ) , d_{v_{i,j}}( v_{i-1,j} ) , d_{v_{i,j}}( v_{i+1,j}) \} \; .
\]
In \eqref{edges} we have used  $d_{v_{i,j}}( v_{k,l} ) = d_{v_{k,l}}( v_{i,j} )$ to denote the edge connecting generic vertices $v_{i,j}$ and $v_{k,l}$. We then embed $\mathcal{G}$ on $\mathrm{Rect}_{K,L}$ by identifying $v_{i,j}$ with $(i,j) \in \mathrm{Rect}_{K,L}$.

Next we would like to promote $\mathcal{G}$ to an \emph{edge-colored} graph $\mathcal{G}^{*}$ obtained through the assignment 
$d_{v_{i,j}}( v_{k,l} ) \mapsto d_{v_{i,j}}^{(\alpha)} ( v_{k,l} )$ for all edges in $\mathcal{G}$. The label $\alpha$ is then introduced to characterize
the color or configuration assigned to a given edge. 
Here we are interested in the so-called nineteen-vertex models and, in that case, each edge $d_{v_{i,j}}^{(\alpha)} ( v_{k,l} )$ in $\mathcal{G}^{*}$
can take on three distinct configurations. For instance, we shall write $\alpha = 1, 2, 3$ and use respectively 
$\substack{
\begin{tikzpicture}
\draw [postaction=decorate,decoration={markings, mark=at position 0.6cm with {{\arrow[black]{Triangle}}}}, thick]  (2,0) -- (1,0);
\end{tikzpicture} \\ }
$, 
$\substack{
\begin{tikzpicture}
\draw [dashed, thick]  (1,0) -- (2,0);
\end{tikzpicture} \\ }
$
and $\substack{
\begin{tikzpicture}
\draw [postaction=decorate,decoration={markings, mark=at position 0.6cm with {\arrow[black]{Triangle}}}, thick]  (1,0) -- (2,0);
\end{tikzpicture} \\ }
$
to depict the corresponding horizontal edges. Similarly, we use \hspace{0.1cm} $\substack{
\begin{tikzpicture}
\draw [postaction=decorate,decoration={markings, mark=at position 0.45cm with {\arrow[black]{Triangle}}}, thick]  (0,0) -- (0,0.7);
\end{tikzpicture} \\}
$ \hspace{0.1cm}, \hspace{0.1cm}
$\substack{
\begin{tikzpicture}
\draw [dashed, thick]  (0,0) -- (0,0.7);
\end{tikzpicture} \\}
$ 
\hspace{0.1cm} and \hspace{0.1cm} $\substack{
\begin{tikzpicture}[>=stealth]
\draw [postaction=decorate,decoration={markings, mark=at position 0.45cm with {\arrow[black]{Triangle}}}, thick]  (0,0.7) -- (0,0);
\end{tikzpicture} \\}
$ \hspace{0.1cm} to illustrate vertical edges associated respectively to $\alpha= 1, 2, 3$.
Moreover, in the case of nineteen-vertex models, we restrict the number of possible edge-colored graphs $g_{i,j}$ to nineteen among the 
$3^4 = 81$ possibilities. The allowed graphs $g_{i,j}$ are then depicted in \Figref{fig:19v}.

\begin{rema}[Conservation of arrows] \label{AR}
The diagrammatic representations collected in \Figref{fig:19v} makes manifest an important \emph{conservation law} in nineteen-vertex models. For instance, one can readily see in \Figref{fig:19v} that all graphs $g_{i,j}$ have the same number of arrows pointing inwards and outwards. Here we refer to this rule as \emph{conservation of arrows}.
\end{rema}

\begin{figure} \centering
\scalebox{0.85}{
\begin{tikzpicture}
\path (0,0) node[circle, fill=black, scale=0.3](C) {}
      (-1,0) node[circle, fill=black, scale=0.3](W) {}
      (1,0) node[circle, fill=black, scale=0.3](E) {}
      (0,-1) node[circle, fill=black, scale=0.3](S) {}
      (0,1) node[circle, fill=black, scale=0.3](N) {};
\draw [postaction=decorate,decoration={markings, mark=at position 0.5cm with {\arrow[black]{Triangle}}}, thick]  (C) -- (W);
\draw [postaction=decorate,decoration={markings, mark=at position 0.5cm with {\arrow[black]{Triangle}}}, thick]  (C) -- (E);
\draw [postaction=decorate,decoration={markings, mark=at position 0.5cm with {\arrow[black]{Triangle}}}, thick]  (N) -- (C);
\draw [postaction=decorate,decoration={markings, mark=at position 0.5cm with {\arrow[black]{Triangle}}}, thick]  (S) -- (C);

\begin{scope}[xshift=2.5cm]
\path (0,0) node[circle, fill=black, scale=0.3](C) {}
      (-1,0) node[circle, fill=black, scale=0.3](W) {}
      (1,0) node[circle, fill=black, scale=0.3](E) {}
      (0,-1) node[circle, fill=black, scale=0.3](S) {}
      (0,1) node[circle, fill=black, scale=0.3](N) {};
\draw [postaction=decorate,decoration={markings, mark=at position 0.5cm with {\arrow[black]{Triangle}}}, thick]  (W) -- (C);
\draw [postaction=decorate,decoration={markings, mark=at position 0.5cm with {\arrow[black]{Triangle}}}, thick]  (E) -- (C);
\draw [postaction=decorate,decoration={markings, mark=at position 0.5cm with {\arrow[black]{Triangle}}}, thick]  (C) -- (N);
\draw [postaction=decorate,decoration={markings, mark=at position 0.5cm with {\arrow[black]{Triangle}}}, thick]  (C) -- (S);
\end{scope}

\begin{scope}[xshift=5cm]
\path (0,0) node[circle, fill=black, scale=0.3](C) {}
      (-1,0) node[circle, fill=black, scale=0.3](W) {}
      (1,0) node[circle, fill=black, scale=0.3](E) {}
      (0,-1) node[circle, fill=black, scale=0.3](S) {}
      (0,1) node[circle, fill=black, scale=0.3](N) {};
\draw [postaction=decorate,decoration={markings, mark=at position 0.5cm with {\arrow[black]{Triangle}}}, thick]  (W) -- (C);
\draw [postaction=decorate,decoration={markings, mark=at position 0.5cm with {\arrow[black]{Triangle}}}, thick]  (C) -- (E);
\draw [postaction=decorate,decoration={markings, mark=at position 0.5cm with {\arrow[black]{Triangle}}}, thick]  (N) -- (C);
\draw [postaction=decorate,decoration={markings, mark=at position 0.5cm with {\arrow[black]{Triangle}}}, thick]  (C) -- (S);
\end{scope}

\begin{scope}[xshift=7.5cm]
\path (0,0) node[circle, fill=black, scale=0.3](C) {}
      (-1,0) node[circle, fill=black, scale=0.3](W) {}
      (1,0) node[circle, fill=black, scale=0.3](E) {}
      (0,-1) node[circle, fill=black, scale=0.3](S) {}
      (0,1) node[circle, fill=black, scale=0.3](N) {};
\draw [postaction=decorate,decoration={markings, mark=at position 0.5cm with {\arrow[black]{Triangle}}}, thick]  (C) -- (W);
\draw [postaction=decorate,decoration={markings, mark=at position 0.5cm with {\arrow[black]{Triangle}}}, thick]  (E) -- (C);
\draw [postaction=decorate,decoration={markings, mark=at position 0.5cm with {\arrow[black]{Triangle}}}, thick]  (C) -- (N);
\draw [postaction=decorate,decoration={markings, mark=at position 0.5cm with {\arrow[black]{Triangle}}}, thick]  (S) -- (C);
\end{scope}

\begin{scope}[xshift=10cm]
\path (0,0) node[circle, fill=black, scale=0.3](C) {}
      (-1,0) node[circle, fill=black, scale=0.3](W) {}
      (1,0) node[circle, fill=black, scale=0.3](E) {}
      (0,-1) node[circle, fill=black, scale=0.3](S) {}
      (0,1) node[circle, fill=black, scale=0.3](N) {};
\draw [postaction=decorate,decoration={markings, mark=at position 0.5cm with {\arrow[black]{Triangle}}}, thick]  (C) -- (W);
\draw [postaction=decorate,decoration={markings, mark=at position 0.5cm with {\arrow[black]{Triangle}}}, thick]  (E) -- (C);
\draw [postaction=decorate,decoration={markings, mark=at position 0.5cm with {\arrow[black]{Triangle}}}, thick]  (N) -- (C);
\draw [postaction=decorate,decoration={markings, mark=at position 0.5cm with {\arrow[black]{Triangle}}}, thick]  (C) -- (S);
\end{scope}

\begin{scope}[xshift=12.5cm]
\path (0,0) node[circle, fill=black, scale=0.3](C) {}
      (-1,0) node[circle, fill=black, scale=0.3](W) {}
      (1,0) node[circle, fill=black, scale=0.3](E) {}
      (0,-1) node[circle, fill=black, scale=0.3](S) {}
      (0,1) node[circle, fill=black, scale=0.3](N) {};
\draw [postaction=decorate,decoration={markings, mark=at position 0.5cm with {\arrow[black]{Triangle}}}, thick]  (W) -- (C);
\draw [postaction=decorate,decoration={markings, mark=at position 0.5cm with {\arrow[black]{Triangle}}}, thick]  (C) -- (E);
\draw [postaction=decorate,decoration={markings, mark=at position 0.5cm with {\arrow[black]{Triangle}}}, thick]  (C) -- (N);
\draw [postaction=decorate,decoration={markings, mark=at position 0.5cm with {\arrow[black]{Triangle}}}, thick]  (S) -- (C);
\end{scope}

\begin{scope}[xshift=15cm]
\path (0,0) node[circle, fill=black, scale=0.3](C) {}
      (-1,0) node[circle, fill=black, scale=0.3](W) {}
      (1,0) node[circle, fill=black, scale=0.3](E) {}
      (0,-1) node[circle, fill=black, scale=0.3](S) {}
      (0,1) node[circle, fill=black, scale=0.3](N) {};
\draw [postaction=decorate,decoration={markings, mark=at position 0.5cm with {\arrow[black]{Triangle}}}, thick]  (W) -- (C);
\draw [postaction=decorate,decoration={markings, mark=at position 0.5cm with {\arrow[black]{Triangle}}}, thick]  (C) -- (E);
\draw [dashed, thick]  (N) -- (C);
\draw [dashed, thick]  (C) -- (S);
\end{scope}

\begin{scope}[yshift=-2.5cm]
\path (0,0) node[circle, fill=black, scale=0.3](C) {}
      (-1,0) node[circle, fill=black, scale=0.3](W) {}
      (1,0) node[circle, fill=black, scale=0.3](E) {}
      (0,-1) node[circle, fill=black, scale=0.3](S) {}
      (0,1) node[circle, fill=black, scale=0.3](N) {};
\draw [postaction=decorate,decoration={markings, mark=at position 0.5cm with {\arrow[black]{Triangle}}}, thick]  (C) -- (W);
\draw [postaction=decorate,decoration={markings, mark=at position 0.5cm with {\arrow[black]{Triangle}}}, thick]  (E) -- (C);
\draw [dashed, thick]  (N) -- (C);
\draw [dashed, thick]  (C) -- (S);
\end{scope}

\begin{scope}[xshift=2.5cm, yshift=-2.5cm]
\path (0,0) node[circle, fill=black, scale=0.3](C) {}
      (-1,0) node[circle, fill=black, scale=0.3](W) {}
      (1,0) node[circle, fill=black, scale=0.3](E) {}
      (0,-1) node[circle, fill=black, scale=0.3](S) {}
      (0,1) node[circle, fill=black, scale=0.3](N) {};
\draw [postaction=decorate,decoration={markings, mark=at position 0.5cm with {\arrow[black]{Triangle}}}, thick]  (C) -- (N);
\draw [postaction=decorate,decoration={markings, mark=at position 0.5cm with {\arrow[black]{Triangle}}}, thick]  (S) -- (C);
\draw [dashed, thick]  (W) -- (C);
\draw [dashed, thick]  (C) -- (E);
\end{scope}

\begin{scope}[xshift=5cm, yshift=-2.5cm]
\path (0,0) node[circle, fill=black, scale=0.3](C) {}
      (-1,0) node[circle, fill=black, scale=0.3](W) {}
      (1,0) node[circle, fill=black, scale=0.3](E) {}
      (0,-1) node[circle, fill=black, scale=0.3](S) {}
      (0,1) node[circle, fill=black, scale=0.3](N) {};
\draw [postaction=decorate,decoration={markings, mark=at position 0.5cm with {\arrow[black]{Triangle}}}, thick]  (N) -- (C);
\draw [postaction=decorate,decoration={markings, mark=at position 0.5cm with {\arrow[black]{Triangle}}}, thick]  (C) -- (S);
\draw [dashed, thick]  (W) -- (C);
\draw [dashed, thick]  (C) -- (E);
\end{scope}

\begin{scope}[xshift=7.5cm, yshift=-2.5cm]
\path (0,0) node[circle, fill=black, scale=0.3](C) {}
      (-1,0) node[circle, fill=black, scale=0.3](W) {}
      (1,0) node[circle, fill=black, scale=0.3](E) {}
      (0,-1) node[circle, fill=black, scale=0.3](S) {}
      (0,1) node[circle, fill=black, scale=0.3](N) {};
\draw [postaction=decorate,decoration={markings, mark=at position 0.5cm with {\arrow[black]{Triangle}}}, thick]  (W) -- (C);
\draw [postaction=decorate,decoration={markings, mark=at position 0.5cm with {\arrow[black]{Triangle}}}, thick]  (C) -- (N);
\draw [dashed, thick]  (C) -- (E);
\draw [dashed, thick]  (S) -- (C);
\end{scope}

\begin{scope}[xshift=10cm, yshift=-2.5cm]
\path (0,0) node[circle, fill=black, scale=0.3](C) {}
      (-1,0) node[circle, fill=black, scale=0.3](W) {}
      (1,0) node[circle, fill=black, scale=0.3](E) {}
      (0,-1) node[circle, fill=black, scale=0.3](S) {}
      (0,1) node[circle, fill=black, scale=0.3](N) {};
\draw [postaction=decorate,decoration={markings, mark=at position 0.5cm with {\arrow[black]{Triangle}}}, thick]  (C) -- (W);
\draw [postaction=decorate,decoration={markings, mark=at position 0.5cm with {\arrow[black]{Triangle}}}, thick]  (N) -- (C);
\draw [dashed, thick]  (C) -- (E);
\draw [dashed, thick]  (S) -- (C);
\end{scope}

\begin{scope}[xshift=12.5cm, yshift=-2.5cm]
\path (0,0) node[circle, fill=black, scale=0.3](C) {}
      (-1,0) node[circle, fill=black, scale=0.3](W) {}
      (1,0) node[circle, fill=black, scale=0.3](E) {}
      (0,-1) node[circle, fill=black, scale=0.3](S) {}
      (0,1) node[circle, fill=black, scale=0.3](N) {};
\draw [postaction=decorate,decoration={markings, mark=at position 0.5cm with {\arrow[black]{Triangle}}}, thick]  (C) -- (N);
\draw [postaction=decorate,decoration={markings, mark=at position 0.5cm with {\arrow[black]{Triangle}}}, thick]  (E) -- (C);
\draw [dashed, thick]  (W) -- (C);
\draw [dashed, thick]  (S) -- (C);
\end{scope}

\begin{scope}[xshift=15cm, yshift=-2.5cm]
\path (0,0) node[circle, fill=black, scale=0.3](C) {}
      (-1,0) node[circle, fill=black, scale=0.3](W) {}
      (1,0) node[circle, fill=black, scale=0.3](E) {}
      (0,-1) node[circle, fill=black, scale=0.3](S) {}
      (0,1) node[circle, fill=black, scale=0.3](N) {};
\draw [postaction=decorate,decoration={markings, mark=at position 0.5cm with {\arrow[black]{Triangle}}}, thick]  (N) -- (C);
\draw [postaction=decorate,decoration={markings, mark=at position 0.5cm with {\arrow[black]{Triangle}}}, thick]  (C) -- (E);
\draw [dashed, thick]  (W) -- (C);
\draw [dashed, thick]  (S) -- (C);
\end{scope}

\begin{scope}[xshift=2.5cm, yshift=-5cm]
\path (0,0) node[circle, fill=black, scale=0.3](C) {}
      (-1,0) node[circle, fill=black, scale=0.3](W) {}
      (1,0) node[circle, fill=black, scale=0.3](E) {}
      (0,-1) node[circle, fill=black, scale=0.3](S) {}
      (0,1) node[circle, fill=black, scale=0.3](N) {};
\draw [postaction=decorate,decoration={markings, mark=at position 0.5cm with {\arrow[black]{Triangle}}}, thick]  (W) -- (C);
\draw [postaction=decorate,decoration={markings, mark=at position 0.5cm with {\arrow[black]{Triangle}}}, thick]  (C) -- (S);
\draw [dashed, thick]  (N) -- (C);
\draw [dashed, thick]  (C) -- (E);
\end{scope}

\begin{scope}[xshift=5cm, yshift=-5cm]
\path (0,0) node[circle, fill=black, scale=0.3](C) {}
      (-1,0) node[circle, fill=black, scale=0.3](W) {}
      (1,0) node[circle, fill=black, scale=0.3](E) {}
      (0,-1) node[circle, fill=black, scale=0.3](S) {}
      (0,1) node[circle, fill=black, scale=0.3](N) {};
\draw [postaction=decorate,decoration={markings, mark=at position 0.5cm with {\arrow[black]{Triangle}}}, thick]  (C) -- (W);
\draw [postaction=decorate,decoration={markings, mark=at position 0.5cm with {\arrow[black]{Triangle}}}, thick]  (S) -- (C);
\draw [dashed, thick]  (N) -- (C);
\draw [dashed, thick]  (C) -- (E);
\end{scope}

\begin{scope}[xshift=7.5cm, yshift=-5cm]
\path (0,0) node[circle, fill=black, scale=0.3](C) {}
      (-1,0) node[circle, fill=black, scale=0.3](W) {}
      (1,0) node[circle, fill=black, scale=0.3](E) {}
      (0,-1) node[circle, fill=black, scale=0.3](S) {}
      (0,1) node[circle, fill=black, scale=0.3](N) {};
\draw [postaction=decorate,decoration={markings, mark=at position 0.5cm with {\arrow[black]{Triangle}}}, thick]  (S) -- (C);
\draw [postaction=decorate,decoration={markings, mark=at position 0.5cm with {\arrow[black]{Triangle}}}, thick]  (C) -- (E);
\draw [dashed, thick]  (W) -- (C);
\draw [dashed, thick]  (C) -- (N);
\end{scope}

\begin{scope}[xshift=10cm, yshift=-5cm]
\path (0,0) node[circle, fill=black, scale=0.3](C) {}
      (-1,0) node[circle, fill=black, scale=0.3](W) {}
      (1,0) node[circle, fill=black, scale=0.3](E) {}
      (0,-1) node[circle, fill=black, scale=0.3](S) {}
      (0,1) node[circle, fill=black, scale=0.3](N) {};
\draw [postaction=decorate,decoration={markings, mark=at position 0.5cm with {\arrow[black]{Triangle}}}, thick]  (C) -- (S);
\draw [postaction=decorate,decoration={markings, mark=at position 0.5cm with {\arrow[black]{Triangle}}}, thick]  (E) -- (C);
\draw [dashed, thick]  (W) -- (C);
\draw [dashed, thick]  (C) -- (N);
\end{scope}

\begin{scope}[xshift=12.5cm, yshift=-5cm]
\path (0,0) node[circle, fill=black, scale=0.3](C) {}
      (-1,0) node[circle, fill=black, scale=0.3](W) {}
      (1,0) node[circle, fill=black, scale=0.3](E) {}
      (0,-1) node[circle, fill=black, scale=0.3](S) {}
      (0,1) node[circle, fill=black, scale=0.3](N) {};
\draw [dashed, thick]  (S) -- (C);
\draw [dashed, thick]  (C) -- (E);
\draw [dashed, thick]  (W) -- (C);
\draw [dashed, thick]  (C) -- (N);
\end{scope}
\end{tikzpicture}}
\caption{Graphs $g_{i,j}$ in nineteen-vertex models.}
\label{fig:19v}
\end{figure}
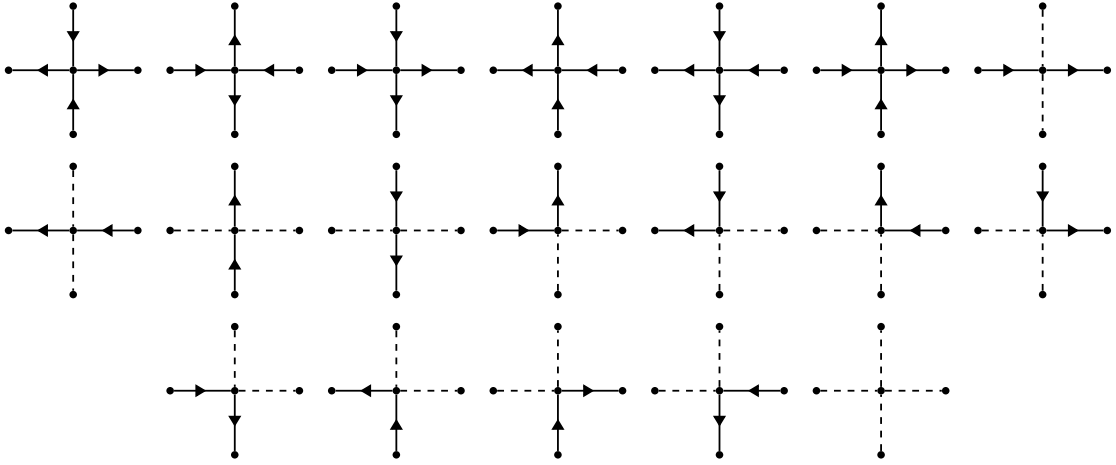

At the end of the day one would like to associate a partition function to the graph $\mathcal{G}^{*}$ embedded on $\mathrm{Rect}_{K,L}$.
That will additionally require the introduction of boundary conditions and statistical weights for local graph configurations $g_{i,j}$. We shall return to this issue in the following subsections.

\subsection{Algebraic formulation} \label{sec:ALG}

Write $\mathcal{R} (\lambda_i , \mu_j)^{\alpha_{i,j+1} , \beta_{i,j}}_{\alpha_{i,j} , \beta_{i+1,j}}$ for the statistical weight associated to the local edge-colored graph $g_{i,j}$ as shown in \Figref{fig:gij}.
\begin{figure} \centering
\scalebox{1.3}{
\begin{tikzpicture}
\path (0,0) node[circle, fill=black, scale=0.25](C) {}
      (-1,0) node[circle, fill=black, scale=0.25](W) {}
      (1,0) node[circle, fill=black, scale=0.25](E) {}
      (0,-1) node[circle, fill=black, scale=0.25](S) {}
      (0,1) node[circle, fill=black, scale=0.25](N) {};
      
 \node[scale=0.7,above,left] at (W) {{$\scriptscriptstyle d_{v_{i,j}}^{(\alpha_{i,j})} (v_{i,j-1})$}};
 \node[scale=0.7,above,right] at (E) {{$\scriptscriptstyle d_{v_{i,j}}^{(\alpha_{i,j+1})} (v_{i,j+1})$}};
 \node[scale=0.7,above] at (N) {{$\scriptscriptstyle d_{v_{i,j}}^{(\beta_{i+1,j})} (v_{i+1,j})$}};
 \node[scale=0.7,below] at (S) {{$\scriptscriptstyle d_{v_{i,j}}^{(\beta_{i,j})} (v_{i-1,j})$}};
\draw [thick]  (C) -- (W);
\draw [thick]  (C) -- (E);
\draw [thick]  (N) -- (C);
\draw [thick]  (S) -- (C);
\end{tikzpicture}}
\caption{Local edge-colored graph $g_{i,j}$.}
\label{fig:gij}
\end{figure}
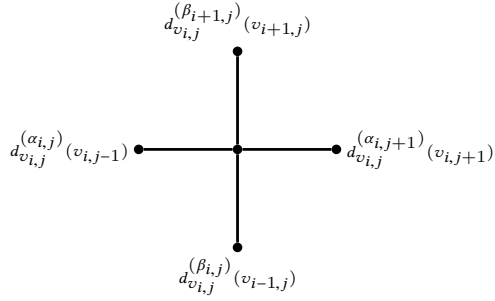
Also, let us introduce vectors
\<
\vec{\alpha}_j &\coloneqq& \left( \alpha_{1,j}, \alpha_{2,j}, \dots , \alpha_{K,j}   \right) \nonumber \\
\vec{\beta}_{i} &\coloneqq& \left( \beta_{i,1}, \beta_{i,2}, \dots,   \beta_{i,L} \right) \; .
\>
In this way, we can define a partition function with fixed boundary conditions for $\mathcal{G}^{*}$ on $\mathrm{Rect}_{K,L}$ as
\< \label{PF}
Z_{\vec{\beta}_0}^{\vec{\beta}_{K+1}} (\vec{\alpha}_0 \mid \vec{\alpha}_{L+1}) \coloneqq \sum_{\alpha_{i,j}, \beta_{i,j} \in \{1,2,3 \}}
\prod_{\substack{i \in [K] \backslash \{0, K+1 \} \\ j \in [L] \backslash \{0, L+1 \} }} \mathcal{R} (\lambda_i , \mu_j)^{\alpha_{i,j+1} , \beta_{i,j}}_{\alpha_{i,j} , \beta_{i+1,j}} \; .
\>
The RHS of \eqref{PF} looks overwhelming at first sight but, fortunately, it can be rewritten in an operatorial manner along the lines of Kramers and Wannier transfer matrix technique. 
In order to present such operatorial formulation, let us introduce vector spaces
$\V = \V_a = \V_i \simeq \C^3$ for $i=1,2, \dots, L$ and let $\{e_1 , e_2 , e_3 \}$ be standard basis vectors of $\C^3$. More precisely, we take
\<
e_1 \coloneqq \begin{pmatrix} 1 \\ 0 \\ 0 \end{pmatrix} \quad , \quad e_2 \coloneqq \begin{pmatrix} 0 \\ 1 \\ 0 \end{pmatrix} \quad \text{and} \quad
e_3 \coloneqq \begin{pmatrix} 0 \\ 0 \\ 1 \end{pmatrix} \; .
\>
Also, write $E_{\alpha, \alpha'} \in \mathrm{End}( \C^3 )$ for unit matrices defined by $E_{\alpha, \alpha'} (e_{\beta}) \coloneqq \delta_{\alpha', \beta} \; e_{\alpha}$ for  $\beta = 1, 2, 3$. 
Next we define the matrix $\mathcal{R} \colon \C \times \C \to \mathrm{End}( \V \otimes \V)$ as
\[ \label{rmat}
\mathcal{R}(\lambda_i, \mu_j) \coloneqq \sum_{\substack{\alpha, \alpha' \in \{1,2,3 \} \\ \beta , \beta' \in \{1,2,3 \} }} \mathcal{R}(\lambda_i, \mu_j)_{\alpha, \beta}^{\alpha' , \beta'}  \; E_{\alpha, \alpha'} \otimes E_{\beta, \beta'} \; .
\] 

\begin{rema} \label{AR1}
The conservation of arrows pointed out in Remark \ref{AR} reflects   in the $\mathcal{R}$-matrix formalism \eqref{rmat} by only allowing non-vanishing statistical weights $\mathcal{R}_{\alpha, \beta}^{\alpha' , \beta'}$ such that $\alpha + \beta = \alpha' + \beta'$.
\end{rema}

Here we intend to express the partition function \eqref{PF} in terms of the $\mathcal{R}$-matrix \eqref{rmat}. With that goal in mind we then introduce the so-called \emph{monodromy matrix} $\mathcal{T} \colon \C \times \C^L \to \mathrm{End} (\V_a \otimes \V_{\mathcal{Q}})$ with 
$\V_{\mathcal{Q}} \coloneqq \bigotimes_{i=1}^L \V_i$. More precisely, we write
\[ \label{mono}
\mathcal{T}(\lambda_i \mid \{ \mu_j \}) \coloneqq \PROD{1}{j}{L} \mathcal{R}_{a j} (\lambda_i, \mu_j) 
\quad \in \; \mathrm{End}(\V_a \otimes \V_1 \otimes \otimes \dots \otimes \V_L)
\]
using tensor leg notation. The monodromy matrix $\mathcal{T}$ can also be regarded as matrix in $\mathrm{End}(\V_a)$ with entries in 
$\mathrm{End}(\V_{\mathcal{Q}})$. In this way, we also have
\< \label{abcd}
\mathcal{T}(\lambda \mid \{ \mu_j \}) \eqqcolon \begin{pmatrix}
\mathcal{A}_1 (\lambda) & \mathcal{B}_1 (\lambda) & \mathcal{B}_2 (\lambda) \\
\mathcal{C}_1 (\lambda) & \mathcal{A}_2 (\lambda) & \mathcal{B}_3 (\lambda) \\
\mathcal{C}_2 (\lambda) & \mathcal{C}_3 (\lambda) & \mathcal{A}_3 (\lambda) 
\end{pmatrix} \; ,
\>
deliberately omitting the dependence on parameters $\mu_j \in \C$ in the RHS. In what follows we shall then use the notation 
$\mathcal{T}_{\alpha}^{\beta}$ to refer to the entry of \eqref{abcd} corresponding to the element $E_{\alpha, \beta} \in \mathrm{End}(\V_a)$.   
Next we define vectors
\[
\ket{\vec{\beta}_{i}} \coloneqq \bigotimes_{j=1}^L e_{\beta_{i,j}} \qquad \in \; \V_{\mathcal{Q}}
\]
completing, in this way, the ingredients required to reformulate \eqref{PF}. Then, using \eqref{rmat}-\eqref{abcd}, we can rewrite our partition function
with fixed boundary conditions in terms of entries of the monodromy matrix $\mathcal{T}$ as
\< \label{PFT}
Z_{\vec{\beta}_0}^{\vec{\beta}_{K+1}} (\vec{\alpha}_0 \mid \vec{\alpha}_{L+1}) = \bra{\vec{\beta}_{K+1}} \iPROD{1}{i}{K} \mathcal{T}(\lambda_i \mid \{ \mu_j \})_{\alpha_{i,0}}^{\alpha_{i,L+1}} \ket{\vec{\beta}_{0}} \; .
\>

The statistical weights associated to configurations of graphs $g_{i,j}$ are encoded in the $\mathcal{R}$-matrix \eqref{rmat}. 
Although they are still generic up to this point, integrability in the sense of Baxter requires the $\mathcal{R}$-matrix \eqref{rmat}
to satisfy the Yang-Baxter equation. 
The following discussion will then be restricted to integrable nineteen-vertex models and, in that case, we can consider
$\mathcal{R}(\lambda, \mu) = \mathcal{R}(\lambda - \mu)$ and use the convention 
\< \label{bw}
\mathcal{R}(\lambda) = \begin{pmatrix}
a(\lambda) & & & & & & & &  \\
 & b(\lambda) & & c(\lambda) & & & & &  \\
 & & d_{1,1}(\lambda) & & d_{1,2}(\lambda) & & d_{1,3}(\lambda) & &  \\
 & \bar{c}(\lambda) & & b(\lambda) & & & & &  \\
 & & d_{2,1}(\lambda) & & d_{2,2}(\lambda) & & d_{2,3}(\lambda) & &  \\
 & & & & & b(\lambda) & & c(\lambda) &  \\
 & & d_{3,1}(\lambda) & & d_{3,2}(\lambda) & & d_{3,3}(\lambda) & &  \\
 & & & & & \bar{c}(\lambda) & & b(\lambda) &  \\
 & & & & & & & & a(\lambda) 
\end{pmatrix} 
\>
in order to ease our presentation. In the next subsections we shall then discuss two distinct sets of statistical weights satisfying the Yang-Baxter equation.

\subsection{The IK and FZ models} \label{sec:IKFZ}

Strictly speaking, integrability in Statistical Mechanics is not a well defined concept as it is in Classical Mechanics 
\cite{Caux_2011}. Nevertheless, Baxter's concept of commuting transfer matrices \cite{Baxter_book} has played a major role in identifying two-dimensional lattice models whose physical properties can be computed exactly. Here we will be considering  nineteen-vertex models integrable in the sense of Baxter; and this requires the $\mathcal{R}$-matrix \eqref{bw} to satisfy the Yang-Baxter equation. More precisely, we will focus on statistical weights $a$, $b$, $c$, $\bar{c}$ and $d_{i,j}$
constrained by
\< \label{ybe}
&&\mathcal{R}_{12} (\lambda_1 - \lambda_2) \mathcal{R}_{13} (\lambda_1 - \lambda_3) \mathcal{R}_{23} (\lambda_2 - \lambda_3) = \nonumber \\
&&\qquad \qquad \qquad \mathcal{R}_{23} (\lambda_2 - \lambda_3) \mathcal{R}_{13} (\lambda_1 - \lambda_3) \mathcal{R}_{12} (\lambda_1 - \lambda_2) 
\>
in $\mathrm{End}(\V_1 \otimes \V_2 \otimes \V_3)$. In contrast to the six-vertex model, there are several solutions of \eqref{ybe} corresponding to nineteen-vertex models. In what follows we shall describe two of them, namely the Izergin-Korepin (IK) and the Fateev-Zamolodchikov (FZ) models.

The IK model originally appeared as the quantization of integrable structures associated to the Shabat-Mikhailov model \cite{Izergin_Korepin_1981}. The latter is a relativistic field theory in $1+1$ dimensions whose integrability, in the classical sense, is ensured by the existence of a Lax pair and a classical $r$-matrix. The $\mathcal{R}$-matrix of the IK model then arises as the \emph{quantization} of the aforementioned classical $r$-matrix.
On the other hand, the FZ model firstly appeared within the context of factorized scattering \cite{Fateev_Zamolodchikov_1980}. More precisely, the $\mathcal{R}$-matrix of the FZ model was originally obtained as the $\mathcal{S}$-matrix of a quantum field theory enjoying $\mathrm{C}$, $\mathrm{P}$, $\mathrm{T}$ and $\mathrm{U}(1)$ symmetries. Those symmetries are able to fix the $\mathcal{S}$-matrix up to a large extent, while the remaining part is then fixed by the Yang-Baxter equation.
 
The quantum group structure underlying the $\mathcal{R}$-matrices of the IK and FZ models was only later on unveiled in 
\cite{Jimbo_1986b}. They correspond respectively to the quantum affine Lie algebras $U_q [\widehat{A}_2^{(2)}]$ and $U_q [\widehat{B}_1^{(1)}]$; and this algebraic structure also allows the associated statistical weights to be presented in an unified manner. In this way, we have
 \begin{align} \label{IKFZa}
a(\lambda) &=  ( e^{2 \lambda} - \zeta) ( e^{2 \lambda} - q^2) & b(\lambda) &=  q ( e^{2 \lambda} - 1) ( e^{2 \lambda} - \zeta) \nonumber \\ 
c(\lambda) &=  ( 1 - q^2) ( e^{2 \lambda} - \zeta)&  \bar{c}(\lambda) &=  e^{2 \lambda} ( 1 - q^2) ( e^{2 \lambda} - \zeta)
\end{align}
and
\<
d_{\alpha, \beta} (\lambda) = \begin{cases}
q ( e^{2 \lambda} - 1) ( e^{2 \lambda} - \zeta) + e^{2 \lambda} ( q^2 - 1)(\zeta -1) \;\;\;\; \quad\qquad \alpha = \beta = \beta' \\
(e^{2 \lambda} - 1) \left[ ( e^{2 \lambda} - \zeta) + e^{2 \lambda} ( q^2 - 1) \right] \;\;\;\; \quad \qquad\qquad\; \alpha = \beta \neq \beta' \\
( q^2 - 1) \left[ \zeta (e^{2 \lambda} - 1)  q^{(\alpha - \beta)/2} - \delta_{\alpha \beta'} ( e^{2 \lambda} - \zeta)  \right] \qquad \;\;\;\;\; \quad \alpha < \beta \\
e^{2 \lambda}  ( q^2 - 1) \left[ (e^{2 \lambda} - 1)  q^{(\alpha - \beta)/2} - \delta_{\alpha \beta'} ( e^{2 \lambda} - \zeta)  \right] \qquad \;\;\; \quad \alpha > \beta
\end{cases} \nonumber \\
\>
with $\alpha' \coloneqq 4 - \alpha$. The parameter $\zeta$ is, in its turn, given by
\[ \label{IKFZb}
\zeta = \begin{cases}
q \qquad \quad \text{for FZ model} \\
- q^3 \qquad \text{for IK model} 
\end{cases} \\ .
\]

\subsection{Domain-wall boundaries} \label{sec:DWBC}

In this subsection we intend to specialize the partition function \eqref{PFT} to cases of interest in this work. For instance, our main goal here is to study the partition function \eqref{PFT} with $K=L$ and the particular boundary conditions characterized by
\< \label{dw}
\vec{\alpha}_0 = \vec{\beta}_{0} = (1 ,1, \dots, 1) \quad \text{and} \quad  \vec{\alpha}_{L+1} = \vec{\beta}_{L+1}   = (3 , 3, \dots, 3) \; .
\>
We shall then simply write $\mathcal{Z}(\lambda_1, \lambda_2 , \dots , \lambda_L)$ for \eqref{PFT} with boundary conditions
\eqref{dw}.
The variables $\lambda_j \in \C$ are usually referred to as \emph{spectral parameters}, but it is important to remark $\mathcal{Z}$ also depends on $L$ variables $\mu_j \in \C$ commonly referred to as \emph{inhomogeneity parameters}. Moreover, due to \eqref{IKFZa}-\eqref{IKFZb}, the partition function $\mathcal{Z}$ also depends on the quantum deformation parameter $q \eqqcolon e^{\gamma} \in \C$ and $\gamma$ will then be referred to as \emph{anisotropy parameter}.
In order to describe the boundary conditions \eqref{dw} in a more intuitive way, we have also presented a possible edge-colored graph $\mathcal{G}^{*}$ admitted by $\mathcal{Z}$ in \Figref{fig:Zfig}.

\begin{figure} \centering
\scalebox{1}{
\begin{tikzpicture}
\foreach \x in {0,...,5}{
    \foreach \y in {0,...,5}{
        \path (\x,\y) node[circle, fill=white, scale=0.3](v\x\y) {};
}}  

\draw [postaction=decorate,decoration={markings, mark=at position 0.6cm with {\arrow[black]{Triangle}}}, thick]  (v11) --(v01);
\draw [postaction=decorate,decoration={markings, mark=at position 0.6cm with {\arrow[black]{Triangle}}}, thick]  (v12) -- (v02);
\draw [postaction=decorate,decoration={markings, mark=at position 0.6cm with {\arrow[black]{Triangle}}}, thick]  (v13) -- (v03);
\draw [postaction=decorate,decoration={markings, mark=at position 0.6cm with {\arrow[black]{Triangle}}}, thick]  (v14) -- (v04);

\draw [postaction=decorate,decoration={markings, mark=at position 0.6cm with {\arrow[black]{Triangle}}}, thick]  (v10) -- (v11);
\draw [postaction=decorate,decoration={markings, mark=at position 0.6cm with {\arrow[black]{Triangle}}}, thick]  (v20) -- (v21);
\draw [postaction=decorate,decoration={markings, mark=at position 0.6cm with {\arrow[black]{Triangle}}}, thick]  (v30) -- (v31);
\draw [postaction=decorate,decoration={markings, mark=at position 0.6cm with {\arrow[black]{Triangle}}}, thick]  (v40) -- (v41);

\draw [postaction=decorate,decoration={markings, mark=at position 0.6cm with {\arrow[black]{Triangle}}}, thick]  (v41) -- (v51);
\draw [postaction=decorate,decoration={markings, mark=at position 0.6cm with {\arrow[black]{Triangle}}}, thick]  (v42) -- (v52);
\draw [postaction=decorate,decoration={markings, mark=at position 0.6cm with {\arrow[black]{Triangle}}}, thick]  (v43) -- (v53);
\draw [postaction=decorate,decoration={markings, mark=at position 0.6cm with {\arrow[black]{Triangle}}}, thick]  (v44) -- (v54);

\draw [postaction=decorate,decoration={markings, mark=at position 0.6cm with {\arrow[black]{Triangle}}}, thick]  (v15) -- (v14);
\draw [postaction=decorate,decoration={markings, mark=at position 0.6cm with {\arrow[black]{Triangle}}}, thick]  (v25) -- (v24);
\draw [postaction=decorate,decoration={markings, mark=at position 0.6cm with {\arrow[black]{Triangle}}}, thick]  (v35) -- (v34);
\draw [postaction=decorate,decoration={markings, mark=at position 0.6cm with {\arrow[black]{Triangle}}}, thick]  (v45) -- (v44);

\draw [postaction=decorate,decoration={markings, mark=at position 0.6cm with {\arrow[black]{Triangle}}}, thick]  (v11) -- (v12);
\draw [postaction=decorate,decoration={markings, mark=at position 0.6cm with {\arrow[black]{Triangle}}}, thick]  (v21) -- (v11);
\draw [postaction=decorate,decoration={markings, mark=at position 0.6cm with {\arrow[black]{Triangle}}}, thick]  (v21) -- (v22);
\draw [postaction=decorate,decoration={markings, mark=at position 0.6cm with {\arrow[black]{Triangle}}}, thick]  (v31) -- (v21);

\draw [postaction=decorate,decoration={markings, mark=at position 0.6cm with {\arrow[black]{Triangle}}}, thick]  (v31) -- (v32);
\draw [postaction=decorate,decoration={markings, mark=at position 0.6cm with {\arrow[black]{Triangle}}}, thick]  (v41) -- (v31);
\draw [postaction=decorate,decoration={markings, mark=at position 0.6cm with {\arrow[black]{Triangle}}}, thick]  (v42) -- (v41);

\draw [postaction=decorate,decoration={markings, mark=at position 0.6cm with {\arrow[black]{Triangle}}}, thick]  (v12) -- (v13);
\draw [postaction=decorate,decoration={markings, mark=at position 0.6cm with {\arrow[black]{Triangle}}}, thick]  (v13) -- (v14);
\draw [postaction=decorate,decoration={markings, mark=at position 0.6cm with {\arrow[black]{Triangle}}}, thick]  (v22) -- (v12);
\draw [postaction=decorate,decoration={markings, mark=at position 0.6cm with {\arrow[black]{Triangle}}}, thick]  (v23) -- (v13);

\draw [postaction=decorate,decoration={markings, mark=at position 0.6cm with {\arrow[black]{Triangle}}}, thick]  (v14) -- (v24);
\draw [postaction=decorate,decoration={markings, mark=at position 0.6cm with {\arrow[black]{Triangle}}}, thick]  (v24) -- (v34);
\draw [postaction=decorate,decoration={markings, mark=at position 0.6cm with {\arrow[black]{Triangle}}}, thick]  (v34) -- (v44);

\draw [postaction=decorate,decoration={markings, mark=at position 0.6cm with {\arrow[black]{Triangle}}}, thick]  (v24) -- (v23);
\draw [postaction=decorate,decoration={markings, mark=at position 0.6cm with {\arrow[black]{Triangle}}}, thick]  (v34) -- (v33);
\draw [postaction=decorate,decoration={markings, mark=at position 0.6cm with {\arrow[black]{Triangle}}}, thick]  (v44) -- (v43);
\draw [postaction=decorate,decoration={markings, mark=at position 0.6cm with {\arrow[black]{Triangle}}}, thick]  (v43) -- (v42);

\draw [postaction=decorate,decoration={markings, mark=at position 0.6cm with {\arrow[black]{Triangle}}}, thick]  (v33) -- (v43);
\draw [postaction=decorate,decoration={markings, mark=at position 0.6cm with {\arrow[black]{Triangle}}}, thick]  (v32) -- (v42);

\draw [dashed, thick]  (v23) -- (v33);
\draw [dashed, thick]  (v23) -- (v22);
\draw [dashed, thick]  (v22) -- (v32);
\draw [dashed, thick]  (v32) -- (v33);
\end{tikzpicture}}
\caption{Example of $\mathcal{G}^{*}$ in $\mathcal{Z}$ for $L=4$.}
\label{fig:Zfig}
\end{figure}
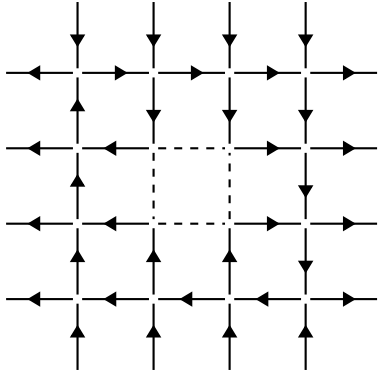

Although this work is mainly concerned with the partition function $\mathcal{Z}$, our analysis will reveal that this partition function is intimately related to another two partition functions also obtained as specializations of \eqref{PFT} with $K=L+1$. In this way, we write $\mathcal{F} (u_1, u_2 , \dots, u_{L-1} \mid v_1, v_2)$ 
for \eqref{PFT} with $K=L+1$, $\lambda_1 = v_1$, $\lambda_2 = v_2$, $\lambda_{i+2} = u_i$ ($i= 1,2, \dots, L-1$) and boundary vectors
\begin{align} \label{bvF}
\vec{\alpha}_0 &= (1, 1, \dots, 1, 1) & \vec{\beta}_0 & = (1, 1, \dots , 1) \nonumber \\
\vec{\alpha}_{L+1} &= (2, 2, 3 , 3 ,\dots, 3) & \vec{\beta}_{K+1} & = (3, 3, \dots , 3) \; .
\end{align}
Similarly, we also define $\bar{\mathcal{F}} (v_1, v_2 \mid u_1, u_2 , \dots, u_{L-1})$ as the specialization of \eqref{PFT} with $K=L+1$, $\lambda_i = u_i$ ($i=1, 2, \dots , L-1$), $\lambda_L = v_1$ and $\lambda_{L+1} = v_2$. As for the boundary vectors; $\vec{\alpha}_0$, $\vec{\beta}_0$ and $\vec{\beta}_{K+1}$ are identical to the ones in \eqref{bvF} while $\vec{\alpha}_{L+1} = (3,3, \dots, 3 ,2 ,2)$. A sample of graphs $\mathcal{G}^{*}$ giving rise to $\mathcal{F}$ and $\bar{\mathcal{F}}$ are then depicted in \Figref{fig:Ffig}.

\begin{figure} \centering
\scalebox{0.9}{
\begin{tikzpicture}
\foreach \x in {0,...,5}{
    \foreach \y in {0,...,6}{
        \path (\x,\y) node[circle, fill=white, scale=0.3](v\x\y) {};
}}  

\draw [postaction=decorate,decoration={markings, mark=at position 0.6cm with {\arrow[black]{Triangle}}}, thick]  (v11) --(v01);
\draw [postaction=decorate,decoration={markings, mark=at position 0.6cm with {\arrow[black]{Triangle}}}, thick]  (v12) -- (v02);
\draw [postaction=decorate,decoration={markings, mark=at position 0.6cm with {\arrow[black]{Triangle}}}, thick]  (v13) -- (v03);
\draw [postaction=decorate,decoration={markings, mark=at position 0.6cm with {\arrow[black]{Triangle}}}, thick]  (v14) -- (v04);
\draw [postaction=decorate,decoration={markings, mark=at position 0.6cm with {\arrow[black]{Triangle}}}, thick]  (v15) -- (v05);

\draw [postaction=decorate,decoration={markings, mark=at position 0.6cm with {\arrow[black]{Triangle}}}, thick]  (v10) -- (v11);
\draw [postaction=decorate,decoration={markings, mark=at position 0.6cm with {\arrow[black]{Triangle}}}, thick]  (v20) -- (v21);
\draw [postaction=decorate,decoration={markings, mark=at position 0.6cm with {\arrow[black]{Triangle}}}, thick]  (v30) -- (v31);
\draw [postaction=decorate,decoration={markings, mark=at position 0.6cm with {\arrow[black]{Triangle}}}, thick]  (v40) -- (v41);

\draw [dashed, thick]  (v41) -- (v51);
\draw [dashed, thick]  (v42) -- (v52);
\draw [postaction=decorate,decoration={markings, mark=at position 0.6cm with {\arrow[black]{Triangle}}}, thick]  (v43) -- (v53);
\draw [postaction=decorate,decoration={markings, mark=at position 0.6cm with {\arrow[black]{Triangle}}}, thick]  (v44) -- (v54);
\draw [postaction=decorate,decoration={markings, mark=at position 0.6cm with {\arrow[black]{Triangle}}}, thick]  (v45) -- (v55);

\draw [postaction=decorate,decoration={markings, mark=at position 0.6cm with {\arrow[black]{Triangle}}}, thick]  (v16) -- (v15);
\draw [postaction=decorate,decoration={markings, mark=at position 0.6cm with {\arrow[black]{Triangle}}}, thick]  (v26) -- (v25);
\draw [postaction=decorate,decoration={markings, mark=at position 0.6cm with {\arrow[black]{Triangle}}}, thick]  (v36) -- (v35);
\draw [postaction=decorate,decoration={markings, mark=at position 0.6cm with {\arrow[black]{Triangle}}}, thick]  (v46) -- (v45);

\draw [postaction=decorate,decoration={markings, mark=at position 0.6cm with {\arrow[black]{Triangle}}}, thick]  (v11) -- (v12);
\draw [postaction=decorate,decoration={markings, mark=at position 0.6cm with {\arrow[black]{Triangle}}}, thick]  (v12) -- (v13);
\draw [postaction=decorate,decoration={markings, mark=at position 0.6cm with {\arrow[black]{Triangle}}}, thick]  (v13) -- (v14);
\draw [postaction=decorate,decoration={markings, mark=at position 0.6cm with {\arrow[black]{Triangle}}}, thick]  (v14) -- (v15);

\draw [postaction=decorate,decoration={markings, mark=at position 0.6cm with {\arrow[black]{Triangle}}}, thick]  (v21) -- (v22);
\draw [postaction=decorate,decoration={markings, mark=at position 0.6cm with {\arrow[black]{Triangle}}}, thick]  (v22) -- (v23);
\draw [dashed, thick]  (v23) -- (v24);
\draw [postaction=decorate,decoration={markings, mark=at position 0.6cm with {\arrow[black]{Triangle}}}, thick]  (v25) -- (v24);

\draw [dashed, thick]  (v31) -- (v32);
\draw [postaction=decorate,decoration={markings, mark=at position 0.6cm with {\arrow[black]{Triangle}}}, thick]  (v33) -- (v32);
\draw [dashed, thick]  (v33) -- (v34);
\draw [postaction=decorate,decoration={markings, mark=at position 0.6cm with {\arrow[black]{Triangle}}}, thick]  (v35) -- (v34);

\draw [postaction=decorate,decoration={markings, mark=at position 0.6cm with {\arrow[black]{Triangle}}}, thick]  (v41) -- (v42);
\draw [postaction=decorate,decoration={markings, mark=at position 0.6cm with {\arrow[black]{Triangle}}}, thick]  (v42) -- (v43);
\draw [postaction=decorate,decoration={markings, mark=at position 0.6cm with {\arrow[black]{Triangle}}}, thick]  (v44) -- (v43);
\draw [postaction=decorate,decoration={markings, mark=at position 0.6cm with {\arrow[black]{Triangle}}}, thick]  (v45) -- (v44);

\draw [postaction=decorate,decoration={markings, mark=at position 0.6cm with {\arrow[black]{Triangle}}}, thick]  (v21) -- (v11);
\draw [postaction=decorate,decoration={markings, mark=at position 0.6cm with {\arrow[black]{Triangle}}}, thick]  (v31) -- (v21);
\draw [dashed, thick]  (v31) -- (v41);

\draw [postaction=decorate,decoration={markings, mark=at position 0.6cm with {\arrow[black]{Triangle}}}, thick]  (v22) -- (v12);
\draw [postaction=decorate,decoration={markings, mark=at position 0.6cm with {\arrow[black]{Triangle}}}, thick]  (v32) -- (v22);
\draw [dashed, thick]  (v32) -- (v42);

\draw [postaction=decorate,decoration={markings, mark=at position 0.6cm with {\arrow[black]{Triangle}}}, thick]  (v23) -- (v13);
\draw [postaction=decorate,decoration={markings, mark=at position 0.6cm with {\arrow[black]{Triangle}}}, thick]  (v43) -- (v33);
\draw [dashed, thick]  (v33) -- (v23);

\draw [postaction=decorate,decoration={markings, mark=at position 0.6cm with {\arrow[black]{Triangle}}}, thick]  (v24) -- (v14);
\draw [postaction=decorate,decoration={markings, mark=at position 0.6cm with {\arrow[black]{Triangle}}}, thick]  (v34) -- (v44);
\draw [dashed, thick]  (v24) -- (v34);

\draw [postaction=decorate,decoration={markings, mark=at position 0.6cm with {\arrow[black]{Triangle}}}, thick]  (v15) -- (v25);
\draw [postaction=decorate,decoration={markings, mark=at position 0.6cm with {\arrow[black]{Triangle}}}, thick]  (v25) -- (v35);
\draw [postaction=decorate,decoration={markings, mark=at position 0.6cm with {\arrow[black]{Triangle}}}, thick]  (v35) -- (v45);

\begin{scope}[xshift=8cm]
\foreach \x in {0,...,5}{
    \foreach \y in {0,...,6}{
        \path (\x,\y) node[circle, fill=white, scale=0.3](v\x\y) {};
}}  

\draw [postaction=decorate,decoration={markings, mark=at position 0.6cm with {\arrow[black]{Triangle}}}, thick]  (v11) --(v01);
\draw [postaction=decorate,decoration={markings, mark=at position 0.6cm with {\arrow[black]{Triangle}}}, thick]  (v12) -- (v02);
\draw [postaction=decorate,decoration={markings, mark=at position 0.6cm with {\arrow[black]{Triangle}}}, thick]  (v13) -- (v03);
\draw [postaction=decorate,decoration={markings, mark=at position 0.6cm with {\arrow[black]{Triangle}}}, thick]  (v14) -- (v04);
\draw [postaction=decorate,decoration={markings, mark=at position 0.6cm with {\arrow[black]{Triangle}}}, thick]  (v15) -- (v05);

\draw [postaction=decorate,decoration={markings, mark=at position 0.6cm with {\arrow[black]{Triangle}}}, thick]  (v10) -- (v11);
\draw [postaction=decorate,decoration={markings, mark=at position 0.6cm with {\arrow[black]{Triangle}}}, thick]  (v20) -- (v21);
\draw [postaction=decorate,decoration={markings, mark=at position 0.6cm with {\arrow[black]{Triangle}}}, thick]  (v30) -- (v31);
\draw [postaction=decorate,decoration={markings, mark=at position 0.6cm with {\arrow[black]{Triangle}}}, thick]  (v40) -- (v41);

\draw [dashed, thick]  (v45) -- (v55);
\draw [dashed, thick]  (v44) -- (v54);
\draw [postaction=decorate,decoration={markings, mark=at position 0.6cm with {\arrow[black]{Triangle}}}, thick]  (v41) -- (v51);
\draw [postaction=decorate,decoration={markings, mark=at position 0.6cm with {\arrow[black]{Triangle}}}, thick]  (v42) -- (v52);
\draw [postaction=decorate,decoration={markings, mark=at position 0.6cm with {\arrow[black]{Triangle}}}, thick]  (v43) -- (v53);

\draw [postaction=decorate,decoration={markings, mark=at position 0.6cm with {\arrow[black]{Triangle}}}, thick]  (v16) -- (v15);
\draw [postaction=decorate,decoration={markings, mark=at position 0.6cm with {\arrow[black]{Triangle}}}, thick]  (v26) -- (v25);
\draw [postaction=decorate,decoration={markings, mark=at position 0.6cm with {\arrow[black]{Triangle}}}, thick]  (v36) -- (v35);
\draw [postaction=decorate,decoration={markings, mark=at position 0.6cm with {\arrow[black]{Triangle}}}, thick]  (v46) -- (v45);

\draw [postaction=decorate,decoration={markings, mark=at position 0.6cm with {\arrow[black]{Triangle}}}, thick]  (v11) -- (v12);
\draw [postaction=decorate,decoration={markings, mark=at position 0.6cm with {\arrow[black]{Triangle}}}, thick]  (v13) -- (v12);
\draw [postaction=decorate,decoration={markings, mark=at position 0.6cm with {\arrow[black]{Triangle}}}, thick]  (v14) -- (v13);
\draw [postaction=decorate,decoration={markings, mark=at position 0.6cm with {\arrow[black]{Triangle}}}, thick]  (v15) -- (v14);

\draw [postaction=decorate,decoration={markings, mark=at position 0.6cm with {\arrow[black]{Triangle}}}, thick]  (v21) -- (v22);
\draw [postaction=decorate,decoration={markings, mark=at position 0.6cm with {\arrow[black]{Triangle}}}, thick]  (v22) -- (v23);
\draw [postaction=decorate,decoration={markings, mark=at position 0.6cm with {\arrow[black]{Triangle}}}, thick]  (v25) -- (v24);
\draw [dashed, thick]  (v23) -- (v24);

\draw [postaction=decorate,decoration={markings, mark=at position 0.6cm with {\arrow[black]{Triangle}}}, thick]  (v32) -- (v31);
\draw [postaction=decorate,decoration={markings, mark=at position 0.6cm with {\arrow[black]{Triangle}}}, thick]  (v33) -- (v32);
\draw [dashed, thick]  (v33) -- (v34);
\draw [dashed, thick]  (v34) -- (v35);

\draw [postaction=decorate,decoration={markings, mark=at position 0.6cm with {\arrow[black]{Triangle}}}, thick]  (v41) -- (v42);
\draw [postaction=decorate,decoration={markings, mark=at position 0.6cm with {\arrow[black]{Triangle}}}, thick]  (v42) -- (v43);
\draw [postaction=decorate,decoration={markings, mark=at position 0.6cm with {\arrow[black]{Triangle}}}, thick]  (v44) -- (v43);
\draw [postaction=decorate,decoration={markings, mark=at position 0.6cm with {\arrow[black]{Triangle}}}, thick]  (v45) -- (v44);

\draw [postaction=decorate,decoration={markings, mark=at position 0.6cm with {\arrow[black]{Triangle}}}, thick]  (v21) -- (v11);
\draw [postaction=decorate,decoration={markings, mark=at position 0.6cm with {\arrow[black]{Triangle}}}, thick]  (v31) -- (v21);
\draw [postaction=decorate,decoration={markings, mark=at position 0.6cm with {\arrow[black]{Triangle}}}, thick]  (v31) -- (v41);

\draw [postaction=decorate,decoration={markings, mark=at position 0.6cm with {\arrow[black]{Triangle}}}, thick]  (v12) -- (v22);
\draw [postaction=decorate,decoration={markings, mark=at position 0.6cm with {\arrow[black]{Triangle}}}, thick]  (v22) -- (v32);
\draw [postaction=decorate,decoration={markings, mark=at position 0.6cm with {\arrow[black]{Triangle}}}, thick]  (v32) -- (v42);

\draw [postaction=decorate,decoration={markings, mark=at position 0.6cm with {\arrow[black]{Triangle}}}, thick]  (v23) -- (v13);
\draw [dashed, thick]  (v23) -- (v33);
\draw [postaction=decorate,decoration={markings, mark=at position 0.6cm with {\arrow[black]{Triangle}}}, thick]  (v43) -- (v33);

\draw [postaction=decorate,decoration={markings, mark=at position 0.6cm with {\arrow[black]{Triangle}}}, thick]  (v24) -- (v14);
\draw [dashed, thick]  (v24) -- (v34);
\draw [dashed, thick]  (v34) -- (v44);

\draw [postaction=decorate,decoration={markings, mark=at position 0.6cm with {\arrow[black]{Triangle}}}, thick]  (v25) -- (v15);
\draw [postaction=decorate,decoration={markings, mark=at position 0.6cm with {\arrow[black]{Triangle}}}, thick]  (v35) -- (v25);
\draw [dashed, thick]  (v35) -- (v45);

\end{scope}
\end{tikzpicture}}
\caption{Graphs $\mathcal{G}^{*}$ for $L=4$ associated to $\mathcal{F}$ (left) and $\bar{\mathcal{F}}$ (right). }
\label{fig:Ffig}
\end{figure}
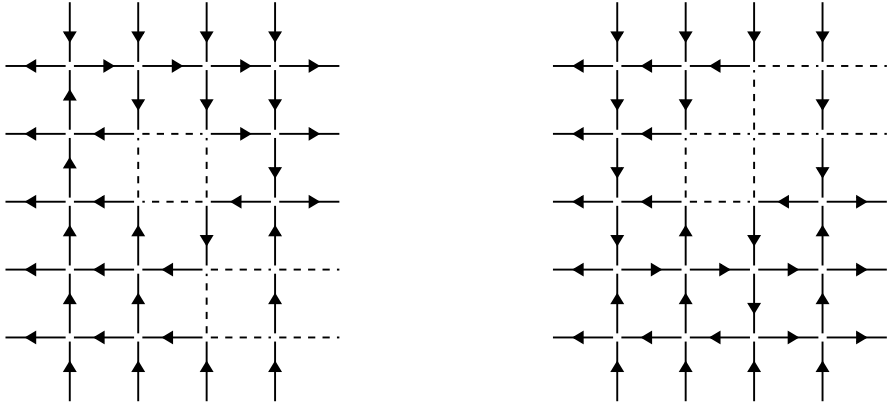

For the sake of clarity, it is also useful to having $\mathcal{Z}$, $\mathcal{F}$ and $\bar{\mathcal{F}}$ expressed directly in terms of entries of the monodromy matrix \eqref{abcd}. As for that, we introduce the simplified conventions 
\[
\mathcal{A}_1 (\lambda) \eqqcolon \mathcal{A} (\lambda) \; , \quad  \mathcal{B}_1 (\lambda) \eqqcolon \mathcal{B} (\lambda) \quad \text{and} \quad \mathcal{B}_2 (\lambda) \eqqcolon \mathcal{E} (\lambda) \; ;
 \]
as well as vectors $\ket{0} \coloneqq e_1^{\otimes L}$ and $\ket{\bar{0}} \coloneqq e_3^{\otimes L}$ in $\mathrm{End}(\V_{\mathcal{Q}})$. 
The aforementioned partition functions are then given by the following expected values:
\< \label{ZFF}
\mathcal{Z}(\lambda_1, \lambda_2, \dots, \lambda_L) &=& \bra{\bar{0}} \mathcal{E}(\lambda_{L}) \mathcal{E}(\lambda_{L-1}) \dots \mathcal{E}(\lambda_{1}) \ket{0} \nonumber \\
\mathcal{F}(u_1, u_2, \dots, u_{L-1} \mid v_1, v_2) &=& \bra{\bar{0}} \mathcal{E}(u_{L-1}) \mathcal{E}(u_{L-2}) \dots \mathcal{E}(u_{1}) \mathcal{B}(v_2) \mathcal{B}(v_1) \ket{0} \nonumber \\
\bar{\mathcal{F}}( v_1, v_2 \mid u_1, u_2, \dots, u_{L-1}) &=& \bra{\bar{0}} \mathcal{B}(v_2) \mathcal{B}(v_1) \mathcal{E}(u_{L-1}) \mathcal{E}(u_{L-2}) \dots \mathcal{E}(u_{1}) \ket{0} \; . \nonumber \\
\>
It is also important to remark here that having $\mathcal{Z}$, $\mathcal{F}$ and $\bar{\mathcal{F}}$ expressed as \eqref{ZFF} will play a major role in our forthcoming analysis.

\subsection{Symmetries} \label{sec:SYM}

In the \Appref{app:SUB} we have collected commutation relations satisfied by the operators $\mathcal{A}$, $\mathcal{B}$ and $\mathcal{E}$ built from the 
$\mathcal{R}$-matrix \eqref{bw} for the IK and FZ models. Among such commutation relations we have
\[ \label{EE}
\mathcal{E} (\lambda_i) \mathcal{E} (\lambda_j) = \mathcal{E} (\lambda_j) \mathcal{E} (\lambda_i)
\]
which has immediate consequences for $\mathcal{Z}$, $\mathcal{F}$ and $\bar{\mathcal{F}}$.
In order to examine such consequences, let us write $\mathfrak{S}_n$ for the symmetric group of degree $n$ on $\{\lambda_1, \lambda_2 , \dots , \lambda_n  \}$. Also, let $\pi_{i,j} \in \mathfrak{S}_n$ be a $2$-cycle acting as permutation of variables $\lambda_i$ and $\lambda_j$. 
Therefore, due to the commutation relation \eqref{EE}, we immediately obtain $\pi_{i,j} (\mathcal{Z}) = \mathcal{Z}$ which allows us to infer
$\mathcal{Z}(\lambda_1, \lambda_2, \dots , \lambda_L) \in \C[\lambda_1^{\pm 1} , \lambda_2^{\pm 1}, \dots , \lambda_L^{\pm 1}]^{\mathfrak{S}_L}$. 
In other words, the partition function $\mathcal{Z}$  is a symmetric function on all arguments $\lambda_j$.

On the other hand, according to formulae \eqref{ZFF}, the partition functions $\mathcal{F}$ and $\bar{\mathcal{F}}$ also involve the operator $\mathcal{B}$  whose commutation relations with $\mathcal{E}$ are sufficiently more involving. Hence, $\mathcal{F}$ and $\bar{\mathcal{F}}$ are not symmetric with respect to all of their arguments. However, they are clearly partially symmetric and we can also infer
\<
\mathcal{F}(\lambda_1, \lambda_2, \dots, \lambda_{L-1} \mid v_1, v_2), \bar{\mathcal{F}}( v_1, v_2 \mid \lambda_1, \lambda_2, \dots, \lambda_{L-1}) \in 
\C[\lambda_1^{\pm 1} , \lambda_2^{\pm 1}, \dots , \lambda_{L-1}^{\pm 1}]^{\mathfrak{S}_{L-1}} [v_1^{\pm 1}, v_2^{\pm 1}] \; . \nonumber \\
\>

\subsection{Polynomial structure} \label{sec:POL}

In the previous subsection we have analyzed the behavior of the functions $\mathcal{Z}$, $\mathcal{F}$ and $\bar{\mathcal{F}}$ with respect to the action of the symmetric group $\mathfrak{S}_n$. In this way, we were able to infer the kind of function space the functions of interest belongs to. Here we intend to further that analysis by examining in more details the dependence of $\mathcal{Z}$, $\mathcal{F}$ and $\bar{\mathcal{F}}$ on the spectral parameters. For that, it is convenient to introduce 
variables $x \coloneqq e^{2 \lambda}$, $x_i \coloneqq e^{2 \lambda_i}$, $y_1 \coloneqq e^{2 v_1}$ and $y_2 \coloneqq e^{2 v_2}$.

Now, turning our attention to the statistical weights \eqref{IKFZa}-\eqref{IKFZb} associated to the IK and FZ models, we can readily see they are polynomials in $x$ of degree two; except for $d_{1,2}$, $d_{1,3}$, $d_{2,3}$ and $c$. In their turn, the latter are polynomials in $x$ of degree one. Therefore, we can conclude the functions $\mathcal{Z}$, $\mathcal{F}$ and $\bar{\mathcal{F}}$ are polynomials in the appropriate variables and in what follows we intend to determine their polynomial degree.

In order to proceed, it is then useful to identify the graphs $g_{i,j}$ with their respective statistical weight \eqref{IKFZa}-\eqref{IKFZb}. For that we write
$w(g_{i,j}) \in \{ a, b, c, \bar{c}, d_{i,j}  \}$ and make this identification explicit in Figures \ref{fig:A}-\ref{fig:D}.

\begin{figure} \centering
\scalebox{0.85}{
\begin{tikzpicture}
\begin{scope}[xshift=2.5cm]
\path (0,0) node[circle, fill=black, scale=0.3](C) {}
      (-1,0) node[circle, fill=black, scale=0.3](W) {}
      (1,0) node[circle, fill=black, scale=0.3](E) {}
      (0,-1) node[circle, fill=black, scale=0.3](S) {}
      (0,1) node[circle, fill=black, scale=0.3](N) {};
\draw [postaction=decorate,decoration={markings, mark=at position 0.5cm with {\arrow[black]{Triangle}}}, thick]  (W) -- (C);
\draw [postaction=decorate,decoration={markings, mark=at position 0.5cm with {\arrow[black]{Triangle}}}, thick]  (C) -- (E);
\draw [postaction=decorate,decoration={markings, mark=at position 0.5cm with {\arrow[black]{Triangle}}}, thick]  (N) -- (C);
\draw [postaction=decorate,decoration={markings, mark=at position 0.5cm with {\arrow[black]{Triangle}}}, thick]  (C) -- (S);
\end{scope}

\path (0,0) node[circle, fill=black, scale=0.3](C) {}
      (-1,0) node[circle, fill=black, scale=0.3](W) {}
      (1,0) node[circle, fill=black, scale=0.3](E) {}
      (0,-1) node[circle, fill=black, scale=0.3](S) {}
      (0,1) node[circle, fill=black, scale=0.3](N) {};
\draw [postaction=decorate,decoration={markings, mark=at position 0.5cm with {\arrow[black]{Triangle}}}, thick]  (C) -- (W);
\draw [postaction=decorate,decoration={markings, mark=at position 0.5cm with {\arrow[black]{Triangle}}}, thick]  (E) -- (C);
\draw [postaction=decorate,decoration={markings, mark=at position 0.5cm with {\arrow[black]{Triangle}}}, thick]  (C) -- (N);
\draw [postaction=decorate,decoration={markings, mark=at position 0.5cm with {\arrow[black]{Triangle}}}, thick]  (S) -- (C);

\end{tikzpicture}}
\caption{Graphs $g_{i,j}$ with $w(g_{i,j}) = a$.}
\label{fig:A}
\end{figure}
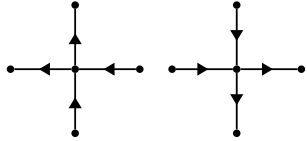

\begin{figure} \centering
\scalebox{0.85}{
\begin{tikzpicture}
\begin{scope}[xshift=7.5cm]
\path (0,0) node[circle, fill=black, scale=0.3](C) {}
      (-1,0) node[circle, fill=black, scale=0.3](W) {}
      (1,0) node[circle, fill=black, scale=0.3](E) {}
      (0,-1) node[circle, fill=black, scale=0.3](S) {}
      (0,1) node[circle, fill=black, scale=0.3](N) {};
\draw [postaction=decorate,decoration={markings, mark=at position 0.5cm with {\arrow[black]{Triangle}}}, thick]  (W) -- (C);
\draw [postaction=decorate,decoration={markings, mark=at position 0.5cm with {\arrow[black]{Triangle}}}, thick]  (C) -- (E);
\draw [dashed, thick]  (N) -- (C);
\draw [dashed, thick]  (C) -- (S);
\end{scope}

\path (0,0) node[circle, fill=black, scale=0.3](C) {}
      (-1,0) node[circle, fill=black, scale=0.3](W) {}
      (1,0) node[circle, fill=black, scale=0.3](E) {}
      (0,-1) node[circle, fill=black, scale=0.3](S) {}
      (0,1) node[circle, fill=black, scale=0.3](N) {};
\draw [postaction=decorate,decoration={markings, mark=at position 0.5cm with {\arrow[black]{Triangle}}}, thick]  (C) -- (W);
\draw [postaction=decorate,decoration={markings, mark=at position 0.5cm with {\arrow[black]{Triangle}}}, thick]  (E) -- (C);
\draw [dashed, thick]  (N) -- (C);
\draw [dashed, thick]  (C) -- (S);

\begin{scope}[xshift=2.5cm]
\path (0,0) node[circle, fill=black, scale=0.3](C) {}
      (-1,0) node[circle, fill=black, scale=0.3](W) {}
      (1,0) node[circle, fill=black, scale=0.3](E) {}
      (0,-1) node[circle, fill=black, scale=0.3](S) {}
      (0,1) node[circle, fill=black, scale=0.3](N) {};
\draw [postaction=decorate,decoration={markings, mark=at position 0.5cm with {\arrow[black]{Triangle}}}, thick]  (C) -- (N);
\draw [postaction=decorate,decoration={markings, mark=at position 0.5cm with {\arrow[black]{Triangle}}}, thick]  (S) -- (C);
\draw [dashed, thick]  (W) -- (C);
\draw [dashed, thick]  (C) -- (E);
\end{scope}

\begin{scope}[xshift=5cm]
\path (0,0) node[circle, fill=black, scale=0.3](C) {}
      (-1,0) node[circle, fill=black, scale=0.3](W) {}
      (1,0) node[circle, fill=black, scale=0.3](E) {}
      (0,-1) node[circle, fill=black, scale=0.3](S) {}
      (0,1) node[circle, fill=black, scale=0.3](N) {};
\draw [postaction=decorate,decoration={markings, mark=at position 0.5cm with {\arrow[black]{Triangle}}}, thick]  (N) -- (C);
\draw [postaction=decorate,decoration={markings, mark=at position 0.5cm with {\arrow[black]{Triangle}}}, thick]  (C) -- (S);
\draw [dashed, thick]  (W) -- (C);
\draw [dashed, thick]  (C) -- (E);
\end{scope}

\end{tikzpicture}}
\caption{Graphs $g_{i,j}$ with $w(g_{i,j}) = b$.}
\label{fig:B}
\end{figure}

\begin{figure} \centering
\scalebox{0.85}{
\begin{tikzpicture}

\path (0,0) node[circle, fill=black, scale=0.3](C) {}
      (-1,0) node[circle, fill=black, scale=0.3](W) {}
      (1,0) node[circle, fill=black, scale=0.3](E) {}
      (0,-1) node[circle, fill=black, scale=0.3](S) {}
      (0,1) node[circle, fill=black, scale=0.3](N) {};
\draw [postaction=decorate,decoration={markings, mark=at position 0.5cm with {\arrow[black]{Triangle}}}, thick]  (N) -- (C);
\draw [postaction=decorate,decoration={markings, mark=at position 0.5cm with {\arrow[black]{Triangle}}}, thick]  (C) -- (E);
\draw [dashed, thick]  (W) -- (C);
\draw [dashed, thick]  (S) -- (C);

\begin{scope}[xshift=2.5cm]
\path (0,0) node[circle, fill=black, scale=0.3](C) {}
      (-1,0) node[circle, fill=black, scale=0.3](W) {}
      (1,0) node[circle, fill=black, scale=0.3](E) {}
      (0,-1) node[circle, fill=black, scale=0.3](S) {}
      (0,1) node[circle, fill=black, scale=0.3](N) {};
\draw [postaction=decorate,decoration={markings, mark=at position 0.5cm with {\arrow[black]{Triangle}}}, thick]  (C) -- (W);
\draw [postaction=decorate,decoration={markings, mark=at position 0.5cm with {\arrow[black]{Triangle}}}, thick]  (S) -- (C);
\draw [dashed, thick]  (N) -- (C);
\draw [dashed, thick]  (C) -- (E);
\end{scope}

\begin{scope}[xshift=7.5cm]
\path (0,0) node[circle, fill=black, scale=0.3](C) {}
      (-1,0) node[circle, fill=black, scale=0.3](W) {}
      (1,0) node[circle, fill=black, scale=0.3](E) {}
      (0,-1) node[circle, fill=black, scale=0.3](S) {}
      (0,1) node[circle, fill=black, scale=0.3](N) {};
\draw [postaction=decorate,decoration={markings, mark=at position 0.5cm with {\arrow[black]{Triangle}}}, thick]  (C) -- (N);
\draw [postaction=decorate,decoration={markings, mark=at position 0.5cm with {\arrow[black]{Triangle}}}, thick]  (E) -- (C);
\draw [dashed, thick]  (W) -- (C);
\draw [dashed, thick]  (S) -- (C);
\end{scope}

\begin{scope}[xshift=10cm]
\path (0,0) node[circle, fill=black, scale=0.3](C) {}
      (-1,0) node[circle, fill=black, scale=0.3](W) {}
      (1,0) node[circle, fill=black, scale=0.3](E) {}
      (0,-1) node[circle, fill=black, scale=0.3](S) {}
      (0,1) node[circle, fill=black, scale=0.3](N) {};
\draw [postaction=decorate,decoration={markings, mark=at position 0.5cm with {\arrow[black]{Triangle}}}, thick]  (W) -- (C);
\draw [postaction=decorate,decoration={markings, mark=at position 0.5cm with {\arrow[black]{Triangle}}}, thick]  (C) -- (S);
\draw [dashed, thick]  (N) -- (C);
\draw [dashed, thick]  (C) -- (E);
\end{scope}

\end{tikzpicture}}
\caption{Graphs $g_{i,j}$ with $w(g_{i,j})=c$ (most left) and $w(g_{i,j})=\bar{c}$ (most right).}
\label{fig:C}
\end{figure}
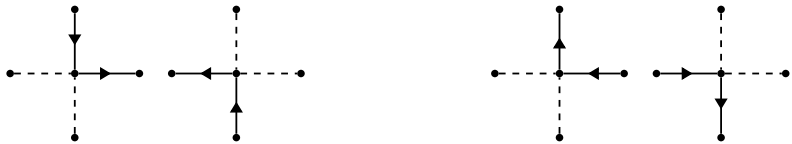

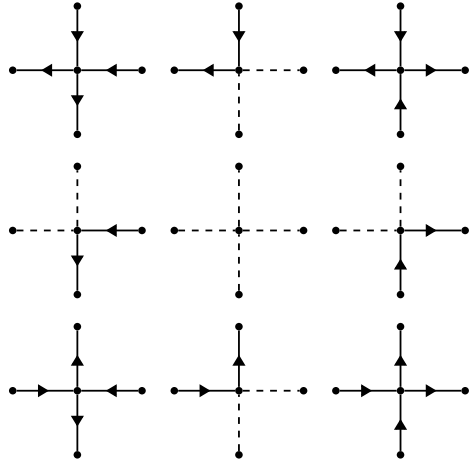
\begin{figure} \centering
\scalebox{0.85}{
\begin{tikzpicture}

\path (0,0) node[circle, fill=black, scale=0.3](C) {}
      (-1,0) node[circle, fill=black, scale=0.3](W) {}
      (1,0) node[circle, fill=black, scale=0.3](E) {}
      (0,-1) node[circle, fill=black, scale=0.3](S) {}
      (0,1) node[circle, fill=black, scale=0.3](N) {};
\draw [postaction=decorate,decoration={markings, mark=at position 0.5cm with {\arrow[black]{Triangle}}}, thick]  (C) -- (W);
\draw [postaction=decorate,decoration={markings, mark=at position 0.5cm with {\arrow[black]{Triangle}}}, thick]  (E) -- (C);
\draw [postaction=decorate,decoration={markings, mark=at position 0.5cm with {\arrow[black]{Triangle}}}, thick]  (N) -- (C);
\draw [postaction=decorate,decoration={markings, mark=at position 0.5cm with {\arrow[black]{Triangle}}}, thick]  (C) -- (S);

\begin{scope}[xshift=2.5cm]
\path (0,0) node[circle, fill=black, scale=0.3](C) {}
      (-1,0) node[circle, fill=black, scale=0.3](W) {}
      (1,0) node[circle, fill=black, scale=0.3](E) {}
      (0,-1) node[circle, fill=black, scale=0.3](S) {}
      (0,1) node[circle, fill=black, scale=0.3](N) {};
\draw [postaction=decorate,decoration={markings, mark=at position 0.5cm with {\arrow[black]{Triangle}}}, thick]  (C) -- (W);
\draw [postaction=decorate,decoration={markings, mark=at position 0.5cm with {\arrow[black]{Triangle}}}, thick]  (N) -- (C);
\draw [dashed, thick]  (C) -- (E);
\draw [dashed, thick]  (S) -- (C);
\end{scope}

\begin{scope}[xshift=5cm]
\path (0,0) node[circle, fill=black, scale=0.3](C) {}
      (-1,0) node[circle, fill=black, scale=0.3](W) {}
      (1,0) node[circle, fill=black, scale=0.3](E) {}
      (0,-1) node[circle, fill=black, scale=0.3](S) {}
      (0,1) node[circle, fill=black, scale=0.3](N) {};
\draw [postaction=decorate,decoration={markings, mark=at position 0.5cm with {\arrow[black]{Triangle}}}, thick]  (C) -- (W);
\draw [postaction=decorate,decoration={markings, mark=at position 0.5cm with {\arrow[black]{Triangle}}}, thick]  (C) -- (E);
\draw [postaction=decorate,decoration={markings, mark=at position 0.5cm with {\arrow[black]{Triangle}}}, thick]  (N) -- (C);
\draw [postaction=decorate,decoration={markings, mark=at position 0.5cm with {\arrow[black]{Triangle}}}, thick]  (S) -- (C);
\end{scope}

\begin{scope}[yshift=-2.5cm]
\path (0,0) node[circle, fill=black, scale=0.3](C) {}
      (-1,0) node[circle, fill=black, scale=0.3](W) {}
      (1,0) node[circle, fill=black, scale=0.3](E) {}
      (0,-1) node[circle, fill=black, scale=0.3](S) {}
      (0,1) node[circle, fill=black, scale=0.3](N) {};
\draw [postaction=decorate,decoration={markings, mark=at position 0.5cm with {\arrow[black]{Triangle}}}, thick]  (C) -- (S);
\draw [postaction=decorate,decoration={markings, mark=at position 0.5cm with {\arrow[black]{Triangle}}}, thick]  (E) -- (C);
\draw [dashed, thick]  (W) -- (C);
\draw [dashed, thick]  (C) -- (N);
\end{scope}

\begin{scope}[xshift=2.5cm, yshift=-2.5cm]
\path (0,0) node[circle, fill=black, scale=0.3](C) {}
      (-1,0) node[circle, fill=black, scale=0.3](W) {}
      (1,0) node[circle, fill=black, scale=0.3](E) {}
      (0,-1) node[circle, fill=black, scale=0.3](S) {}
      (0,1) node[circle, fill=black, scale=0.3](N) {};
\draw [dashed, thick]  (S) -- (C);
\draw [dashed, thick]  (C) -- (E);
\draw [dashed, thick]  (W) -- (C);
\draw [dashed, thick]  (C) -- (N);
\end{scope}

\begin{scope}[xshift=5cm, yshift=-2.5cm]
\path (0,0) node[circle, fill=black, scale=0.3](C) {}
      (-1,0) node[circle, fill=black, scale=0.3](W) {}
      (1,0) node[circle, fill=black, scale=0.3](E) {}
      (0,-1) node[circle, fill=black, scale=0.3](S) {}
      (0,1) node[circle, fill=black, scale=0.3](N) {};
\draw [postaction=decorate,decoration={markings, mark=at position 0.5cm with {\arrow[black]{Triangle}}}, thick]  (S) -- (C);
\draw [postaction=decorate,decoration={markings, mark=at position 0.5cm with {\arrow[black]{Triangle}}}, thick]  (C) -- (E);
\draw [dashed, thick]  (W) -- (C);
\draw [dashed, thick]  (C) -- (N);
\end{scope}

\begin{scope}[yshift=-5cm]
\path (0,0) node[circle, fill=black, scale=0.3](C) {}
      (-1,0) node[circle, fill=black, scale=0.3](W) {}
      (1,0) node[circle, fill=black, scale=0.3](E) {}
      (0,-1) node[circle, fill=black, scale=0.3](S) {}
      (0,1) node[circle, fill=black, scale=0.3](N) {};
\draw [postaction=decorate,decoration={markings, mark=at position 0.5cm with {\arrow[black]{Triangle}}}, thick]  (W) -- (C);
\draw [postaction=decorate,decoration={markings, mark=at position 0.5cm with {\arrow[black]{Triangle}}}, thick]  (E) -- (C);
\draw [postaction=decorate,decoration={markings, mark=at position 0.5cm with {\arrow[black]{Triangle}}}, thick]  (C) -- (N);
\draw [postaction=decorate,decoration={markings, mark=at position 0.5cm with {\arrow[black]{Triangle}}}, thick]  (C) -- (S);
\end{scope}

\begin{scope}[xshift=2.5cm, yshift=-5cm]
\path (0,0) node[circle, fill=black, scale=0.3](C) {}
      (-1,0) node[circle, fill=black, scale=0.3](W) {}
      (1,0) node[circle, fill=black, scale=0.3](E) {}
      (0,-1) node[circle, fill=black, scale=0.3](S) {}
      (0,1) node[circle, fill=black, scale=0.3](N) {};
\draw [postaction=decorate,decoration={markings, mark=at position 0.5cm with {\arrow[black]{Triangle}}}, thick]  (W) -- (C);
\draw [postaction=decorate,decoration={markings, mark=at position 0.5cm with {\arrow[black]{Triangle}}}, thick]  (C) -- (N);
\draw [dashed, thick]  (C) -- (E);
\draw [dashed, thick]  (S) -- (C);
\end{scope}

\begin{scope}[xshift=5cm, yshift=-5cm]
\path (0,0) node[circle, fill=black, scale=0.3](C) {}
      (-1,0) node[circle, fill=black, scale=0.3](W) {}
      (1,0) node[circle, fill=black, scale=0.3](E) {}
      (0,-1) node[circle, fill=black, scale=0.3](S) {}
      (0,1) node[circle, fill=black, scale=0.3](N) {};
\draw [postaction=decorate,decoration={markings, mark=at position 0.5cm with {\arrow[black]{Triangle}}}, thick]  (W) -- (C);
\draw [postaction=decorate,decoration={markings, mark=at position 0.5cm with {\arrow[black]{Triangle}}}, thick]  (C) -- (E);
\draw [postaction=decorate,decoration={markings, mark=at position 0.5cm with {\arrow[black]{Triangle}}}, thick]  (C) -- (N);
\draw [postaction=decorate,decoration={markings, mark=at position 0.5cm with {\arrow[black]{Triangle}}}, thick]  (S) -- (C);
\end{scope}

\end{tikzpicture}}
\caption{Graphs $g_{k,l}$ with $w(g_{k,l}) = d_{i,j}$ at the $i$-th row and $j$-th column.}
\label{fig:D}
\end{figure}

\begin{lem} \label{pol1}
The partition function $\mathcal{Z}(\lambda_1, \lambda_2 , \dots , \lambda_L)$ is a symmetric polynomial of degree $2L - 1$
in each variable $x_i = e^{2 \lambda_i}$ separately.
\end{lem}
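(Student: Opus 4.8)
The plan is to read $\mathcal{Z}$ as a sum over admissible lattice colourings and bound the degree in a single $x_i$ row by row; the symmetry is already available. Indeed, by Section~\ref{sec:SYM} the relation \eqref{EE} gives $\pi_{i,j}(\mathcal{Z})=\mathcal{Z}$, so $\mathcal{Z}\in\C[x_1^{\pm 1},\dots,x_L^{\pm 1}]^{\mathfrak{S}_L}$ and it suffices to control one variable $x_i=e^{2\lambda_i}$. By \eqref{PF} (equivalently \eqref{ZFF}) $\mathcal{Z}$ is a finite sum of products $\prod_{k,l}w(g_{k,l})(\lambda_k-\mu_l)$, and, as inspected in Section~\ref{sec:POL}, each weight $w(g_{k,l})$ is a polynomial in $x_k$ of degree $2$, except for $c$, $d_{1,2}$, $d_{1,3}$, $d_{2,3}$, which have degree at most $1$. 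Hence $\mathcal{Z}$ is a polynomial in each $x_i$, and since $x_i$ occurs only through the $L$ weights $w(g_{i,1}),\dots,w(g_{i,L})$ of the $i$-th row one gets at once the crude bound $\deg_{x_i}\mathcal{Z}\le 2L$.

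To sharpen this to $2L-1$ I would invoke the conservation of arrows (Remarks~\ref{AR} and \ref{AR1}). For the domain-wall data \eqref{dw} the horizontal edge feeding the $i$-th row on the left carries colour $1$ while the one leaving on the right carries colour $3$; since along the row the horizontal colour changes at vertex $(i,l)$ only by the amount dictated by $\alpha+\beta=\alpha'+\beta'$, and the net change is $+2$, the horizontal colour must strictly increase at at least one vertex of the row. Comparing with \eqref{bw}, a vertex at which the horizontal colour strictly increases is forced to be of one of the four types carrying the weights $c$, $d_{1,2}$, $d_{1,3}$, $d_{2,3}$, each of $x_i$-degree at most $1$, whereas every remaining vertex of the row contributes at most $x_i^2$. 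Therefore $\deg_{x_i}\bigl(\prod_l w(g_{i,l})\bigr)\le 2(L-1)+1=2L-1$ for each admissible configuration, and so $\deg_{x_i}\mathcal{Z}\le 2L-1$. The same conclusion may be phrased operatorially from $\mathcal{Z}=\bra{\bar0}\mathcal{E}(\lambda_L)\cdots\mathcal{E}(\lambda_1)\ket{0}$ with $\mathcal{E}=\mathcal{T}^{3}_{1}$: the $x_i^{2L}$-coefficient of $\mathcal{E}(\lambda_i)=\bigl[\PROD{1}{j}{L}\mathcal{R}_{aj}(\lambda_i-\mu_j)\bigr]^{3}_{1}$ is assembled solely from the leading ($x_i^2$) parts of the $\mathcal{R}_{aj}$, and those leading parts have no entries in the slots occupied by $c,d_{1,2},d_{1,3},d_{2,3}$, hence cannot raise the auxiliary colour from $1$ to $3$; so this coefficient vanishes.

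The step I expect to carry the real weight is showing the bound is saturated, i.e.\ that the coefficient of $x_i^{2L-1}$ in $\mathcal{Z}$ does not vanish identically in $q$ and the inhomogeneities $\mu_j$. For this I would isolate the configurations that meet the count with equality — exactly one vertex of row $i$ at which the horizontal colour jumps directly from $1$ to $3$, all the others horizontal-preserving — and identify the resulting leading term with the $(1,3)$ auxiliary entry of $\sum_{c=1}^{L}\mathcal{R}^{\sharp}_{a1}\cdots\mathcal{R}^{\flat}_{ac}\cdots\mathcal{R}^{\sharp}_{aL}$ taken between $\bra{\bar0}$ and the state produced by the remaining rows, where $\mathcal{R}^{\sharp}$ denotes the $x_i^2$-part and $\mathcal{R}^{\flat}$ the $x_i^1$-part of the $\mathcal{R}$-matrix. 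One then checks this is a nonzero polynomial, e.g.\ by evaluating at a convenient specialization of the parameters or by using the recursion relations obeyed by $\mathcal{Z}$. Combined with the symmetry recalled above, this yields the statement.
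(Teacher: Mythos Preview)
Your argument is correct and rests on the same mechanism as the paper's: analyse a single row, use the domain-wall data to force the horizontal colour to climb from $1$ to $3$, and observe that the weights responsible for any such climb ($c$, $d_{1,2}$, $d_{1,3}$, $d_{2,3}$) are precisely the ones of $x_i$-degree $1$. The difference is in execution. The paper fixes the bottom row (where the southern boundary pins the vertical colours to $1$) and explicitly enumerates the four admissible weight sequences (i)--(iv), then reads off their degrees. Your version replaces this enumeration by the clean counting observation ``at least one vertex in the row must raise the horizontal colour, hence contributes degree $\leq 1$,'' which works for any row and avoids the case split. What the paper's enumeration buys is a direct witness to saturation: sequences of type (i) and (iii) visibly have degree $2L-1$, so one sees immediately that the bound is attained (modulo the usual tacit assumption that the leading terms from different configurations do not conspire to cancel --- a point the paper does not address either). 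Your honest flagging of the saturation step is appropriate; a quick way to settle it in the spirit of the paper is to note that the single type-(i) sequence $\{d_{1,3}\}\to\{d_{3,3}\}^{L-1}$ for row $1$ forces a unique global configuration (the one in \Figref{fig:init} reflected), so its $x_1^{2L-1}$-coefficient cannot be cancelled by any other term.
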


\begin{proof}
The polynomial structure is a direct consequence of \eqref{bw}, \eqref{abcd}, \eqref{ZFF} and \eqref{IKFZa}-\eqref{IKFZb}; while the symmetry property with respect to the permutation of arguments has already been proved in \Secref{sec:SYM}. Also, due to \eqref{ZFF} one can see the whole dependence of $\mathcal{Z}$ on a given variable 
$\lambda_i$ is contained in a single operator $\mathcal{E}(\lambda_i)$.
Next, since $\mathcal{Z}$ is symmetric, it suffices to inspect its dependence on the variable $x_1$.
The latter then arises from the statistical weights associated to the concatenation of graphs $g_{1,j}$ for $1\leq j \leq L$, respecting the conservation of arrows discussed in Remarks \ref{AR} and \ref{AR1}.
In this way, one only needs to inspect the contribution originated from the sequence
\[ \label{seq}
w(g_{1,1}) \to w(g_{1,2}) \to \dots \to w(g_{1,L}) \; .
\]
Due to the domain-wall boundary conditions we have $w(g_{1,1}) \in \{ d_{1,3} , c, a  \}$ while 
$w(g_{1,L}) \in \{ d_{1,3} , d_{2,3},  d_{3,3}  \}$. On the same basis we find the restrictions $w(g_{1,j}) \in \{ d_{1,3} , d_{2,3}, c, a, b, d_{3,3}  \}$
for $2 \leq j \leq L-1$.
Next we introduce the short-hand notation $\{ \Lambda \}^n$ for the repeated sequence of $n$ terms
$\{ \Lambda \} \to \{ \Lambda \} \to \dots \to \{ \Lambda \}$ of any element $\Lambda$. The latter will be useful when describing the possible sequences \eqref{seq} arising
under domain-wall boundary conditions. Then, given the above described constraints, we have the following possible sequences:
\begin{enumerate}[label=(\roman*)]
\item $ \{ d_{1,3} \} \to \{ d_{3,3} \}^{L-1}$
\item $\{ c \} \to \{ b \}^n \to \{ d_{2,3} \} \to \{ d_{3,3} \}^{L-n-2} \hfill 0 \leq n \leq L-2$
\item $\{ a \}^{1+n} \to \{ d_{1,3} \} \to \{ d_{3,3} \}^{L-n-2} \hfill 0 \leq n \leq L-2$
\item $\{ a \}^{1+n} \to \{ c \} \to \{ b \}^m \to \{ d_{2,3} \} \to \{ d_{3,3} \}^{L-m-n-3} \hfill 0 \leq m \leq L-n-3 ; \; 0 \leq n \leq L-2$
\end{enumerate}

The sequences $(\mathrm{i})$ and $(\mathrm{iii})$ then give rise to polynomials in $x_1$ of degree $2L - 1$; while $(\mathrm{ii})$ and $(\mathrm{iv})$ contribute with polynomials of degree $2L -2$. Therefore, we can conclude $\mathcal{Z}$ is a polynomial in $x_1$ of degree $2L-1$.

\end{proof}

\begin{rema}
Alternatively, one could have similarly inspected the sequence $w(g_{L,1}) \to w(g_{L,2}) \to \dots \to w(g_{L,L})$ for the proof of Lemma \eqref{pol1}.
\end{rema}

\begin{lem} \label{pol2}
The functions $\mathcal{F}(\lambda_1, \lambda_2, \dots, \lambda_{L-1} \mid v_1, v_2)$ and $\bar{\mathcal{F}}( v_1, v_2 \mid \lambda_1, \lambda_2, \dots, \lambda_{L-1})$ are polynomials of degree $2L-1$ in each variable $x_i = e^{2 \lambda_i}$ separately; and also of degree $2L-1$ in each variable $y_i = e^{2 v_i}$. 
\end{lem}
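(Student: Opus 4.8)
The plan is to follow, \emph{mutatis mutandis}, the proof of Lemma~\ref{pol1}. The polynomial character in each of $x_i=e^{2\lambda_i}$, $y_1=e^{2v_1}$, $y_2=e^{2v_2}$ is again immediate from \eqref{bw}, \eqref{abcd}, \eqref{ZFF} and \eqref{IKFZa}--\eqref{IKFZb}, so the content is the degree. The key bookkeeping observation I would record first is: inspecting \eqref{bw} with \eqref{IKFZa}--\eqref{IKFZb}, a nonzero statistical weight $w(g)$ has degree $1$ in $x=e^{2\lambda}$ exactly when $w(g)\in\{c,d_{1,2},d_{1,3},d_{2,3}\}$, and these are precisely the weights whose east horizontal index exceeds their west horizontal index; every other nonzero weight, which keeps the horizontal index fixed or lowers it, has degree $2$. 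Hence, for \emph{any} admissible configuration of a single row of $L$ vertices, the degree in the variable carried by that row equals $2L$ minus the number of vertices at which the horizontal index strictly increases. Now by \eqref{ZFF} the dependence of $\mathcal{F}$ and of $\bar{\mathcal{F}}$ on any one spectral parameter is confined to a single row operator — an $\mathcal{E}$-row for the $u_i$, running from west index $1$ to east index $3$, or a $\mathcal{B}$-row for the $v_i$, running from west index $1$ to east index $2$ — and in either case the horizontal index must increase, forcing at least one strictly-increasing vertex. This already yields the upper bound $\deg_{x_i}\mathcal{F},\deg_{x_i}\bar{\mathcal{F}},\deg_{y_i}\mathcal{F},\deg_{y_i}\bar{\mathcal{F}}\le 2L-1$.

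To see that $2L-1$ is attained, for the $u_i$ I would invoke the partial symmetry of \Secref{sec:SYM}: since $\mathcal{F}$ and $\bar{\mathcal{F}}$ are symmetric in $u_1,\dots,u_{L-1}$, it suffices to analyse one $\mathcal{E}$-row adjacent to a frozen boundary — for $\mathcal{F}$ the top row $\mathcal{E}(u_{L-1})$, whose north edges are pinned to $3$ by $\bra{\bar 0}$, and for $\bar{\mathcal{F}}$ the bottom row $\mathcal{E}(u_1)$, whose south edges are pinned to $1$ by $\ket 0$, the latter reproducing verbatim the setting of Lemma~\ref{pol1}. In both situations the frozen vertical edges, via the conservation of arrows (Remark~\ref{AR1}), force the horizontal index to be monotone, the admissible weight sequences are enumerated by the same argument as in cases $(\mathrm{i})$--$(\mathrm{iv})$ of Lemma~\ref{pol1}, and the extremal ones — a single $d_{1,3}$ flanked by weights of type $d_{1,1}$ or $a$ on the left and $a$ or $d_{3,3}$ on the right — give degree exactly $2L-1$ (the others $2L-2$). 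The same reasoning applied to $v_1$ in $\mathcal{F}$ (bottom $\mathcal{B}$-row, south pinned to $1$) and to $v_2$ in $\bar{\mathcal{F}}$ (top $\mathcal{B}$-row, north pinned to $3$) gives degree $2L-1$ in $y_1$, resp.\ $y_2$, the degree-$(2L-1)$ sequences now carrying a single transition of type $c$, resp.\ $d_{1,2}$.

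The step I expect to be the main obstacle is the degree of $\mathcal{F}$ in $y_2$ and of $\bar{\mathcal{F}}$ in $y_1$: here the relevant $\mathcal{B}$-row is flanked by two fluctuating rows, so there is no frozen vertical boundary to exploit and the partial symmetry does not help. The upper bound $2L-1$ is already in hand, so what remains is to produce a nonvanishing contribution of that degree. I would attempt this either through a left--right reflection of $\mathrm{Rect}_{L+1,L}$ interchanging the two $\mathcal{B}$-rows, which relates $\mathcal{F}$ to $\bar{\mathcal{F}}$ with $v_1,v_2$ swapped and thereby transfers the boundary-adjacent computation to the missing variable, or — if a suitable reflection is not available in the form needed — by exhibiting a single global edge configuration of $\mathrm{Rect}_{L+1,L}$ in which the $\mathcal{B}$-row in question carries a degree-$(2L-1)$ local profile (one transition weight, all other vertices of degree $2$) while the adjacent $\mathcal{B}$-row supplies a vertical edge pattern compatible with it under the conservation of arrows, and then checking that this leading contribution does not cancel — the same non-cancellation point that is already implicit in Lemma~\ref{pol1}.
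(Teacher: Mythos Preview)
Your proof is correct, and for the upper bound it is in fact cleaner than the paper's. Your observation that the degree-one weights $\{c,d_{1,2},d_{1,3},d_{2,3}\}$ are precisely those at which the horizontal index strictly increases gives the bound $2L-1$ uniformly for every $\mathcal{E}$- and $\mathcal{B}$-row in one stroke, with no need to inspect the adjacent vertical boundary. The paper does not isolate this principle; instead it enumerates the admissible one-row sequences explicitly in every case. For the boundary-adjacent rows (the $u_i$'s, $y_1$ in $\mathcal{F}$, $y_2$ in $\bar{\mathcal{F}}$) this enumeration is short, and for $\bar{\mathcal{F}}$ the paper transfers the $\mathcal{F}$-analysis via a \emph{vertical} (top--bottom) flip of the graphs, which maps $\{a,b,c\}\mapsto\{d_{3,3},b,d_{2,3}\}$ in the relevant sequences.

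Where the two approaches genuinely diverge is the ``hard case'' you flag, the interior $\mathcal{B}$-row ($y_2$ in $\mathcal{F}$). The paper does \emph{not} use a left--right reflection to swap $v_1$ and $v_2$; rather it conditions on the allowed sequence in the adjacent $\mathcal{B}(v_1)$-row (your cases $(\mathrm{i})$, $(\mathrm{ii})$) and then enumerates the compatible row-two sequences by brute force, obtaining eleven subcases $(\mathrm{i.a})$--$(\mathrm{i.b})$, $(\mathrm{ii.a})$--$(\mathrm{ii.i})$, nine of which have degree $2L-1$ and two of which have degree $2L-2$. So your upper bound already subsumes this entire computation; what the enumeration buys the paper is an explicit list of degree-$(2L-1)$ contributions, which your single-configuration idea would replace more economically. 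As you note, neither argument actually verifies that the leading terms do not cancel --- the paper, like Lemma~\ref{pol1}, tacitly reads ``degree $2L-1$'' as ``degree at most $2L-1$'', which is all that is used downstream (Corollaries~\ref{corFH}--\ref{corFHb} and the polynomial ans\"atze \eqref{PH}, \eqref{PH3}).
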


\begin{proof}
The dependence of $\mathcal{F}$ and $\bar{\mathcal{F}}$ on the variable $x_i$ follows straightforwardly from the analysis performed in the proof of Lemma \ref{pol1}. Therefore, here we only need to examine the dependence on the variables $y_1$ and $y_2$.

We shall then start with the analysis of $\mathcal{F}$ and from \eqref{ZFF} we can see the whole dependence on $y_1$ is enclosed in the operator $\mathcal{B} (v_1)$. In this way, we only need to examine the contribution originated from the concatenation of graphs $g_{1,j}$ ($j=1,2,\dots, L$)
in order to determine the polynomial degree in the variable $y_1$.
More precisely, here we also need to inspect the sequence \eqref{seq} but now with possible statistical weights
\begin{align}
w(g_{1,1}) &\in \{ d_{1,3} , c , a \} & w(g_{1,L}) &\in \{ b , c \} \nonumber \\
w(g_{1,j}) &\in \{ d_{1,3} , d_{2,3} , d_{3,3} ,a , b, c \} & 2 \leq j & \leq L-1 \; ,
\end{align}
in order to comply with the required boundary conditions. In this way we find the allowed sequences
\begin{enumerate}[label=(\roman*)]
\item $ \{ c \} \to \{ b \}^{L-1}$
\item $\{ a \}^{1+n} \to \{ c \} \to \{ b \}^{L-n-2} \qquad\qquad 0 \leq n \leq L-2$ \; .
\end{enumerate}
Both sequences $(\mathrm{i})$ and $(\mathrm{ii})$ produce polynomials of degree $2L-1$ in the variable $y_1$. Therefore, the dependence on 
$y_1$ stated in Lemma \ref{pol2} is proved.

Next we move on to the dependence on $y_2$. The latter is similarly obtained from the inspection of sequences 
\[ \label{seq2}
w(g_{2,1}) \to w(g_{2,2}) \to \dots \to w(g_{2,L}) \; .
\]
The allowed sequences \eqref{seq2} will depend strongly on the possibilities $(\mathrm{i})$ and $(\mathrm{ii})$ for \eqref{seq} due to conservation of arrows. In this way, we shall split our analysis and first consider the scenario  $(\mathrm{i})$ for \eqref{seq}.
In that case we find the possible statistical weights
\begin{align}
w(g_{2,1}) &\in \{ b, d_{1,2}  \} & w(g_{2,L}) &\in \{ b , c \} \nonumber \\
w(g_{2,j}) &\in \{ d_{1,3} , d_{2,3} , d_{3,3} ,a , b, c \} & 2 \leq j & \leq L-1 \; ,
\end{align}
due to arrows conservation. The latter then yields the following possible sequences \eqref{seq2}:
\begin{enumerate}[label=(i.\alph*)]
\item $ \{ d_{1,2} \} \to \{ b \}^{L-1}$
\item $ \{ b \} \to \{ a \}^{r} \to \{ c \} \to \{ b \}^{L-r-2} \qquad\qquad 0 \leq r \leq L-2$ \; .
\end{enumerate}

The inspection of \eqref{seq2} under scenario $(\mathrm{ii})$ is more involving but still doable. In that case the statistical weights entering
\eqref{seq2} are restricted to
\begin{align} \label{bbb}
w(g_{2,1}) &\in \{ d_{1,3} , c , a \} & w(g_{2,L}) &\in \{ b , c \} \nonumber \\
w(g_{2,j}) &\in \{ d_{1,3} , d_{2,3} , d_{3,3} ,a , b, c \} & 2 \leq j  \leq L-1  ; \; & j\neq n+2 \nonumber \\
w(g_{2,n+2}) &\in \{ b, c, \bar{c}, d_{1,2} , d_{2,2} , d_{3,2}  \} \; . &  
\end{align}
Then, considering \eqref{bbb}, we have the following allowed sequences \eqref{seq2}:
\begin{enumerate}[label=(ii.\alph*)]
\item $ \{ d_{1,3} \} \to \{ d_{3,3} \}^{r} \to \{ d_{3,2} \} \to \{ b \}^{L-r-2} \hfill 0 \leq r \leq L-2$ 
\item $ \{ c \} \to \{ b \}^{r} \to \{ d_{2,2} \} \to \{ b \}^{L-r-2} \hfill 0 \leq r \leq L-2$ 
\item $ \{ c \} \to \{ b \}^r \to \{ d_{2,3} \} \to \{ d_{3,3} \}^s \to \{ d_{3,2} \} \to \{ b \}^{L-r-s-3} \hfill 0 \leq s \leq L-r-3 ; \; 0 \leq r \leq L-2$
\item $ \{ a \}^{1+r} \to \{ d_{1,3} \} \to \{ d_{3,3} \}^s \to \{ d_{3,2} \} \to \{ b \}^{L-r-s-3} \hfill 0 \leq s \leq L-r-3 ; \; 0 \leq r \leq L-2$
\item $ \{ a \}^{1+r} \to \{ c \} \to \{ b \}^{s} \to \{ d_{2,2} \} \to \{ b \}^{L-r-s-3} \hfill 0 \leq s \leq L-r-3 ; \; 0 \leq r \leq L-2$ 
\item $ \{ a \}^{1+r} \to \{ c \} \to \{ b \}^s \to \{ d_{2,3} \} \to \{ d_{3,3} \}^t \to \{ d_{3,2} \} \to \{ b \}^{L-r-s-t-4} \hfill 0 \leq t \leq L-r-s-4$ \begin{flushright} $0 \leq s \leq L-r-3 ; \; 0 \leq r \leq L-2$ \end{flushright}
\item $ \{ a \}^{1+r} \to \{ d_{1,2} \} \to \{ b \}^{L-r-2} \hfill 0 \leq r \leq L-2$ 
\item $ \{ a \}^{1+r} \to \{ b \} \to \{ c \} \to  \{ b \}^{L-r-3} \hfill 0 \leq r \leq L-3$ 
\item $ \{ a \}^{1+r} \to \{ b \} \to \{ a \}^s \to \{ c \} \to \{ b \}^{L-r-s-3} \hfill 0 \leq s \leq L-r-3 ; \; 0 \leq r \leq L-2$
\end{enumerate}
Except from $(\mathrm{ii.c})$ and $(\mathrm{ii.f})$, all the contributions arising from $(\mathrm{i.a})$ to $(\mathrm{ii.i})$ are polynomials in $y_2$
of degree $2L-1$. As for $(\mathrm{ii.c})$ and $(\mathrm{ii.f})$, they give rise to polynomials of degree $2L-2$. In this way, $\mathcal{F}(\lambda_1, \lambda_2, \dots, \lambda_{L-1} \mid v_1, v_2)$ is a polynomial of degree $2L-1$ in $y_2$.
This concludes the proof of Lemma \ref{pol2} for the function $\mathcal{F}$. 

As for the function $\bar{\mathcal{F}}$, one needs to inspect the sequences
\[ \label{seq3}
w(g_{L+1,1}) \to w(g_{L+1,2}) \to \dots \to w(g_{L+1,L})
\]
and 
\[ \label{seq4}
w(g_{L,1}) \to w(g_{L,2}) \to \dots \to w(g_{L,L})
\]
instead of \eqref{seq} and \eqref{seq2}, in order to analyze its dependence on $y_1$ and $y_2$. Due to our boundary conditions, the possible sequences
\eqref{seq3} and \eqref{seq4} can be obtained directly from \eqref{seq} and \eqref{seq2} relevant to $\mathcal{F}$ by mapping each graph $g_{i,j}$ in the sequence to its counterpart with vertical edges flipped around the central vertex. We then arrive at the same conclusions for $\bar{\mathcal{F}}$.

\end{proof}

\subsection{Simple zeroes of $\mathcal{F}$ and $\bar{\mathcal{F}}$} \label{sec:ZEROES}

From expressions \eqref{ZFF} one can promptly see that the sets of variables $\{ u_j \}$ and $\{ v_j \}$ entering the arguments of 
$\mathcal{F}$ and $\bar{\mathcal{F}}$ are not on equal footing. In particular, as previously discussed in \Secref{sec:SYM}, the functions 
$\mathcal{F}$ and $\bar{\mathcal{F}}$ are symmetric under the permutation of variables $u_j$; while a similar statement regarding the variables $v_j$ can not be made. Here, however, we intend to show the strategy used in the proof of Lemma \ref{pol2} can still yield us more information on the structure of the aforementioned functions.    

\begin{lem} \label{zeroF}
The partition function $\mathcal{F}(\lambda_1, \dots , \lambda_{L-1} \mid v_1, v_2)$ vanishes for the specializations $y_1 = e^{2 \mu_j} \zeta$ ($1 \leq j \leq L$).
\end{lem}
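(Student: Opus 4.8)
The plan is to reuse, essentially verbatim, the combinatorial analysis of the first‑row weights already carried out in the proof of Lemma~\ref{pol2}. By \eqref{ZFF}, the entire dependence of $\mathcal{F}$ on $y_1=e^{2v_1}$ is carried by the single operator $\mathcal{B}(v_1)$; equivalently, in the statistical sum \eqref{PF} defining $\mathcal{F}$ the variable $y_1$ enters only through the weights $w(g_{1,j})$ of the bottom row, and the weight of the vertex at $(1,j)$ is evaluated at spectral argument $v_1-\mu_j$, i.e. at $x=e^{2(v_1-\mu_j)}=y_1\,e^{-2\mu_j}$. Hence the specialization $y_1=e^{2\mu_j}\zeta$ forces precisely the $(1,j)$‑vertex to sit at the point $x=\zeta$, the remaining columns staying generic.

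Next I would invoke the classification of admissible first‑row sequences obtained in the proof of Lemma~\ref{pol2}: under the boundary data \eqref{bvF} the only possibilities for $w(g_{1,1})\to\cdots\to w(g_{1,L})$ are
\[
\{c\}\to\{b\}^{L-1} \qquad\text{and}\qquad \{a\}^{1+n}\to\{c\}\to\{b\}^{L-n-2},\quad 0\le n\le L-2 .
\]
(Concretely: the left and right horizontal colours of the bottom row are $1$ and $2$, its bottom edges are all of colour $1$, and conservation of arrows then forces the horizontal colour to be non‑decreasing along the row, hence to perform a single unit jump — which pins down $w(g_{1,j})\in\{a,b,c\}$ at every column.) The crucial consequence is that, in \emph{every} admissible configuration and for \emph{every} column $j$, the bottom‑row weight $w(g_{1,j})$ lies in $\{a,b,c\}$; no $d$‑type weight can occur in the first row.

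The argument is then closed by the elementary observation, read off directly from \eqref{IKFZa}, that $a(\lambda)$, $b(\lambda)$ and $c(\lambda)$ all carry the common factor $(e^{2\lambda}-\zeta)$ and therefore vanish identically at $e^{2\lambda}=\zeta$. Upon the specialization $y_1=e^{2\mu_j}\zeta$ the factor $w(g_{1,j})$ — which is present in every single term of the statistical sum for $\mathcal{F}$ — is zero, so $\mathcal{F}$ vanishes; letting $j$ run over $1,\dots,L$ yields all the asserted zeroes.

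The one point requiring care is that this first‑row weight classification is genuinely exhaustive and genuinely excludes the weights $d_{\alpha,\beta}$, which do \emph{not} vanish at $x=\zeta$ (as one checks from the case distinctions defining $d_{\alpha,\beta}$); this is exactly the content of the relevant step in the proof of Lemma~\ref{pol2}, so in practice it amounts to quoting that analysis. Everything else — matching the spectral argument of the $(1,j)$‑vertex with $y_1e^{-2\mu_j}$ and reading off the common zero of $a,b,c$ at $x=\zeta$ — is routine bookkeeping.
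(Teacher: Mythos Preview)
Your proof is correct and follows essentially the same approach as the paper's own argument: quote the first-row sequence classification $(\mathrm{i})$--$(\mathrm{ii})$ from Lemma~\ref{pol2}, observe that only the weights $a$, $b$, $c$ occur there, and use their common factor $(e^{2\lambda}-\zeta)$ from \eqref{IKFZa} to conclude. Your version is in fact slightly more explicit than the paper's in pinpointing that it is the $(1,j)$-vertex whose argument degenerates to $\zeta$ under the specialization $y_1=e^{2\mu_j}\zeta$, but the idea is identical.
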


\begin{proof}
The dependence of $\mathcal{F}(\lambda_1, \dots , \lambda_{L-1} \mid v_1, v_2)$ on $y_1 = e^{2 v_1}$ is characterized by the allowed sequences $\mathrm{(i)}$ and $\mathrm{(ii)}$ described in the proof of Lemma \ref{pol2}. Those sequences contain only the statistical weights $b$, $c$ and $a$; which share the overall common factor $(x-\zeta)$. Therefore, the partition function $\mathcal{F}(\lambda_1, \dots , \lambda_L \mid v_1, v_2)$ vanishes for the specialization $y_1 = e^{2 \mu_j} \zeta$.
\end{proof}
\medskip
\begin{cor} \label{corFH}
The function $\mathcal{F}$ can be written as
\[ \label{FH}
\mathcal{F}(\lambda_1, \lambda_2,  \dots , \lambda_{L-1} \mid v_1, v_2) \eqqcolon \omega(y_1) \; \mathcal{H}(\lambda_1, \lambda_2,  \dots , \lambda_{L-1} \mid v_1, v_2) 
\]
with
\[ \label{omega}
\omega(y) \coloneqq \prod_{j=1}^L (y - e^{2 \mu_j} \zeta ) 
\]
and $\mathcal{H}(\lambda_1, \lambda_2,  \dots , \lambda_{L-1} \mid v_1, v_2)$ a polynomial of degree $L-1$ in $y_1$. The dependence on the other variables is still the same as of $\mathcal{F}$.
\end{cor}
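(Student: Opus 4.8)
The plan is to read off \eqref{FH} directly from the two preceding results. Lemma \ref{pol2} asserts that $\mathcal{F}(\lambda_1,\dots,\lambda_{L-1}\mid v_1,v_2)$ is a polynomial of degree $2L-1$ in $y_1=e^{2v_1}$, and Lemma \ref{zeroF} exhibits $L$ explicit zeroes of this polynomial at $y_1 = e^{2\mu_j}\zeta$, $1\le j\le L$. Provided the inhomogeneity parameters are in generic position, so that the numbers $e^{2\mu_j}\zeta$ are pairwise distinct, we thus have a degree-$(2L-1)$ polynomial in $y_1$ --- with coefficients that are themselves (Laurent) polynomials in the remaining variables $\lambda_1,\dots,\lambda_{L-1}$, $v_2$ and the parameters $q,\mu_j$ --- vanishing at $L$ distinct points.

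Concretely, I would regard $\mathcal{F}$ as an element of $R[y_1]$, where $R$ collects the polynomial dependence on all the other variables, and perform Euclidean division by the monic polynomial $\omega(y_1)=\prod_{j=1}^L(y_1-e^{2\mu_j}\zeta)$ of \eqref{omega}, writing $\mathcal{F}=\omega(y_1)\,\mathcal{H}+\rho$ with $\mathcal{H},\rho\in R[y_1]$ and $\deg_{y_1}\rho<L$. Specialising $y_1\to e^{2\mu_j}\zeta$ and invoking Lemma \ref{zeroF} forces $\rho(e^{2\mu_j}\zeta)=0$ for every $j$; since $\rho$ has degree strictly less than $L$ in $y_1$ yet vanishes at $L$ distinct values, it is identically zero. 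Hence $\mathcal{F}=\omega(y_1)\,\mathcal{H}$, and comparing degrees in $y_1$ gives $\deg_{y_1}\mathcal{H}=(2L-1)-L=L-1$. Because $\omega(y_1)$ depends on none of the variables $\lambda_i$ nor on $y_2$, the factorisation does not alter the dependence on those variables, so $\mathcal{H}$ carries exactly the polynomial structure attributed to $\mathcal{F}$ in Lemma \ref{pol2}, as claimed.

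The single point requiring a little care is the genericity assumption on the $\mu_j$ used to obtain $L$ distinct roots. This is immaterial: once \eqref{FH} holds for generic inhomogeneities, both sides are (Laurent-)polynomial, hence holomorphic, in the $\mu_j$, so the identity extends to all values of the inhomogeneity parameters by continuity, and $\mathcal{H}$ is then defined unambiguously on the locus where $\omega(y_1)\not\equiv 0$ and by continuation elsewhere. No genuine obstacle arises here --- the content of the corollary is entirely encoded in Lemmas \ref{pol2} and \ref{zeroF}, the former supplying the global degree bound and the latter the requisite zeroes.
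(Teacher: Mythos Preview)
Your proof is correct and matches the paper's approach exactly: the paper's proof is a one-line ``Direct consequence of the polynomial structure described in Lemma \ref{pol2} and the simple zeroes of Lemma \ref{zeroF}'', and your argument is simply the fleshed-out version of that sentence, carrying out the factorisation explicitly via Euclidean division and handling the genericity of the $\mu_j$ with appropriate care.
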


\begin{proof}
Direct consequence of the polynomial structure described in Lemma \ref{pol2} and the simple zeroes of Lemma \ref{zeroF}.
\end{proof}

Next, we turn our attention to the function $\bar{\mathcal{F}}( v_1, v_2 \mid \lambda_1, \lambda_2, \dots, \lambda_{L-1})$ and
our goal is to obtain analogous versions of the Lemma \ref{zeroF} and Corollary \ref{corFH}. That can be obtained from the inspection of possible sequences \eqref{seq3} which, in their turn, can be directly read off from $\mathrm{(i)}$ and $\mathrm{(ii)}$ by flipping the vertical edges of the graphs $g_{i,j}$ entering those sequences around the central vertex. In this way, we obtain the following properties for the partition function $\bar{\mathcal{F}}$.

\begin{lem} \label{zeroFb}
The partition function $\bar{\mathcal{F}}( v_1, v_2 \mid \lambda_1, \lambda_2, \dots, \lambda_{L-1})$ vanishes when $y_2 = e^{2 \mu_j}$ for $1\leq j \leq L$.
\end{lem}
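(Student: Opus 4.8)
The plan is to transcribe, almost verbatim, the argument used for Lemma~\ref{zeroF}, now applied to the top row of $\bar{\mathcal{F}}$ rather than to the bottom row of $\mathcal{F}$. First I would recall from \eqref{ZFF} that the entire dependence of $\bar{\mathcal{F}}(v_1,v_2\mid\lambda_1,\dots,\lambda_{L-1})$ on $y_2=e^{2v_2}$ is carried by the single operator $\mathcal{B}(v_2)$, i.e.\ by the $(L+1)$-th (top) row of the lattice, whose admissible local configurations are the sequences \eqref{seq3}. As was observed in the proof of Lemma~\ref{pol2}, these sequences are obtained from the sequences $\mathrm{(i)}$ and $\mathrm{(ii)}$ of that proof --- which involve only the weights $a$, $b$ and $c$ --- by replacing each graph $g_{i,j}$ by the one with its vertical edges flipped around the central vertex. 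Under that flip the horizontal edge colours are preserved and the vertical ones reversed, and a short inspection of \eqref{bw} shows that it maps $a\mapsto d_{1,1}$, $b\mapsto b$ and $c\mapsto d_{1,2}$; hence every graph occurring in an admissible sequence \eqref{seq3} carries a weight lying in $\{d_{1,1},\,b,\,d_{1,2}\}$.

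Next I would appeal to the explicit weights \eqref{IKFZa}. Writing $x=e^{2\lambda}$, one reads off $b(\lambda)=q(x-1)(x-\zeta)$, $d_{1,1}(\lambda)=(x-1)\bigl[(x-\zeta)+x(q^2-1)\bigr]$ and $d_{1,2}(\lambda)=q^{-1/2}\zeta(q^2-1)(x-1)$, so all three share the common factor $(x-1)$. Fix now $j\in\{1,\dots,L\}$. In every term of the sum defining $\bar{\mathcal{F}}$, the weight attached to the top-row vertex $g_{L+1,j}$ has spectral argument $v_2-\mu_j$ --- recall $\mathcal{R}(\lambda,\mu)=\mathcal{R}(\lambda-\mu)$ --- and therefore contains the factor $e^{2(v_2-\mu_j)}-1$. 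Specializing $y_2=e^{2\mu_j}$ forces $e^{2(v_2-\mu_j)}=1$, so this factor, and hence each term of the sum, vanishes; thus $\bar{\mathcal{F}}(v_1,v_2\mid\lambda_1,\dots,\lambda_{L-1})=0$, which is the claim.

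The one point that needs genuine care is combinatorial rather than computational: one must verify that the left- and right-boundary data of $\bar{\mathcal{F}}$, together with the conservation of arrows of Remarks~\ref{AR} and~\ref{AR1}, really do confine the top row to the flipped versions of $\mathrm{(i)}$ and $\mathrm{(ii)}$. Concretely, the vertical flip should identify the situation of row~$1$ of $\mathcal{F}$ (all south edges coloured $1$, horizontal boundary colours fixed by \eqref{bvF}) with that of row~$L+1$ of $\bar{\mathcal{F}}$ (all north edges coloured $3$, the same horizontal boundary colours), so that no weight outside $\{d_{1,1},b,d_{1,2}\}$ --- in particular neither $a$ nor $c$, which do not contain the factor $(x-1)$ --- can appear there. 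Once this identification is in place the proof is the exact analogue of that of Lemma~\ref{zeroF}, with the common factor $(x-\zeta)$ there replaced by $(x-1)$, and I expect no further obstacle.
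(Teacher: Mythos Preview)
Your proof is correct and follows essentially the same route as the paper: both flip the row-$1$ sequences of $\mathcal{F}$ to obtain the admissible row-$(L+1)$ sequences of $\bar{\mathcal{F}}$ and then observe that every weight appearing carries the factor $(x-1)$. The only discrepancy is in the labels: you identify the flipped weights as $\{d_{1,1},\,b,\,d_{1,2}\}$ while the paper writes $\{d_{3,3},\,b,\,d_{2,3}\}$; since $d_{1,1}(\lambda)=d_{3,3}(\lambda)$ and $d_{1,2}(\lambda)=d_{2,3}(\lambda)$ as functions, the conclusion is unaffected --- and in fact your labels are the ones consistent with the top edges of row $L+1$ being colour~$3$.
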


\begin{proof}
The dependence of $\bar{\mathcal{F}}$ on $y_2$ is characterized by the allowed sequences \eqref{seq3}; which can be obtained from 
$\mathrm{(i)}$ and $\mathrm{(ii)}$ through the aforementioned \emph{flipping} procedure. The latter is then mimicked by the 
maps $b \mapsto b$, $c \mapsto d_{2,3}$ and $a \mapsto d_{3,3}$, and we are left with the following sequences:
\begin{enumerate}[label=(\roman*)] \setcounter{enumi}{2}
\item $ \{ d_{2,3} \} \to \{ b \}^{L-1}$
\item $\{ d_{3,3} \}^{1+n} \to \{ d_{2,3} \} \to \{ b \}^{L-n-2} \qquad\qquad 0 \leq n \leq L-2$ \; .
\end{enumerate}
Similarly to the analysis performed for $\mathcal{F}$, one can now see $\mathrm{(iii)}$ and $\mathrm{(iv)}$ contain only the statistical weights $d_{2,3}$, $d_{3,3}$ and $b$. These weights, in their turn, share the overall common factor $(x-1)$ which implies 
$\bar{\mathcal{F}}( v_1, v_2 \mid \lambda_1, \lambda_2, \dots, \lambda_{L-1})$ vanishes when $y_2 = e^{2 \mu_j}$.
\end{proof}
\medskip
\begin{cor} \label{corFHb}
The function $\bar{\mathcal{F}}$ can be written as
\[ \label{FbHb}
\bar{\mathcal{F}}( v_1, v_2 \mid \lambda_1, \lambda_2, \dots, \lambda_{L-1}) \eqqcolon \bar{\omega}(y_2) \; \bar{\mathcal{H}}( v_1, v_2 \mid \lambda_1, \lambda_2, \dots, \lambda_{L-1}) 
\]
with 
\[ \label{omegaB}
\bar{\omega}(y) \coloneqq \prod_{j=1}^L (y - e^{2 \mu_j} ) 
\]
and $\bar{\mathcal{H}}( v_1, v_2 \mid \lambda_1, \lambda_2, \dots, \lambda_{L-1})$ a polynomial of degree $L-1$ in the variable $y_2$.
\end{cor}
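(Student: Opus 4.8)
The plan is to repeat, \emph{mutatis mutandis}, the reasoning behind Corollary~\ref{corFH}, now feeding in the companion facts for $\bar{\mathcal{F}}$ established in Lemmas~\ref{pol2} and~\ref{zeroFb}. First I would fix every variable except $y_2 = e^{2 v_2}$ and regard $\bar{\mathcal{F}}(v_1, v_2 \mid \lambda_1, \dots, \lambda_{L-1})$ as a univariate polynomial in $y_2$; by Lemma~\ref{pol2} its degree is $2L-1$. Lemma~\ref{zeroFb} then exhibits the $L$ roots $y_2 = e^{2 \mu_j}$, $1 \le j \le L$. The next step is to conclude that the monic degree-$L$ polynomial $\bar{\omega}(y_2) = \prod_{j=1}^L(y_2 - e^{2 \mu_j})$ divides $\bar{\mathcal{F}}$: since $\bar{\omega}$ is monic in $y_2$, Euclidean division is valid over the ring of functions of the remaining variables, and the vanishing supplied by Lemma~\ref{zeroFb} forces the remainder to be zero. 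Setting $\bar{\mathcal{F}} = \bar{\omega}(y_2)\,\bar{\mathcal{H}}$ defines $\bar{\mathcal{H}}$, and comparing degrees in $y_2$ gives $\deg_{y_2}\bar{\mathcal{H}} = (2L-1) - L = L-1$, which is the assertion.

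The cleanest way to secure the divisibility — and the way I would actually write it up, since it bypasses any genericity hypothesis on the $\mu_j$ — is to look inside the proof of Lemma~\ref{zeroFb}. The entire $y_2$-dependence of $\bar{\mathcal{F}}$ sits in the row-$(L{+}1)$ sequence~\eqref{seq3}, and there, for each fixed column index $j \in \{1, \dots, L\}$, the weight $w(g_{L+1,j})$ is forced to lie in $\{ d_{2,3}, d_{3,3}, b \}$; every one of these three weights carries exactly one factor of $e^{2(v_2 - \mu_j)} - 1$, that is, one factor of $y_2 - e^{2 \mu_j}$ up to a nonzero constant. Multiplying over the $L$ columns, every monomial contributing to $\bar{\mathcal{F}}$ therefore contains $\bar{\omega}(y_2)$ as a factor, hence so does $\bar{\mathcal{F}}$ itself, at the level of the explicit polynomial expression; this both proves the factorization~\eqref{FbHb} and re-derives the vanishing of Lemma~\ref{zeroFb}. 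I would also record, exactly as in Corollary~\ref{corFH}, that $\bar{\omega}(y_2)$ depends on neither $v_1$ nor $\lambda_1, \dots, \lambda_{L-1}$, so the dependence of $\bar{\mathcal{H}}$ on all remaining variables — in particular the partial $\mathfrak{S}_{L-1}$-symmetry and the degrees recorded in Lemma~\ref{pol2} — is inherited unchanged from $\bar{\mathcal{F}}$.

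There is no genuine obstacle here: the corollary is essentially bookkeeping layered on top of the two preceding lemmas. If pressed to name the delicate point, it is only the worry that the $L$ specializations $y_2 = e^{2 \mu_j}$ might collapse and thereby fail to account for $L$ full units of vanishing order; reading the divisibility off the common-factor structure of the row-$(L{+}1)$ weights, as above, sidesteps this entirely and makes the whole argument independent of the position of the inhomogeneities $\mu_j$.
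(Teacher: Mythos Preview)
Your proposal is correct and follows essentially the same route as the paper, which simply cites Lemmas~\ref{pol2} and~\ref{zeroFb} as sufficient. Your additional care about possibly coinciding inhomogeneities $\mu_j$---reading the factor $\bar{\omega}(y_2)$ directly off the common $(e^{2\lambda}-1)$ factor shared by $b$, $d_{2,3}$, $d_{3,3}$ in the row-$(L{+}1)$ weights---is a welcome refinement that the paper leaves implicit.
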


\begin{proof}
Similarly to Corollary \eqref{corFH}, formulae \eqref{FbHb} and \eqref{omegaB} are direct consequences of Lemmas \ref{pol2} 
and \ref{zeroFb}.
\end{proof}

\subsection{Initial condition} \label{sec:Z0}

Up to this point we have collected definitions and properties associated to the partition functions $\mathcal{Z}$, $\mathcal{F}$
and $\bar{\mathcal{F}}$; and in the next section we intend to put forward a functional approach for studying the aforementioned
quantities. Our main goal here is to study the partition function $\mathcal{Z}$ but our framework will show such function is closely related to  $\mathcal{F}$ and $\bar{\mathcal{F}}$. In particular, using the AF method we will find a \emph{linear functional equation} characterizing $\mathcal{Z}$ and, as such, it will require an \emph{initial condition} in order to having the sought quantities completely fixed. In what follows we shall then demonstrate the existence of a special point where $\mathcal{Z}$ can be easily evaluated.

\begin{lem}[Initial condition] \label{init}
As for the specializations $\lambda_i = \mu_i$ for $1 \leq i \leq L$ we have
\[ \label{Z0}
\mathcal{Z} (\mu_1, \mu_2, \dots , \mu_L) = \prod_{i,j=1}^L a(\mu_i - \mu_j) \; .
\]
\end{lem}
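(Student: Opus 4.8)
\emph{Proposed proof (plan).} The whole argument hinges on the value of the $\mathcal{R}$-matrix at vanishing argument. Setting $x=e^{2\lambda}=1$ in the weights \eqref{IKFZa}--\eqref{IKFZb} one checks at once that $b(0)=d_{1,1}(0)=d_{3,3}(0)=d_{1,2}(0)=d_{2,3}(0)=d_{2,1}(0)=d_{3,2}(0)=0$, while $a(0)=c(0)=\bar c(0)=d_{2,2}(0)=d_{1,3}(0)=d_{3,1}(0)=(1-q^2)(1-\zeta)\neq 0$; comparing with the array \eqref{bw} this is exactly the statement $\mathcal{R}(0)=a(0)\,\mathcal{P}$, where $\mathcal{P}\in\mathrm{End}(\V\otimes\V)$ is the permutation operator $\mathcal{P}(e_\alpha\otimes e_\beta)=e_\beta\otimes e_\alpha$. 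Note that $a(0)=a(\mu_i-\mu_i)$, and recall that $e_1\otimes e_1$ and $e_3\otimes e_3$ are the (unique) eigenvectors of $\mathcal{R}(\lambda)$ of their respective weight, both with eigenvalue $a(\lambda)$.

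With this in hand I would prove the formula by induction on $L$ via a one-step reduction. Starting from \eqref{ZFF} and using the commutativity \eqref{EE}, write $\mathcal{Z}(\mu_1,\dots,\mu_L)=\bra{\bar 0}\,\mathcal{E}(\mu_1)\cdots\mathcal{E}(\mu_{L-1})\,\mathcal{E}(\mu_L)\,\ket{0}$. Specialising $\lambda_L=\mu_L$, the last factor of the monodromy matrix \eqref{mono} in $\mathcal{E}(\mu_L)=\mathcal{T}(\mu_L)_{1}^{3}$ becomes $\mathcal{R}_{aL}(0)=a(0)\,\mathcal{P}_{aL}$; transporting it leftward through the remaining factors via $\mathcal{R}_{aj}(\nu)\,\mathcal{P}_{aL}=\mathcal{P}_{aL}\,\mathcal{R}_{Lj}(\nu)$ gives $\mathcal{T}(\mu_L)=a(0)\,\mathcal{P}_{aL}\,\widehat{\mathcal{T}}(\mu_L)$ with $\widehat{\mathcal{T}}(\mu_L)=\mathcal{R}_{L,1}(\mu_L-\mu_1)\cdots\mathcal{R}_{L,L-1}(\mu_L-\mu_{L-1})$ the monodromy matrix with auxiliary space $\V_L$ and quantum spaces $\V_1,\dots,\V_{L-1}$. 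Taking the $(1,3)$ entry in $\V_a$ turns $\mathcal{P}_{aL}$ into the elementary matrix $E_{3,1}$ on $\V_L$, so $\mathcal{E}(\mu_L)=a(0)\,E_{3,1}^{(L)}\,\widehat{\mathcal{T}}(\mu_L)$; since $\ket{0}=e_1^{\otimes L}$ is the highest-weight reference of $\widehat{\mathcal{T}}(\mu_L)$ one gets $\widehat{\mathcal{T}}(\mu_L)\ket{0}=\big(\prod_{j=1}^{L-1}a(\mu_L-\mu_j)\big)\ket{0}$, after which $E_{3,1}^{(L)}$ flips the $L$-th tensor slot from $e_1$ to $e_3$. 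Hence $\mathcal{E}(\mu_L)\ket{0}=\big(\prod_{j=1}^{L}a(\mu_L-\mu_j)\big)\,\big(e_1^{\otimes(L-1)}\otimes e_3\big)$.

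It remains to see that the $L$-th site stays frozen under the remaining operators. For $i<L$ and any $\ket{\chi}\in\V_1\otimes\cdots\otimes\V_{L-1}$, in $\mathcal{E}(\mu_i)=\mathcal{T}(\mu_i)_{1}^{3}$ the last factor $\mathcal{R}_{aL}(\mu_i-\mu_L)$ acts first on $e_3^{(a)}\otimes e_3^{(L)}$ (the auxiliary space enters $\mathcal{T}(\mu_i)_{1}^{3}$ in the state $e_3$, and the $L$-th slot is $e_3$), which is the $a$-eigenvector $e_3\otimes e_3$; it therefore contributes only the scalar $a(\mu_i-\mu_L)$ and leaves both spaces unchanged, whereupon the remaining factors are precisely the length-$(L-1)$ operator $\mathcal{E}(\mu_i)$. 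Thus $\mathcal{E}(\mu_i)(\ket{\chi}\otimes e_3)=a(\mu_i-\mu_L)\,(\mathcal{E}(\mu_i)\ket{\chi})\otimes e_3$. Applying the previous display, then this identity $L-1$ times, and finally pairing with $\bra{\bar 0}=\big(\bra{e_3}\big)^{\otimes L}$, one obtains
\[
\mathcal{Z}(\mu_1,\dots,\mu_L)=\Big(\prod_{j=1}^{L}a(\mu_L-\mu_j)\Big)\Big(\prod_{i=1}^{L-1}a(\mu_i-\mu_L)\Big)\,\mathcal{Z}(\mu_1,\dots,\mu_{L-1}).
\]
The prefactor is exactly $\prod a(\mu_i-\mu_j)$ over all pairs with $i=L$ or $j=L$, so the claim follows by induction, the base case $L=1$ being $\mathcal{Z}(\mu_1)=\bra{e_3}\mathcal{E}(\mu_1)\ket{e_1}=a(0)\,\bra{e_3}E_{3,1}\ket{e_1}=a(0)=a(\mu_1-\mu_1)$.

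I do not foresee a serious obstacle: the only substantive input is the pointwise identity $\mathcal{R}(0)=a(0)\mathcal{P}$ together with the highest/lowest-weight behaviour of $e_1^{\otimes k}$ and $e_3^{\otimes k}$. The point requiring care is the direction in which the permutation $\mathcal{P}_{aL}$ is carried through the ordered product \eqref{mono} (and the attendant relabelling $\mathcal{R}_{aj}\mapsto\mathcal{R}_{Lj}$), and checking that the $L$-th tensor factor genuinely decouples at \emph{every} subsequent step, which is what makes the recursion an exact identity rather than a reduction up to lower-weight contributions.
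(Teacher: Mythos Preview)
Your argument is correct. The computation $\mathcal{R}(0)=a(0)\mathcal{P}$ is exactly the starting point the paper uses, and your algebraic manipulations with $\mathcal{P}_{aL}$ and the subsequent decoupling of the $L$-th site are sound; the recursion you obtain is valid and the induction closes.

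The route, however, differs from the paper's. The paper does not recurse: after noting $\mathcal{R}(0)=a(0)\mathcal{P}$ it argues directly that under the specialization $\lambda_i=\mu_{L-i+1}$ (then invoking the $\mathfrak{S}_L$-symmetry of $\mathcal{Z}$) there is a \emph{single} admissible graph $\mathcal{G}^*$, namely the fully ``antidiagonal'' configuration in which each row $i$ contributes the sequence $a\to\cdots\to a\to d_{1,3}(0)\to a\to\cdots\to a$; together with $d_{1,3}(0)=a(0)$ this gives the product $\prod_{i,j} a(\mu_i-\mu_j)$ in one stroke. Your approach instead peels off one column at a time, in the spirit of Korepin's recursive characterization for the six-vertex model: the regularity condition turns $\mathcal{E}(\mu_L)$ into a rank-one map that flips the $L$-th quantum slot and leaves it frozen, and the $a$-eigenvalue property of $e_3\otimes e_3$ then decouples that slot from all remaining $\mathcal{E}(\mu_i)$. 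What you gain is an explicit operator identity $\mathcal{E}(\mu_L)\ket{0}=\big(\prod_{j}a(\mu_L-\mu_j)\big)\,e_1^{\otimes(L-1)}\otimes e_3$ and a clean size-reduction formula, which could be useful if one wanted to study nearby specializations; what the paper's argument gains is that the entire frozen lattice configuration is exhibited at once, with no induction hypothesis to manage.
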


\begin{proof}
We first notice the $\mathcal{R}$-matrix characterized by \eqref{rmat}, \eqref{bw} and \eqref{IKFZa}-\eqref{IKFZb} is regular in the sense that $\mathcal{R}_{ij}(\mu_j, \mu_j) = \mathcal{R}_{ij}(0) = a(0) \mathcal{P}_{ij}$ with $\mathcal{P}_{ij} \colon \V_i \otimes \V_j \to \V_j \otimes \V_i$ the permutation operator. In particular, as for $\mathcal{R}_{ij}(\mu_j, \mu_j)$, we only have contributions from the nine graphs $g_{i,j}$ depicted in Figures \ref{fig:A}, \ref{fig:C} and the anti-diagonal elements of \Figref{fig:D}. 
Hence, given the domain-wall boundary conditions, we have the single allowed sequence
\<
\{ a(\mu_{L-i+1} - \mu_1) \} &\to& \{ a(\mu_{L-i+1} - \mu_2) \} \to \dots \to \{ a(\mu_{L-i+1} - \mu_{L-i}) \} \to \{ d_{1,3}(\mu_{L-i+1} - \mu_{L-i+1}) \}  \nonumber \\ 
&\to& \{ a(\mu_{L-i+1} - \mu_{L-i+2}) \} \to \{ a(\mu_{L-i+1} - \mu_{L-i+3}) \}  \to \dots \to \{ a(\mu_{L-i+1} - \mu_L) \} \nonumber \\
\>
for the $i$-th row of our lattice under the specialization $\lambda_i = \mu_{L-i+1}$. For the sake of clarity, such configuration is diagrammatically represented in \Figref{fig:init} for $L=4$. Then, considering $d_{1,3} (0) = a(0)$ and that $\mathcal{Z}$ is a symmetric function, we immediately obtain formula \eqref{Z0}.
\end{proof}

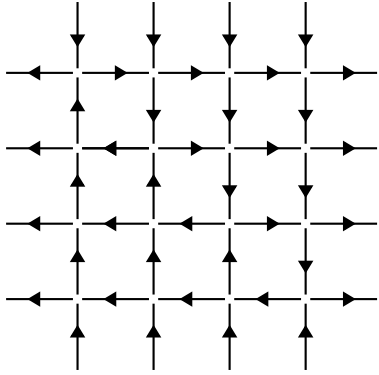
\begin{figure} \centering
\scalebox{1}{
\begin{tikzpicture}
\foreach \x in {0,...,5}{
    \foreach \y in {0,...,5}{
        \path (\x,\y) node[circle, fill=white, scale=0.3](v\x\y) {};
}}  

\draw [postaction=decorate,decoration={markings, mark=at position 0.6cm with {\arrow[black]{Triangle}}}, thick]  (v11) --(v01);
\draw [postaction=decorate,decoration={markings, mark=at position 0.6cm with {\arrow[black]{Triangle}}}, thick]  (v11) -- (v12);
\draw [postaction=decorate,decoration={markings, mark=at position 0.6cm with {\arrow[black]{Triangle}}}, thick]  (v10) -- (v11);
\draw [postaction=decorate,decoration={markings, mark=at position 0.6cm with {\arrow[black]{Triangle}}}, thick]  (v21) -- (v11);
 
\draw [postaction=decorate,decoration={markings, mark=at position 0.6cm with {\arrow[black]{Triangle}}}, thick]  (v21) -- (v22);
\draw [postaction=decorate,decoration={markings, mark=at position 0.6cm with {\arrow[black]{Triangle}}}, thick]  (v20) -- (v21);
\draw [postaction=decorate,decoration={markings, mark=at position 0.6cm with {\arrow[black]{Triangle}}}, thick]  (v31) -- (v21);

\draw [postaction=decorate,decoration={markings, mark=at position 0.6cm with {\arrow[black]{Triangle}}}, thick]  (v41) -- (v31);
\draw [postaction=decorate,decoration={markings, mark=at position 0.6cm with {\arrow[black]{Triangle}}}, thick]  (v31) -- (v32);
\draw [postaction=decorate,decoration={markings, mark=at position 0.6cm with {\arrow[black]{Triangle}}}, thick]  (v30) -- (v31);
 
\draw [postaction=decorate,decoration={markings, mark=at position 0.6cm with {\arrow[black]{Triangle}}}, thick]  (v41) -- (v51);
\draw [postaction=decorate,decoration={markings, mark=at position 0.6cm with {\arrow[black]{Triangle}}}, thick]  (v42) -- (v41);
\draw [postaction=decorate,decoration={markings, mark=at position 0.6cm with {\arrow[black]{Triangle}}}, thick]  (v40) -- (v41);

\draw [postaction=decorate,decoration={markings, mark=at position 0.6cm with {\arrow[black]{Triangle}}}, thick]  (v12) -- (v02);
\draw [postaction=decorate,decoration={markings, mark=at position 0.6cm with {\arrow[black]{Triangle}}}, thick]  (v12) -- (v13);
\draw [postaction=decorate,decoration={markings, mark=at position 0.6cm with {\arrow[black]{Triangle}}}, thick]  (v22) -- (v12);
 
\draw [postaction=decorate,decoration={markings, mark=at position 0.6cm with {\arrow[black]{Triangle}}}, thick]  (v32) -- (v22);
\draw [postaction=decorate,decoration={markings, mark=at position 0.6cm with {\arrow[black]{Triangle}}}, thick]  (v22) -- (v23);

\draw [postaction=decorate,decoration={markings, mark=at position 0.6cm with {\arrow[black]{Triangle}}}, thick]  (v33) -- (v32);
\draw [postaction=decorate,decoration={markings, mark=at position 0.6cm with {\arrow[black]{Triangle}}}, thick]  (v32) -- (v42);

\draw [postaction=decorate,decoration={markings, mark=at position 0.6cm with {\arrow[black]{Triangle}}}, thick]  (v43) -- (v42);
\draw [postaction=decorate,decoration={markings, mark=at position 0.6cm with {\arrow[black]{Triangle}}}, thick]  (v42) -- (v52);
 
\draw [postaction=decorate,decoration={markings, mark=at position 0.6cm with {\arrow[black]{Triangle}}}, thick]  (v13) -- (v03);
\draw [postaction=decorate,decoration={markings, mark=at position 0.6cm with {\arrow[black]{Triangle}}}, thick]  (v23) -- (v13);

\draw [postaction=decorate,decoration={markings, mark=at position 0.6cm with {\arrow[black]{Triangle}}}, thick]  (v13) -- (v14);
\draw [postaction=decorate,decoration={markings, mark=at position 0.6cm with {\arrow[black]{Triangle}}}, thick]  (v23) -- (v13);

\draw [postaction=decorate,decoration={markings, mark=at position 0.6cm with {\arrow[black]{Triangle}}}, thick]  (v24) -- (v23);
\draw [postaction=decorate,decoration={markings, mark=at position 0.6cm with {\arrow[black]{Triangle}}}, thick]  (v23) -- (v33);

\draw [postaction=decorate,decoration={markings, mark=at position 0.6cm with {\arrow[black]{Triangle}}}, thick]  (v34) -- (v33);
\draw [postaction=decorate,decoration={markings, mark=at position 0.6cm with {\arrow[black]{Triangle}}}, thick]  (v33) -- (v43);

\draw [postaction=decorate,decoration={markings, mark=at position 0.6cm with {\arrow[black]{Triangle}}}, thick]  (v44) -- (v43);
\draw [postaction=decorate,decoration={markings, mark=at position 0.6cm with {\arrow[black]{Triangle}}}, thick]  (v43) -- (v53);

\draw [postaction=decorate,decoration={markings, mark=at position 0.6cm with {\arrow[black]{Triangle}}}, thick]  (v14) -- (v04);
\draw [postaction=decorate,decoration={markings, mark=at position 0.6cm with {\arrow[black]{Triangle}}}, thick]  (v15) -- (v14);
\draw [postaction=decorate,decoration={markings, mark=at position 0.6cm with {\arrow[black]{Triangle}}}, thick]  (v14) -- (v24);

\draw [postaction=decorate,decoration={markings, mark=at position 0.6cm with {\arrow[black]{Triangle}}}, thick]  (v25) -- (v24);
\draw [postaction=decorate,decoration={markings, mark=at position 0.6cm with {\arrow[black]{Triangle}}}, thick]  (v24) -- (v34);

\draw [postaction=decorate,decoration={markings, mark=at position 0.6cm with {\arrow[black]{Triangle}}}, thick]  (v35) -- (v34);
\draw [postaction=decorate,decoration={markings, mark=at position 0.6cm with {\arrow[black]{Triangle}}}, thick]  (v34) -- (v44);

\draw [postaction=decorate,decoration={markings, mark=at position 0.6cm with {\arrow[black]{Triangle}}}, thick]  (v45) -- (v44);
\draw [postaction=decorate,decoration={markings, mark=at position 0.6cm with {\arrow[black]{Triangle}}}, thick]  (v44) -- (v54);
\end{tikzpicture}}
\caption{Graph $\mathcal{G}^{*}$ in $\mathcal{Z}$ for $L=4$ and $\lambda_i = \mu_i$.}
\label{fig:init}
\end{figure}

\section{Algebraic-Functional approach} \label{sec:AFM}

The previous section was devoted to the description of nineteen-vertex models with domain-wall boundary conditions. In particular, we have discussed properties of the partition functions $\mathcal{Z}$, $\mathcal{F}$ and $\bar{\mathcal{F}}$ defined by \eqref{PF} with 
boundary conditions \eqref{dw} and \eqref{bvF}. In addition to that, we have also rewritten the aforementioned partition functions in an operatorial manner in \eqref{ZFF}. 
The evaluation of such partition functions in closed form is certainly an important step towards the exact computation of physical properties of those systems; 
and finding constraints fully characterizing $\mathcal{Z}$, $\mathcal{F}$ and $\bar{\mathcal{F}}$ is our present goal.

In the subsections \ref{sec:SYM} through \ref{sec:Z0}  we have derived a series of properties satisfied by our partition functions but they are clearly not enough to characterize the aforementioned quantities. Therefore, in order to present enough constraints fixing our partition functions, here we intend to formulate the above problem in terms of functional equations along the lines of the  Algebraic-Functional (AF) method.
This framework has offered an alternative to Izergin and Korepin's method \cite{Izergin_1987, Korepin_1982} in the case of six-vertex models and generalizations; and it is based on the characterization of quantities of interest by means of functional equations originated from the Yang-Baxter algebra. 
Roughly speaking, the AF method is a framework aiming to convert algebraic relations into functional equations describing quantities of interest and, as for integrable vertex models, the so-called Yang-Baxter algebra is a suitable algebraic structure for that end.

\subsection{Yang-Baxter algebra} \label{sec:YBA}

Let us write $\mathcal{L}_i \in \mathrm{End}(\V_i)$ for a matrix with non-commutative entries fulfilling the relation
\[ \label{yba}
\mathcal{R}_{ij} (x-y) \; \mathcal{L}_i (x) \mathcal{L}_j (y) = \mathcal{L}_j (y) \mathcal{L}_i (x) \; \mathcal{R}_{ij} (x-y) \qquad \in \mathrm{End}(\V_i \otimes \V_j)
\]
with $\mathcal{R}_{ij}$ previously defined in \Secref{sec:19V}. We then refer to \eqref{yba} as Yang-Baxter algebra and use 
$\mathscr{A} (\mathcal{R})$ to denote it. The partition functions discussed in \Secref{sec:19V} will then be related to a particular representation of $\mathscr{A} (\mathcal{R})$ consisting of a pair $(\V_{\mathcal{Q}}, \mathcal{L})$, where the entries of
$\mathcal{L}$ are meromorphic functions on $\C$ with values in $\mathrm{End}(\V_{\mathcal{Q}})$.

\subsection{Modules over $\mathscr{A} (\mathcal{R})$} \label{sec:mod}
Let $\mathcal{L}_k \colon \C \to \mathrm{End}(\V_k \otimes \V_{\mathcal{Q}})$ for $k=i,j$ be meromorphic and recall
$\V_{\mathcal{Q}}$ introduced in \Secref{sec:ALG} is a diagonalizable module. Then, due to the Yang-Baxter equation \eqref{ybe}, the
pair $(\V_{\mathcal{Q}}, \mathcal{L})$ fulfills \eqref{yba} in $\mathrm{End}(\V_i \otimes \V_j \otimes \V_{\mathcal{Q}})$ with $\mathcal{L}$ identified with the monodromy matrix $\mathcal{T}$ defined in \eqref{mono}.

\subsection{Singular vectors} \label{sec:sing}

Next we shall describe a class of \emph{singular vectors} in the $\mathscr{A} (\mathcal{R})$-module $(\V_{\mathcal{Q}}, \mathcal{L})$.
More precisely, we refer to singular vectors as the non-zero elements $v_0 \in \V_{\mathcal{Q}}$ such that $\mathcal{C}_i (\lambda) v_0 = 0$ $(i=1,2,3)$ for 
all $\lambda \in \C$, with \eqref{abcd} taken into account.
In addition to that we assign the weight $\left(\Lambda_1 (\lambda), \Lambda_2 (\lambda) , \Lambda_3 (\lambda) \right)$ to an element $v\in \V_{\mathcal{Q}}$ satisfying $\mathcal{A}_i (\lambda) v = \Lambda_i (\lambda) v$ for $i=1,2,3$. The above definitions, together with the $\mathscr{A} (\mathcal{R})$-module discussed in \Secref{sec:mod}, pave the way to introduce a weight-module constituted of singular vectors $v_0 \in \V_{\mathcal{Q}}$ with weight
$\left(\Lambda_1 (\lambda), \Lambda_2 (\lambda) , \Lambda_3 (\lambda) \right)$. Hence, due to \eqref{mono} and \eqref{abcd}, one can show  the vector
$\ket{0}$ previously defined in \Secref{sec:DWBC} belongs to the above defined weight-module with weight
\[ \label{L123}
\Lambda_1 (\lambda) \coloneqq \prod_{j=1}^L a(\lambda - \mu_j) \; , \quad \Lambda_2 (\lambda) \coloneqq \prod_{j=1}^L b(\lambda - \mu_j) \quad \text{and} \quad
\Lambda_3 (\lambda) \coloneqq \prod_{j=1}^L d_{3,3} (\lambda - \mu_j) \; .
\]

Dual singular vectors and dual weight-modules are defined in a similar way. For instance, we call dual singular vectors the elements $v_0^{\dagger} \in \V_{\mathcal{Q}}^{\dagger}$ such that $v_0^{\dagger} \; \mathcal{C}_i (\lambda) = 0$ $(i=1,2,3)$ for all $\lambda \in \C$. Moreover, 
in order to characterize a dual weight-module, we assign the weight $\left(\bar{\Lambda}_1 (\lambda), \bar{\Lambda}_2 (\lambda) , \bar{\Lambda}_3 (\lambda) \right)$ to any element $v^{\dagger} \in \V_{\mathcal{Q}}^{\dagger}$ satisfying $v^{\dagger} \; \mathcal{A}_i (\lambda) = \bar{\Lambda}_i (\lambda) \; v^{\dagger}$ for $i=1,2,3$. In this way we define a dual weight-module consisting of dual singular vectors with the aforementioned weight. It is then clear the dual vector $\bra{\bar{0}}$ defined in \Secref{sec:DWBC} belongs to the dual weight-module with weight
\[ \label{bL123}
\bar{\Lambda}_1 (\lambda) \coloneqq \prod_{j=1}^L d_{1,1} (\lambda - \mu_j) \; , \quad \bar{\Lambda}_2 (\lambda) \coloneqq \prod_{j=1}^L b(\lambda - \mu_j) \quad \text{and} \quad
\bar{\Lambda}_3 (\lambda) \coloneqq \prod_{j=1}^L a_{3,3} (\lambda - \mu_j) \; .
\]

\subsection{Higher-order relations} \label{sec:HOR}

The most notable use of the algebra $\mathscr{A}(\mathcal{R})$ is within the context of the Algebraic Bethe Ansatz (ABA) method \cite{Sk_Faddeev_1979, Takh_Faddeev_1979} used for the diagonalization of transfer matrices associated to integrable vertex models with periodic boundary conditions.
The AF method can then be regarded as an alternative use of the algebra $\mathscr{A}(\mathcal{R})$ and it has found fruitful soil in models with domain-wall boundary conditions. The ABA in its turn has been formulated for a variety of models, including nineteen-vertex models \cite{Izergin_Korepin_1981},
and it is then natural to speculate if the AF method can also be applied in nineteen-vertex models with domain-wall boundaries. 
In order to investigate such possibility it is important to first examine the particular algebraic relations in $\mathscr{A}(\mathcal{R})$ associated to nineteen-vertex models. In that case $\mathscr{A}(\mathcal{R})$ is an algebra over $\C$ generated by elements $\mathcal{A}_i$, $\mathcal{B}_i$
and $\mathcal{C}_i$ defined in \eqref{abcd}. In particular, $\mathscr{A}(\mathcal{R})$ will be regarded as a matrix algebra with elements in 
$\C \llbracket x, x^{-1} \rrbracket \otimes \mathrm{End}(\V_{\mathcal{Q}})$. We shall also use $\mathscr{A}_2 (\mathcal{R})$ to denote the Yang-Baxter algebra $\mathscr{A}(\mathcal{R})$ in order to emphasize it is a quadratic algebra.
Next we introduce $\mathscr{M}_n \coloneqq \{ \mathcal{A}_i , \mathcal{B}_i , \mathcal{C}_i \mid i=1,2,3 \}(x_{n-1})$ such that one can define
$\mathscr{A}_n (\mathcal{R}) \simeq \mathscr{A}_{n-1} (\mathcal{R}) \otimes \mathscr{M}_n / \mathscr{A}_2 (\mathcal{R})$ for $n >2$ through the repeated use of $\mathscr{A}(\mathcal{R})$. We refer to $\mathscr{A}_n (\mathcal{R})$ as higher-order Yang-Baxter algebra and in what follows we shall look for relations in $\mathscr{A}_n (\mathcal{R})$ suitable for the implementation of the AF method.
For instance, the elimination of terms of the form $\mathcal{E} (\lambda) \mathcal{A}(\mu)$ using the first relation in \eqref{G3} and the second relation in \eqref{G2} gives us the following relation in $\mathscr{A}_2 (\mathcal{R})$,
\< \label{AE}
&& \left[\frac{d_{1,1}(\lambda_1 - \lambda_0)}{d_{1,2}(\lambda_1 - \lambda_0)} - \frac{d_{3,1}(\lambda_1 - \lambda_0)}{ d_{3,2}(\lambda_1 - \lambda_0)} \right] \mathcal{A}(\lambda_0) \mathcal{E}(\lambda_1) + \frac{a(\lambda_1 - \lambda_0)}{ d_{3,2}(\lambda_1 - \lambda_0)} \mathcal{A}(\lambda_1) \mathcal{E}(\lambda_0) = \nonumber \\
&& \left[ \frac{d_{2,1}(\lambda_1 - \lambda_0)}{ d_{3,2}(\lambda_1 - \lambda_0)} - \frac{d_{2,2}(\lambda_1 - \lambda_0)}{ d_{3,2}(\lambda_1 - \lambda_0)} \frac{d_{1,1}(\lambda_1 - \lambda_0)}{d_{1,2}(\lambda_1 - \lambda_0)}  \right] \mathcal{B}(\lambda_0) \mathcal{B}(\lambda_1) \nonumber \\
&& \qquad\qquad\qquad\qquad\qquad\qquad\qquad\qquad\qquad + \; \frac{a(\lambda_1 - \lambda_0)}{ d_{3,2}(\lambda_1 - \lambda_0)} \frac{d_{1,1}(\lambda_1 - \lambda_0)}{d_{1,2}(\lambda_1 - \lambda_0)}  \mathcal{B}(\lambda_1) \mathcal{B}(\lambda_0) \; . 
\>

\begin{rema} \label{ind}
Another relation exhibiting the same structure of \eqref{AE} but with apparently different coefficients can also be obtained by combining the first relation in \eqref{G3} and the third relation in \eqref{G1}. However, a closer look at the coefficients of this alternative relation shows it is not linearly independent from \eqref{AE}.
\end{rema}

Along the same lines employed in the derivation of \eqref{AE}, we also notice the elimination of $\mathcal{A} (\lambda) \mathcal{E}(\mu)$ in between the first relation of \eqref{G3}
and the third relation in \eqref{G1} leaves us with the relation

\< \label{EA}
&& \frac{a(\lambda_1 - \lambda_0)}{d_{3,3}(\lambda_1 - \lambda_0)} \mathcal{E}(\lambda_1) \mathcal{A}(\lambda_0) + \left[ \frac{d_{1,2}(\lambda_1 - \lambda_0)}{d_{3,2}(\lambda_1 - \lambda_0)} - \frac{d_{1,3}(\lambda_1 - \lambda_0)}{d_{3,3}(\lambda_1 - \lambda_0)} \right] \mathcal{E}(\lambda_0) \mathcal{A}(\lambda_1) =  \nonumber \\
&& \frac{a(\lambda_1 - \lambda_0)}{d_{3,2}(\lambda_1 - \lambda_0)} \mathcal{B}(\lambda_1) \mathcal{B}(\lambda_0) + \left[ \frac{d_{2,3}(\lambda_1 - \lambda_0)}{d_{3,3}(\lambda_1 - \lambda_0)} - \frac{d_{2,2}(\lambda_1 - \lambda_0)}{d_{3,2}(\lambda_1 - \lambda_0)} \right] \mathcal{B}(\lambda_0) \mathcal{B}(\lambda_1) \; .
\>

\begin{rema}
The elimination of $\mathcal{A} (\lambda) \mathcal{E}(\mu)$ using the first relation of \eqref{G3} and the second relation in \eqref{G2} also yields a relation with the same structure of \eqref{EA}. However, similarly to Remark \ref{ind}, the resulting relation is not linearly independent from \eqref{EA}.
\end{rema}

Both relations \eqref{AE} and \eqref{EA} live in $\mathscr{A}_2 (\mathcal{R})$ and, as it will become clear later on, we shall need relations in 
$\mathscr{A}_{L+1} (\mathcal{R})$ which can be exploited along the AF method. Such relations can then be obtained by letting 
\eqref{AE} to act on $\PROD{2}{j}{L} \mathcal{E}(\lambda_j)$ from the left; and by letting \eqref{EA} to act on the same product of operators from the right.
In this way, considering the third relation in \eqref{G3}, we are left with the following relations in $\mathscr{A}_{L+1} (\mathcal{R})$, 
\< \label{AEL}
&& \left[\frac{d_{1,1}(\lambda_1 - \lambda_0)}{d_{1,2}(\lambda_1 - \lambda_0)} - \frac{d_{3,1}(\lambda_1 - \lambda_0)}{ d_{3,2}(\lambda_1 - \lambda_0)} \right] \mathcal{A}(\lambda_0) \mathop{\overrightarrow\prod}\limits_{\substack{0 \le j \le L \\ j \neq 0}} \mathcal{E}(\lambda_j)
+ \frac{a(\lambda_1 - \lambda_0)}{ d_{3,2}(\lambda_1 - \lambda_0)} \mathcal{A}(\lambda_1) \mathop{\overrightarrow\prod}\limits_{\substack{0 \le j \le L \\ j \neq 1}} \mathcal{E}(\lambda_j) = \nonumber \\
&& \left[ \frac{d_{2,1}(\lambda_1 - \lambda_0)}{ d_{3,2}(\lambda_1 - \lambda_0)} - \frac{d_{2,2}(\lambda_1 - \lambda_0)}{ d_{3,2}(\lambda_1 - \lambda_0)} \frac{d_{1,1}(\lambda_1 - \lambda_0)}{d_{1,2}(\lambda_1 - \lambda_0)}  \right] \mathcal{B}(\lambda_0) \mathcal{B}(\lambda_1) \PROD{2}{j}{L} \mathcal{E}(\lambda_j) \nonumber \\
&& \qquad\qquad\qquad\qquad\qquad\qquad\qquad + \; \frac{a(\lambda_1 - \lambda_0)}{ d_{3,2}(\lambda_1 - \lambda_0)} \frac{d_{1,1}(\lambda_1 - \lambda_0)}{d_{1,2}(\lambda_1 - \lambda_0)}  \mathcal{B}(\lambda_1) \mathcal{B}(\lambda_0) \PROD{2}{j}{L} \mathcal{E}(\lambda_j) \nonumber \\
\>
\< \label{EAL}
&& \frac{a(\lambda_1 - \lambda_0)}{d_{3,3}(\lambda_1 - \lambda_0)} \mathop{\overrightarrow\prod}\limits_{\substack{0 \le j \le L \\ j \neq 0}} \mathcal{E}(\lambda_j) \; \mathcal{A}(\lambda_0) + \left[ \frac{d_{1,2}(\lambda_1 - \lambda_0)}{d_{3,2}(\lambda_1 - \lambda_0)} - \frac{d_{1,3}(\lambda_1 - \lambda_0)}{d_{3,3}(\lambda_1 - \lambda_0)} \right] \mathop{\overrightarrow\prod}\limits_{\substack{0 \le j \le L \\ j \neq 1}} \mathcal{E}(\lambda_j) \; \mathcal{A}(\lambda_1) =  \nonumber \\
&& \left[ \frac{d_{2,3}(\lambda_1 - \lambda_0)}{d_{3,3}(\lambda_1 - \lambda_0)} - \frac{d_{2,2}(\lambda_1 - \lambda_0)}{d_{3,2}(\lambda_1 - \lambda_0)} \right] \PROD{2}{j}{L} \mathcal{E}(\lambda_j) \; \mathcal{B}(\lambda_0) \mathcal{B}(\lambda_1) \nonumber \\
&& \qquad\qquad\qquad\qquad\qquad\qquad\qquad\qquad\qquad\quad + \; \frac{a(\lambda_1 - \lambda_0)}{d_{3,2}(\lambda_1 - \lambda_0)} \PROD{2}{j}{L} \mathcal{E}(\lambda_j) \; \mathcal{B}(\lambda_1) \mathcal{B}(\lambda_0) \; . 
\>
Now one can readily recognize the product of operators appearing in the RHS of \eqref{AEL} and \eqref{EAL} as the same operators characterizing the partition functions $\mathcal{F}$ and $\bar{\mathcal{F}}$ according to \eqref{ZFF}. As it will become clear later on, we then proceed by looking for relations in $\mathscr{A}_{L+1} (\mathcal{R})$ allowing us to express $\mathcal{B}(\lambda) \mathcal{B}(\mu) \PROD{2}{j}{L} \mathcal{E}(\lambda_j)$ in terms of 
$\PROD{2}{j}{L} \mathcal{E}(\lambda_j) \; \mathcal{B}(\bar{\lambda}) \mathcal{B}(\bar{\mu})$ and vice-versa. Such task can be accomplished through the use of suitable commutation relations in $\mathscr{S}_{\mathcal{A}, \mathcal{B} , \mathcal{E}}$. For instance, we shall use the third relation in \eqref{G2} and the last two relations in \eqref{G3}. Such commutation rules read
\< \label{GEB}
\mathcal{B}(\lambda_0) \mathcal{E}(\lambda_1) &=& \frac{a(\lambda_1-\lambda_0)}{b(\lambda_1-\lambda_0)} \mathcal{E}(\lambda_1)  \mathcal{B}(\lambda_0) -\frac{c(\lambda_1-\lambda_0)}{b(\lambda_1-\lambda_0)} \mathcal{E}(\lambda_0)  \mathcal{B}(\lambda_1) \nonumber \\
\mathcal{E}(\lambda_0) \mathcal{B}(\lambda_1) &=& \frac{a(\lambda_1-\lambda_0)}{b(\lambda_1-\lambda_0)} \mathcal{B}(\lambda_1)  \mathcal{E}(\lambda_0) -\frac{\bar{c}(\lambda_1-\lambda_0)}{b(\lambda_1-\lambda_0)} \mathcal{B}(\lambda_0)  \mathcal{E}(\lambda_1) \nonumber \\
\mathcal{E}(\lambda_0) \mathcal{E}(\lambda_1) &=& \mathcal{E}(\lambda_1) \mathcal{E}(\lambda_0)  \; ,
\>
and they provide neat exchange relations between the operators $\mathcal{B}$ and $\mathcal{E}$. Moreover, one can notice \eqref{GEB} are essentially the same commutation relations found in the six-vertex model \cite{Korepin_book}. Therefore, we can readily use the known results for the six-vertex model to find the following relations
in $\mathscr{A}_{n+2} (\mathcal{R})$,
\< \label{26}
\mathcal{B}(\lambda_{n+1}) \mathcal{B}(\lambda_0) \PROD{1}{j}{n} \mathcal{E}(\lambda_j) &=& \sum_{j=0}^n \sum_{\substack{k=0 \\ k \neq j}}^{n+1} \mathcal{M}_j^{(n)} \; \mathcal{N}_{j,k}^{(n)}  \mathop{\overrightarrow\prod}\limits_{\substack{0 \le l \le n+1 \\ l \neq j, k}} \mathcal{E}(\lambda_l) \; \mathcal{B}(\lambda_{k}) \mathcal{B}(\lambda_{j})  \label{FFb} \\
\PROD{1}{j}{n} \mathcal{E}(\lambda_j) \; \mathcal{B}(\lambda_0) \mathcal{B}(\lambda_{n+1}) &=& \sum_{j=0}^n \sum_{\substack{k=0 \\ k \neq j}}^{n+1} \bar{\mathcal{M}}_j^{(n)}  \; \bar{\mathcal{N}}_{j,k}^{(n)}  \; \mathcal{B}(\lambda_{j}) \mathcal{B}(\lambda_{k})  \mathop{\overrightarrow\prod}\limits_{\substack{0 \le l \le n+1 \\ l \neq j, k}} \mathcal{E}(\lambda_l)  \; . \label{FbF}
\>
The coefficients in \eqref{FFb} are in their turn given by
\<
\mathcal{M}_j^{(n)}  &\coloneqq& \begin{cases} 
\displaystyle \prod_{l=1}^n \frac{a(\lambda_l - \lambda_0)}{b(\lambda_l - \lambda_0)} \qquad\qquad\qquad\qquad\quad\;\;\; j=0 \\ 
\displaystyle - \frac{c(\lambda_j - \lambda_0)}{b(\lambda_j - \lambda_0)} \prod_{\substack{l=1 \\ l \neq j}}^n \frac{a(\lambda_l - \lambda_j)}{b(\lambda_l - \lambda_j)} \qquad\qquad\;\; 1 \leq j \leq n \end{cases} \nonumber \\
\mathcal{N}_{j,k}^{(n)}  &\coloneqq& \begin{cases} 
\displaystyle \prod_{\substack{l=0 \\ l \neq j}}^n \frac{a(\lambda_l - \lambda_{n+1})}{b(\lambda_l - \lambda_{n+1})} \qquad\qquad\qquad\qquad\quad\!\! k=n+1 \\ 
\displaystyle - \frac{c(\lambda_k - \lambda_{n+1})}{b(\lambda_k - \lambda_{n+1})} \prod_{\substack{l=0 \\ l \neq j, k}}^n \frac{a(\lambda_l - \lambda_k)}{b(\lambda_l - \lambda_k)} \qquad\quad 0 \leq k \leq n ; \; k \neq j \end{cases}
\>
while the ones in \eqref{FbF} read
\<
\bar{\mathcal{M}}_j^{(n)}  &\coloneqq& \begin{cases} 
\displaystyle \prod_{l=1}^n \frac{a(\lambda_0 - \lambda_l)}{b(\lambda_0 - \lambda_l)} \qquad\qquad\qquad\qquad\quad\;\;\; j=0 \\ 
\displaystyle - \frac{\bar{c}(\lambda_0 - \lambda_j)}{b(\lambda_0 - \lambda_j)} \prod_{\substack{l=1 \\ l \neq j}}^n \frac{a(\lambda_j - \lambda_l)}{b(\lambda_j - \lambda_l)} \qquad\qquad\;\; 1 \leq j \leq n \end{cases} \nonumber \\
\bar{\mathcal{N}}_{j,k}^{(n)}  &\coloneqq& \begin{cases} 
\displaystyle \prod_{\substack{l=0 \\ l \neq j}}^n \frac{a(\lambda_{n+1} - \lambda_l)}{b(\lambda_{n+1} - \lambda_l)} \qquad\qquad\qquad\qquad\quad\!\! k=n+1 \\ 
\displaystyle - \frac{\bar{c}(\lambda_{n+1} - \lambda_k)}{b(\lambda_{n+1} - \lambda_k)} \prod_{\substack{l=0 \\ l \neq j, k}}^n \frac{a(\lambda_k - \lambda_l)}{b(\lambda_k - \lambda_l)} \qquad\quad 0 \leq k \leq n ; \; k \neq j \end{cases} \; . 
\>
As previously remarked, our approach will require relations of type \eqref{FFb} and \eqref{FbF} in  $\mathscr{A}_{L+1} (\mathcal{R})$. The latter can then be obtained from \eqref{FFb} and \eqref{FbF} by setting $n=L-1$.

\subsection{The functional $\Phi$} \label{sec:PHI}

After having established suitable higher-order algebraic relations in $\mathscr{A}_{n+1} (\mathcal{R})$, the next step within the AF approach is to find a linear functional $\Phi \colon \mathscr{A}_{n+1} (\mathcal{R}) \to \C [ \lambda_0^{\pm 1}, \lambda_1^{\pm 1}, \dots , \lambda_n^{\pm 1}]$ allowing us to write functional equations for quantities of interest. In particular, we would like the functional $\Phi$ to satisfy the property
\[ 
\Phi(J_n) = \omega_J (\lambda_0 , \lambda_1 , \dots , \lambda_{n-1}) \; \Phi(J_{n-1}) 
\]
for certain elements $J_n \subseteq \mathscr{A}_{n} (\mathcal{R})$ and a fixed meromorphic functions $\omega_J$.
As for the characterization of the partition functions $\mathcal{Z}$, $\mathcal{F}$ and $\bar{\mathcal{F}}$; we shall employ the higher-order relations \eqref{AEL}, \eqref{EAL}, \eqref{FFb} and \eqref{FbF} in $\mathscr{A}_{L+1} (\mathcal{R})$. Moreover, a closer inspection of such relations suggests considering the following realization of the functional $\Phi$, namely
\[ \label{funk}
\Phi (J_{L+1}) = \bra{\bar{0}} J_{L+1} \ket{0} ,
\]
with vectors $\bra{\bar{0}}$ and $\ket{0}$ previously defined in \Secref{sec:DWBC}. In the next section we shall then precise the functional equations obtained from the application of \eqref{funk} on \eqref{AEL}, \eqref{EAL}, \eqref{FFb} and \eqref{FbF}.

\section{Functional equations} \label{sec:PROP}

This section is concerned with the explicit construction and analysis of functional relations satisfied by the partition functions $\mathcal{Z}$, $\mathcal{F}$ and $\bar{\mathcal{F}}$; using the AF method described in the previous section. However, it is fair to say in the previous section we have only collected the ingredients required for the derivation of the anticipated functional equations. Here we intend to bring that procedure to conclusion by combining all those ingredients in a suitable way. For that it is convenient to introduce the following extra conventions. 

Let us write $\gen{X} \coloneqq \{ \lambda_1, \lambda_2 , \dots , \lambda_L \}$ for fixed $L \in \Z_{\leq 1}$ and additionally introduce the short-hand notation 
\[
\gen{X}_{\alpha_1, \alpha_2 , \dots , \alpha_l}^{\beta_1, \beta_2 , \dots , \beta_m} \coloneqq \gen{X} \cup \{ \lambda_{\beta_1} , \lambda_{\beta_2} , \dots , \lambda_{\beta_m} \} \backslash \{ \lambda_{\alpha_1} , \lambda_{\alpha_2} , \dots , \lambda_{\alpha_l} \} \; .
\]
Moreover, we shall also use 
\begin{align} \label{LOM}
\Lambda (\lambda) &\coloneqq    \prod_{j=1}^L a(\lambda - \mu_j)  & \bar{\Lambda} (\lambda) &\coloneqq  \prod_{j=1}^L d_{1,1} (\lambda - \mu_j) \nonumber \\
\omega(\lambda) &\coloneqq  \prod_{j=1}^L (e^{2 \lambda} - e^{2 \mu_j} \zeta ) & \bar{\omega}(\lambda) & \coloneqq \prod_{j=1}^L (e^{2 \lambda} - e^{2 \mu_j} ) \; .
\end{align}
In this way, we can construct the following functional relations. 

\begin{lem} \label{ZH}
The functions $\mathcal{Z}$ and $\mathcal{H}$ are related through the equation
\[ \label{zh}
\Omega_0 \; \mathcal{Z}(\gen{X}) + \Omega_1 \; \mathcal{Z}(\gen{X}_1^0) = \Upsilon_0 \; \mathcal{H} (\gen{X}_1 \mid \lambda_0 , \lambda_1 )      
+ \Upsilon_1 \; \mathcal{H} (\gen{X}_1 \mid \lambda_1 , \lambda_0 )
\]
with coefficients
\begin{align} \label{OM}
\Omega_0 &\coloneqq \frac{a(\lambda_1 - \lambda_0)}{d_{3,3}(\lambda_1 - \lambda_0)} \Lambda (\lambda_0) &\Omega_1 &\coloneqq \left[ \frac{d_{1,2}(\lambda_1 - \lambda_0)}{d_{3,2}(\lambda_1 - \lambda_0)} - \frac{d_{1,3}(\lambda_1 - \lambda_0)}{d_{3,3}(\lambda_1 - \lambda_0)} \right]  \Lambda (\lambda_1)  \nonumber \\
\Upsilon_0 &\coloneqq \frac{a(\lambda_1 - \lambda_0)}{d_{3,2}(\lambda_1 - \lambda_0)} \omega(\lambda_0) &\Upsilon_1 &\coloneqq \left[ \frac{d_{2,3}(\lambda_1 - \lambda_0)}{d_{3,3}(\lambda_1 - \lambda_0)} - \frac{d_{2,2}(\lambda_1 - \lambda_0)}{d_{3,2}(\lambda_1 - \lambda_0)} \right] \omega(\lambda_1) \; . \nonumber \\
\end{align}
\end{lem}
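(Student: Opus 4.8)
The plan is to obtain \eqref{zh} by applying the linear functional $\Phi$ of \eqref{funk}, namely $\Phi(J_{L+1}) = \bra{\bar{0}} J_{L+1} \ket{0}$, to both sides of the higher-order relation \eqref{EAL} in $\mathscr{A}_{L+1}(\mathcal{R})$. The key observation is that the boundary vectors $\bra{\bar{0}}$ and $\ket{0}$ entering $\Phi$ are precisely those appearing in the operatorial expressions \eqref{ZFF} for $\mathcal{Z}$, $\mathcal{F}$ and $\bar{\mathcal{F}}$, so that each monomial in $\mathcal{A}$, $\mathcal{B}$, $\mathcal{E}$ on either side of \eqref{EAL} is mapped by $\Phi$ onto one of those partition functions, up to scalar eigenvalue factors. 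Recall also that $\mathcal{A} = \mathcal{A}_1$, $\mathcal{B} = \mathcal{B}_1$ and $\mathcal{E} = \mathcal{B}_2$.

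For the left-hand side of \eqref{EAL} I would use that $\ket{0}$ is a singular vector carrying the weight \eqref{L123}, so that $\mathcal{A}(\lambda_0)\ket{0} = \Lambda(\lambda_0)\ket{0}$ and $\mathcal{A}(\lambda_1)\ket{0} = \Lambda(\lambda_1)\ket{0}$ with $\Lambda$ as in \eqref{LOM}. The surviving strings of $\mathcal{E}$-operators, $\PROD{1}{j}{L}\mathcal{E}(\lambda_j)$ and $\mathcal{E}(\lambda_0)\PROD{2}{j}{L}\mathcal{E}(\lambda_j)$, once placed between $\bra{\bar{0}}$ and $\ket{0}$ and freely reordered using the commutativity \eqref{EE}, are identified with $\mathcal{Z}(\gen{X})$ and $\mathcal{Z}(\gen{X}_1^0)$ respectively by the first line of \eqref{ZFF} (note $\gen{X}_1^0 = \{\lambda_0, \lambda_2, \dots, \lambda_L\}$). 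This reproduces the left-hand side of \eqref{zh} together with the coefficients $\Omega_0$ and $\Omega_1$ recorded in \eqref{OM}.

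For the right-hand side of \eqref{EAL}, the two operator products $\PROD{2}{j}{L}\mathcal{E}(\lambda_j)\,\mathcal{B}(\lambda_0)\mathcal{B}(\lambda_1)$ and $\PROD{2}{j}{L}\mathcal{E}(\lambda_j)\,\mathcal{B}(\lambda_1)\mathcal{B}(\lambda_0)$, sandwiched between $\bra{\bar{0}}$ and $\ket{0}$ and again reordered via \eqref{EE}, are recognized from the second line of \eqref{ZFF} as $\mathcal{F}(\gen{X}_1 \mid \lambda_1, \lambda_0)$ and $\mathcal{F}(\gen{X}_1 \mid \lambda_0, \lambda_1)$, where $\gen{X}_1 = \{\lambda_2,\dots,\lambda_L\}$. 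It then remains to invoke Corollary \ref{corFH}, which lets me substitute $\mathcal{F}(\gen{X}_1 \mid v_1, v_2) = \omega(v_1)\,\mathcal{H}(\gen{X}_1 \mid v_1, v_2)$; after the identification $y_1 = e^{2 v_1}$ this $\omega$ is the one in \eqref{LOM}, so the two terms pick up the prefactors $\omega(\lambda_1)$ and $\omega(\lambda_0)$, yielding the coefficients $\Upsilon_1$ and $\Upsilon_0$ of \eqref{OM}. Collecting the four contributions gives \eqref{zh}.

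Because every step is $\C$-linear in the operators, the argument involves no genuine difficulty; the only thing to watch is the bookkeeping. In particular one must carefully keep track of which $\mathcal{E}$-string in \eqref{EAL} corresponds to which of the index sets $\gen{X}$, $\gen{X}_1^0$, $\gen{X}_1$, and be consistent about the orderings hidden in the $\mathop{\overrightarrow\prod}$ symbol versus those used in \eqref{ZFF} — here the commutation relation \eqref{EE} is exactly what makes all such orderings irrelevant. I expect this routine index-matching, rather than any conceptual obstacle, to be the most error-prone part of writing the proof out in full.
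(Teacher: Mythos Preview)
Your proposal is correct and follows exactly the approach of the paper: apply the functional $\Phi$ of \eqref{funk} to the higher-order relation \eqref{EAL}, use that $\ket{0}$ is a singular vector with $\mathcal{A}(\lambda)\ket{0}=\Lambda(\lambda)\ket{0}$, identify the resulting expectation values with $\mathcal{Z}$ and $\mathcal{F}$ via \eqref{ZFF}, and then replace $\mathcal{F}$ by $\omega\,\mathcal{H}$ through Corollary~\ref{corFH}. Your bookkeeping of the arguments (which $\mathcal{B}$ goes with $v_1$ versus $v_2$, and which index sets $\gen{X}$, $\gen{X}_1^0$, $\gen{X}_1$ arise) is consistent with the paper's conventions, so there is nothing to correct.
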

\begin{proof}
The proof is straightforward and equation \eqref{zh} follows from the application of the functional $\Phi$ defined in \eqref{funk} on the higher-order relation \eqref{EAL}. Also, in order to obtain \eqref{zh}, one also needs to recall $\ket{0}$ and $\ket{\bar{0}}$ are singular vectors, with properties described in \Secref{sec:sing}, and use formulae \eqref{ZFF} and \eqref{FH}.
\end{proof}

\begin{lem} \label{ZHb}
Similarly to Lemma \ref{ZH}, there also exists a functional relation between $\mathcal{Z}$ and $\bar{\mathcal{H}}$, namely
\[ \label{zhb} 
\bar{\Omega}_0 \; \mathcal{Z}(\gen{X}) + \bar{\Omega}_1 \; \mathcal{Z}(\gen{X}_1^0) = \bar{\Upsilon}_0 \; \bar{\mathcal{H}} (\lambda_1 , \lambda_0 \mid \gen{X}_1)+ \bar{\Upsilon}_1 \; \bar{\mathcal{H}} (\lambda_0 , \lambda_1 \mid \gen{X}_1)  \; ,
\]
with coefficients reading
\< \label{bOM}
\bar{\Omega}_0 &\coloneqq& \left[\frac{d_{1,1}(\lambda_1 - \lambda_0)}{d_{1,2}(\lambda_1 - \lambda_0)} - \frac{d_{3,1}(\lambda_1 - \lambda_0)}{ d_{3,2}(\lambda_1 - \lambda_0)} \right] \bar{\Lambda} (\lambda_0) \qquad 
\bar{\Upsilon}_1 \coloneqq \frac{a(\lambda_1 - \lambda_0)}{ d_{3,2}(\lambda_1 - \lambda_0)} \frac{d_{1,1}(\lambda_1 - \lambda_0)}{d_{1,2}(\lambda_1 - \lambda_0)} \bar{\omega}(\lambda_1) \nonumber \\
\bar{\Upsilon}_0 &\coloneqq&  \left[ \frac{d_{2,1}(\lambda_1 - \lambda_0)}{ d_{3,2}(\lambda_1 - \lambda_0)} - \frac{d_{2,2}(\lambda_1 - \lambda_0)}{ d_{3,2}(\lambda_1 - \lambda_0)} \frac{d_{1,1}(\lambda_1 - \lambda_0)}{d_{1,2}(\lambda_1 - \lambda_0)}  \right] \bar{\omega}(\lambda_0) \qquad
\bar{\Omega}_1 \coloneqq  \frac{a(\lambda_1 - \lambda_0)}{ d_{3,2}(\lambda_1 - \lambda_0)} \bar{\Lambda} (\lambda_1) \; . \nonumber \\
\>
\end{lem}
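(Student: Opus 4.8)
Mirroring the proof of Lemma~\ref{ZH}, the plan is to apply the linear functional $\Phi$ of \eqref{funk}, namely $\Phi(J_{L+1}) = \bra{\bar 0} J_{L+1}\ket 0$, to the \emph{companion} higher-order relation \eqref{AEL} (the one built around $\mathcal{A}(\lambda)\mathcal{E}(\mu)$-elimination and eventually acting on $\PROD{2}{j}{L}\mathcal{E}(\lambda_j)$ from the left), rather than to \eqref{EAL}. Since $\Phi$ is linear, it suffices to evaluate it term by term on the two sides of \eqref{AEL}.

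On the left-hand side the operators $\mathcal{A}(\lambda_0)$ and $\mathcal{A}(\lambda_1)$ sit to the \emph{left} of the respective $\mathcal{E}$-products, hence under $\Phi$ they act on the dual singular vector $\bra{\bar 0}$. I would use the dual weight recorded in \Secref{sec:sing}, that is $\bra{\bar 0}\mathcal{A}(\lambda) = \bra{\bar 0}\mathcal{A}_1(\lambda) = \bar\Lambda(\lambda)\bra{\bar 0}$ with $\bar\Lambda(\lambda) = \prod_{j=1}^L d_{1,1}(\lambda-\mu_j)$ as in \eqref{LOM}. This leaves pure $\mathcal{E}$-strings sandwiched between $\bra{\bar 0}$ and $\ket 0$; by \eqref{ZFF} and the commutativity \eqref{EE} of the $\mathcal{E}$'s, the first term gives $\bar\Lambda(\lambda_0)\,\mathcal{Z}(\gen{X})$ and the second $\bar\Lambda(\lambda_1)\,\mathcal{Z}(\gen{X}_1^0)$; multiplying by the scalar prefactors of \eqref{AEL} reproduces $\bar\Omega_0$ and $\bar\Omega_1$ of \eqref{bOM}.

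For the right-hand side, the operator strings $\mathcal{B}(\lambda_0)\mathcal{B}(\lambda_1)\PROD{2}{j}{L}\mathcal{E}(\lambda_j)$ and $\mathcal{B}(\lambda_1)\mathcal{B}(\lambda_0)\PROD{2}{j}{L}\mathcal{E}(\lambda_j)$ are, on comparison with the third line of \eqref{ZFF} and using again \eqref{EE}, exactly the strings whose $\Phi$-image is $\bar{\mathcal{F}}(\lambda_1,\lambda_0\mid\gen{X}_1)$ and $\bar{\mathcal{F}}(\lambda_0,\lambda_1\mid\gen{X}_1)$ respectively (the leftmost $\mathcal{B}$-argument playing the role of the variable $v_2$). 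I would then insert the factorization $\bar{\mathcal{F}}(v_1,v_2\mid\cdots) = \bar\omega(v_2)\,\bar{\mathcal{H}}(v_1,v_2\mid\cdots)$ of Corollary~\ref{corFHb}, so the first string contributes a factor $\bar\omega(\lambda_0)$ and the second a factor $\bar\omega(\lambda_1)$; absorbing these into the scalar prefactors of \eqref{AEL} yields $\bar\Upsilon_0$ and $\bar\Upsilon_1$ of \eqref{bOM}, and collecting the four pieces gives \eqref{zhb}.

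The computation is routine; I expect the only delicate points to be purely clerical, namely (i) keeping track of which spectral parameter occupies which slot of $\bar{\mathcal{F}}(\,\cdot\,,\cdot\mid\cdot\,)$ — in particular that the auxiliary factor $\bar\omega$ is read off from the \emph{second} $v$-variable (equivalently from the leftmost $\mathcal{B}$), since this is the source of the asymmetry $\bar\omega(\lambda_0)$ versus $\bar\omega(\lambda_1)$ in $\bar\Upsilon_0,\bar\Upsilon_1$ — and (ii) confirming, as already flagged in Remark~\ref{ind} for the sibling relation, that \eqref{AEL} is the appropriate linearly independent relation to feed into $\Phi$. No structural obstacle is anticipated beyond this bookkeeping.
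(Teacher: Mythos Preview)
Your proposal is correct and follows essentially the same approach as the paper: apply the functional $\Phi$ of \eqref{funk} to the higher-order relation \eqref{AEL}, use the dual weight property of $\bra{\bar 0}$ from \Secref{sec:sing} on the left-hand side to produce the $\bar\Lambda$-factors in $\bar\Omega_0,\bar\Omega_1$, and identify the right-hand side with $\bar{\mathcal F}$ via \eqref{ZFF} before invoking the factorization \eqref{FbHb} to extract the $\bar\omega$-factors in $\bar\Upsilon_0,\bar\Upsilon_1$. Your careful tracking of which spectral parameter sits in the $v_2$-slot of $\bar{\mathcal F}$ is exactly the bookkeeping the paper leaves implicit.
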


\begin{proof}
Along the same lines used in the proof of Lemma \ref{ZH}, we simply apply the functional $\Phi$ on the higher-order relation \eqref{AEL}, keeping in mind formulae \eqref{ZFF} and \eqref{FbHb}.
\end{proof}

Some comments are in order at this point. For instance, Lemma \ref{ZH} establishes a relation between a set of functions $\mathcal{Z}$ and a set of functions $\mathcal{H}$. In this way, once the function $\mathcal{H}$ is known, we would have a functional equation involving solely the partition function $\mathcal{Z}$. This would be the optimal situation resembling the equations found in the six-vertex model through the AF method. Also, we stress here the same remarks apply when considering Lemma \ref{ZHb} and the function $\bar{\mathcal{H}}$.
However, we are in the situation that neither $\mathcal{H}$ nor $\bar{\mathcal{H}}$ are known a priori, and this fact makes our analysis significantly more involving.
Therefore, in order to circumvent the aforementioned difficulty, we shall consider the following strategy. Taking into account the functional relations described in Lemmas \ref{ZH} and \ref{ZHb}, one can notice we would have an effective relation between functions $\mathcal{Z}$ (with different spectral parameters) in case we were able to relate the functions $\mathcal{H}$ and $\bar{\mathcal{H}}$ appearing respectively in Lemmas \ref{ZH} and \ref{ZHb}. In this way, one could regard $\mathcal{H}$ and $\bar{\mathcal{H}}$ as \emph{auxiliary functions} and an schematic representation of this strategy can be found in \Figref{fig:schema}. Fortunately, the sought relation between $\mathcal{H}$ and $\bar{\mathcal{H}}$ can also be obtained using the AF method and we shall refer to the resulting system of equations as $\gen{ZH}$-system.

\begin{lem} \label{HHb}
The following relations hold 
\< \label{hhb}
\omega(\lambda_L) \; \mathcal{H}(\gen{X}_L \mid \lambda_L , \lambda_0 ) &=& \sum_{j=0}^{L-1} \sum_{\substack{k=0 \\ k \neq j}}^{L} \bar{\mathcal{M}}_j^{(L-1)} \; \bar{\mathcal{N}}_{j,k}^{(L-1)} \; \bar{\omega}(\lambda_j) \; \bar{\mathcal{H}} ( \lambda_k , \lambda_j \mid \gen{X}_{j,k}^{0} ) \nonumber \\
\bar{\omega}(\lambda_L) \; \bar{\mathcal{H}}(\lambda_0 , \lambda_L \mid \gen{X}_L) &=& \sum_{j=0}^{L-1} \sum_{\substack{k=0 \\ k \neq j}}^{L} \mathcal{M}_j^{(L-1)} \; \mathcal{N}_{j,k}^{(L-1)} \; \omega(\lambda_j) \; \mathcal{H} (\gen{X}_{j,k}^{0} \mid \lambda_j , \lambda_k) \; . \nonumber \\
\>
\end{lem}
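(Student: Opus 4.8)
The plan is to derive each of the two relations in \eqref{hhb} by applying the functional $\Phi$ from \eqref{funk} to the higher-order relations \eqref{FFb} and \eqref{FbF} specialized at $n = L-1$. Concretely, for the first relation in \eqref{hhb}, I would take \eqref{FbF} with $n = L-1$, relabel $\lambda_{n+1} = \lambda_L$, and sandwich both sides between $\bra{\bar 0}$ and $\ket{0}$. On the left one gets $\bra{\bar 0} \PROD{1}{j}{L-1}\mathcal{E}(\lambda_j)\,\mathcal{B}(\lambda_0)\mathcal{B}(\lambda_L)\ket{0}$. Using the commutativity \eqref{EE} of the $\mathcal{E}$'s together with \eqref{ZFF}, this is $\bar{\mathcal F}(\lambda_L,\lambda_0 \mid \gen{X}_L)$ — wait, one must be careful about the order convention in \eqref{ZFF}; matching the arguments of $\bar{\mathcal F}$ as defined there, the left side is $\mathcal F(\gen{X}_L \mid \lambda_L,\lambda_0)$ read off via the second line of \eqref{ZFF}. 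Then by Corollary \ref{corFH}, $\mathcal F(\gen{X}_L \mid \lambda_L,\lambda_0) = \omega(\lambda_L)\,\mathcal H(\gen{X}_L \mid \lambda_L,\lambda_0)$, which is exactly the left-hand side of the first equation in \eqref{hhb}. On the right side of \eqref{FbF}, each summand contributes $\bar{\mathcal M}_j^{(L-1)}\bar{\mathcal N}_{j,k}^{(L-1)}$ times $\bra{\bar 0}\mathcal{B}(\lambda_j)\mathcal{B}(\lambda_k)\mathop{\overrightarrow\prod}_{l\neq j,k}\mathcal{E}(\lambda_l)\ket{0}$, and the latter is precisely $\bar{\mathcal F}(\lambda_k,\lambda_j \mid \gen{X}_{j,k}^0)$ by the third line of \eqref{ZFF}; applying Corollary \ref{corFHb} rewrites this as $\bar\omega(\lambda_j)\,\bar{\mathcal H}(\lambda_k,\lambda_j \mid \gen{X}_{j,k}^0)$. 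Summing over $j,k$ with the stated ranges reproduces the right-hand side of the first relation in \eqref{hhb}.

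The second relation in \eqref{hhb} is obtained dually, by applying $\Phi$ to \eqref{FFb} with $n = L-1$ and $\lambda_{n+1} = \lambda_L$. Here the left side $\bra{\bar 0}\mathcal{B}(\lambda_L)\mathcal{B}(\lambda_0)\PROD{1}{j}{L-1}\mathcal{E}(\lambda_j)\ket{0}$ is read via \eqref{ZFF} as $\bar{\mathcal F}(\lambda_L,\lambda_0 \mid \gen{X}_L)$, which Corollary \ref{corFHb} turns into $\bar\omega(\lambda_L)\,\bar{\mathcal H}(\lambda_L,\lambda_0 \mid \gen{X}_L)$; matching the argument convention of $\bar{\mathcal H}$ as it appears on the left of the second line of \eqref{hhb} (where the $v$-slot entries are $\lambda_0$ then $\lambda_L$) requires invoking the partial symmetry noted in \Secref{sec:SYM} or, more simply, reading off the correct slot from \eqref{ZFF} directly — this bookkeeping is the one genuinely fiddly point. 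On the right, each summand gives $\mathcal{M}_j^{(L-1)}\mathcal{N}_{j,k}^{(L-1)}$ times $\bra{\bar 0}\mathop{\overrightarrow\prod}_{l\neq j,k}\mathcal{E}(\lambda_l)\,\mathcal{B}(\lambda_k)\mathcal{B}(\lambda_j)\ket{0}$, which is $\mathcal F(\gen{X}_{j,k}^0 \mid \lambda_j,\lambda_k)$, and Corollary \ref{corFH} produces $\omega(\lambda_j)\,\mathcal H(\gen{X}_{j,k}^0 \mid \lambda_j,\lambda_k)$. This is the right-hand side of the second relation.

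The steps I expect to be genuinely routine are the algebra of passing from operator identities to scalar identities via $\Phi$, since $\Phi$ is linear and the coefficients $\mathcal{M}_j^{(n)}$, $\mathcal{N}_{j,k}^{(n)}$, $\bar{\mathcal{M}}_j^{(n)}$, $\bar{\mathcal{N}}_{j,k}^{(n)}$ are scalars that pull straight out of the bracket. The one place that demands care — and the only real obstacle — is the index and argument bookkeeping: making sure that the roles of $\lambda_0$ and $\lambda_L$ (and of the "barred" versus "unbarred" slots of $\mathcal{F},\bar{\mathcal{F}}$, i.e. which pair of operators sits adjacent to $\ket{0}$ versus $\bra{\bar 0}$) are tracked consistently through \eqref{ZFF}, \eqref{FH}–\eqref{omega} and \eqref{FbHb}–\eqref{omegaB}, and that the summation ranges $0\le j\le L-1$, $0\le k\le L$, $k\neq j$ coming out of \eqref{FFb}–\eqref{FbF} match those displayed in \eqref{hhb} after the relabelling $n+1\mapsto L$. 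Once the conventions are pinned down, both relations in \eqref{hhb} fall out immediately, so the proof itself can be stated in a couple of sentences referring to \eqref{FFb}, \eqref{FbF}, the singular-vector weights of \Secref{sec:sing}, and Corollaries \ref{corFH} and \ref{corFHb}.
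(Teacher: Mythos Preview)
Your proposal is correct and follows essentially the same route as the paper: apply $\Phi$ to \eqref{FFb} and \eqref{FbF} at $n=L-1$, then identify the resulting brackets via \eqref{ZFF} and rewrite using Corollaries \ref{corFH} and \ref{corFHb}. One small correction: the singular-vector weights from \Secref{sec:sing} play no role here (they enter Lemmas \ref{ZH} and \ref{ZHb}, not this one), and your momentary wobble over the argument order in $\bar{\mathcal F}$ is resolved exactly as you say by reading \eqref{ZFF} carefully --- $\bra{\bar 0}\mathcal{B}(\lambda_L)\mathcal{B}(\lambda_0)\cdots\ket{0}$ is $\bar{\mathcal F}(\lambda_0,\lambda_L\mid\gen{X}_L)$, no symmetry needed.
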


\begin{proof}
We apply the functional $\Phi$ defined in \eqref{funk} on the higher-order relations \eqref{26} with $n=L-1$. Then we are able to recognize the functions $\mathcal{H}$ and $\bar{\mathcal{H}}$ with the help of \eqref{ZFF}, \eqref{FH} and \eqref{FbHb}.
\end{proof}

\begin{rema}
The inspection of equations \eqref{hhb} for small values of $L$ shows the first equation is immediately satisfied upon the substitution of the second equation and vice-versa.
Therefore, equations \eqref{hhb} are not linearly independent and one can consider only one of them.
\end{rema}

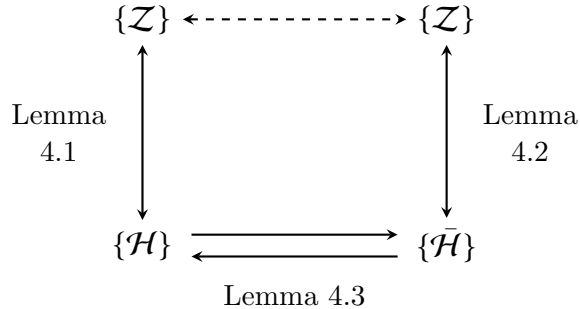
\begin{figure} \centering
\scalebox{1}{
\begin{tikzpicture}[>=stealth]
\path (0,0) node[rectangle,align=center] (p1) {{\normalsize $\{ \mathcal{H}  \}$}}
      (0,3) node[rectangle,align=center] (p2) {{\normalsize $\{ \mathcal{Z}  \}$}}
      (4,0) node[rectangle,align=center] (p3) {{\normalsize $\{ \bar{\mathcal{H}}  \}$}}
      (4,3) node[rectangle,align=center] (p4) {{\normalsize $\{ \mathcal{Z}  \}$}};

\path (0.5,0) node[rectangle,align=center] (pp1) {}
      (3.5,0) node[rectangle,align=center] (pp3) {}
      (2,-0.65) node[rectangle,align=center] (qh) {{\small Lemma \ref{HHb}}}
      (-1.1,1.5) node[rectangle,align=center] (qhz) {{\small Lemma} \\ {\small \ref{ZH}}}
      (5.1,1.5) node[rectangle,align=center] (qhbz) {{\small Lemma} \\ {\small \ref{ZHb}}};

\draw [->, thick] (pp1.north east) -- (pp3.north west);
\draw [->, thick]  (pp3.south west) -- (pp1.south east);
\draw [<->, thick]  (p1.north) -- (p2.south);
\draw [<->, thick]  (p3.north) -- (p4.south);
\draw [<->, thick, dashed]  (p2.east) -- (p4.west);
\end{tikzpicture}}
\caption{Schematic representation of the $\gen{ZH}$-system.}
\label{fig:schema}
\end{figure}

\subsection{The $\gen{ZH}$-system} \label{sec:ZH}

Lemmas \ref{ZH}, \ref{ZHb} and \ref{HHb} describe a system of functional equations relating the partition functions with domain-wall boundaries discussed in \Secref{sec:DWBC}. Here we shall refer to the above described system of equations as $\gen{ZH}$-system and, more precisely, it comprises equations \eqref{zh}, \eqref{zhb} and \eqref{hhb}. Our partition functions $\mathcal{Z}$, $\mathcal{F}$ and $\bar{\mathcal{F}}$ satisfy the $\gen{ZH}$-system by construction; however, it is not a priori clear  if the $\gen{ZH}$-system fixes uniquely such functions. Here we claim this is indeed the case; and to support our claim we proceed with a more detailed analysis of the $\gen{ZH}$-system and present explicit solutions for small lattices.  

We start our analysis by recalling the coefficients given in \eqref{OM} and \eqref{bOM} depend solely on the spectral parameters $\lambda_0$ and $\lambda_1$. Hence, in order to emphasize such dependence, we also write $\Omega_i = \Omega_i (\lambda_0, \lambda_1)$, $\bar{\Omega}_i = \bar{\Omega}_i (\lambda_0, \lambda_1)$, $\Upsilon_i = \Upsilon_i (\lambda_0, \lambda_1)$ and  $\bar{\Upsilon}_i = \bar{\Upsilon}_i (\lambda_0, \lambda_1)$. Next, we use Cramer's method to solve the system of equations formed by \eqref{zh} and \eqref{zhb} for $\mathcal{Z}(\gen{X})$ and $\mathcal{Z}(\gen{X}_1^0)$. By doing so we obtain two expressions for the function $\mathcal{Z}$; one depending on the set of variables $\gen{X}$ and another one for $\mathcal{Z}$ depending on $\gen{X}_1^0$. The obtained expressions need to be consistent and denote the same function upon an appropriate renaming of variables.
In this way, we are left with a functional relation between $\mathcal{H}$ and $\bar{\mathcal{H}}$.
More precisely, the resolution of \eqref{zh} and \eqref{zhb} yields the expressions
\< \label{ZX1}
\mathcal{Z} (\gen{X}) &=& \frac{\Upsilon_0 (\lambda_0, \lambda_1) \; \bar{\Omega}_1 (\lambda_0, \lambda_1) }{\mathcal{W}(\lambda_0 , \lambda_1)} \mathcal{H}(\gen{X}_1 \mid \lambda_0 , \lambda_1) + \frac{\Upsilon_1 (\lambda_0, \lambda_1) \; \bar{\Omega}_1 (\lambda_0, \lambda_1) }{\mathcal{W}(\lambda_0 , \lambda_1)} \mathcal{H}(\gen{X}_1 \mid \lambda_1 , \lambda_0) \nonumber \\
&& - \; \frac{\bar{\Upsilon}_0 (\lambda_0, \lambda_1) \; \Omega_1 (\lambda_0, \lambda_1) }{\mathcal{W}(\lambda_0 , \lambda_1)} \bar{\mathcal{H}}(\lambda_1 , \lambda_0 \mid \gen{X}_1) - \frac{\bar{\Upsilon}_1 (\lambda_0, \lambda_1) \; \Omega_1 (\lambda_0, \lambda_1) }{\mathcal{W}(\lambda_0 , \lambda_1)} \bar{\mathcal{H}}(\lambda_0 , \lambda_1 \mid \gen{X}_1)
\>
and 
\< \label{ZX2}
\mathcal{Z} (\gen{X}) &=& \frac{\Omega_0 (\lambda_1, \lambda_{\bar{0}}) \; \bar{\Upsilon}_1 (\lambda_1, \lambda_{\bar{0}}) }{\mathcal{W}(\lambda_1 , \lambda_{\bar{0}})} \bar{\mathcal{H}}(\lambda_1 , \lambda_{\bar{0}} \mid \gen{X}_1) + \frac{\Omega_0 (\lambda_1, \lambda_{\bar{0}}) \; \bar{\Upsilon}_0 (\lambda_1, \lambda_{\bar{0}}) }{\mathcal{W}(\lambda_1 , \lambda_{\bar{0}})} \bar{\mathcal{H}}(\lambda_{\bar{0}}, \lambda_1  \mid \gen{X}_1) \nonumber \\
&& - \; \frac{\bar{\Omega}_0 (\lambda_1, \lambda_{\bar{0}}) \; \Upsilon_1 (\lambda_1, \lambda_{\bar{0}}) }{\mathcal{W}(\lambda_1 , \lambda_{\bar{0}})} \mathcal{H}(\gen{X}_1 \mid \lambda_{\bar{0}} , \lambda_1 ) - \frac{\bar{\Omega}_0 (\lambda_1, \lambda_{\bar{0}}) \; \Upsilon_0 (\lambda_1, \lambda_{\bar{0}}) }{\mathcal{W}(\lambda_1 , \lambda_{\bar{0}})} \mathcal{H}(\gen{X}_1 \mid \lambda_1, \lambda_{\bar{0}} ) 
\>
with
\< \label{WW}
\mathcal{W}(\lambda_0 , \lambda_1) &\coloneqq& \gen{det} \begin{pmatrix} \Omega_0 (\lambda_0, \lambda_1) & \Omega_1 (\lambda_0, \lambda_1) \\ \bar{\Omega}_0 (\lambda_0, \lambda_1) & \bar{\Omega}_1 (\lambda_0, \lambda_1) \end{pmatrix} \nonumber \\
&=&  \frac{[ 1 - q^2 e^{2(\lambda_0 - \lambda_1)} ]^2 [ 1 - \zeta e^{2(\lambda_0 - \lambda_1)} ]^2 }{[ e^{2(\lambda_0 - \lambda_1)} - 1  ]^2 [ q^2 - \zeta e^{2(\lambda_0 - \lambda_1)} ]} \frac{\Lambda(\lambda_0) \bar{\Lambda}(\lambda_1)}{q^{\frac{1}{2}} (q^2 -1)} \nonumber \\
&& - \;  \frac{[ 1 - \zeta e^{2(\lambda_0 - \lambda_1)} ]^2 [ q^4 - \zeta^2 e^{2(\lambda_0 - \lambda_1)} ]^2 }{ \zeta^2 [ e^{2(\lambda_0 - \lambda_1)} - 1  ]^2 [ q^2 - \zeta e^{2(\lambda_0 - \lambda_1)} ]} \frac{\Lambda(\lambda_1) \bar{\Lambda}(\lambda_0)}{q^{\frac{5}{2}} (q^2 -1)} \; .
\>
As for the aforementioned relation between $\mathcal{H}$ and $\bar{\mathcal{H}}$, it is readily obtained through the identity $\mathcal{Z} (\gen{X}) = \mathcal{Z} (\gen{X})$ using \eqref{ZX1} and \eqref{ZX2}. In fact, different relations between $\mathcal{H}$ and $\bar{\mathcal{H}}$ could also be obtained from \eqref{ZX1} and \eqref{ZX2}. For instance, one finds a differential equation through the obvious identity $\partial \mathcal{Z} (\gen{X}) / \partial \lambda_0 = 0$ with $\mathcal{Z} (\gen{X})$ given by
\eqref{ZX1}; and also $\partial \mathcal{Z} (\gen{X}) / \partial \lambda_{\bar{0}} = 0$ using \eqref{ZX2}. Moreover, from \eqref{ZX1} and/or \eqref{ZX2} one can clearly see that the partition function $\mathcal{Z}$ is fixed once we determine the functions $\mathcal{H}$ and $\bar{\mathcal{H}}$. In what follows we shall then discuss the resolution of the $\gen{ZH}$-system for lattice lengths $L=1,2,3$.

\subsection{Case $L=1$} \label{sec:L1}

This is the simplest instance of the $\gen{ZH}$-system and its analysis requires only trivial considerations given the results already obtained in the previous sections. For instance, Eq. \eqref{hhb} for $L=1$ gives
\[ \label{HH1}
\omega(\lambda_1) \; \mathcal{H}(\emptyset \mid \lambda_1, \lambda_0) = \bar{\omega}(\lambda_0) \; \bar{\mathcal{H}} (\lambda_1, \lambda_0 \mid \emptyset) \; ,
\]
which corresponds to the condition $\mathcal{F} = \bar{\mathcal{F}}$ expected from formulae \eqref{ZFF}.
Now, taking into account the polynomial structure described in Corollaries \ref{corFH} and \ref{corFHb}, the identity \eqref{HH1} allows us to conclude
\[ \label{HHH}
\mathcal{H}(\emptyset \mid \lambda_1, \lambda_0) = \kappa \; \bar{\omega}(\lambda_0) \qquad \text{and} \qquad \bar{\mathcal{H}} (\lambda_1, \lambda_0 \mid \emptyset) = \kappa \; \omega(\lambda_1)
\]
with $\kappa \in \C$ a constant yet to be determined. We can then simply substitute formulae \eqref{HHH} in \eqref{ZX1} or \eqref{ZX2} to find the partition function $\mathcal{Z}$ up to the overall multiplicative factor $\kappa$. The latter is then fixed by the initial condition \eqref{Z0} and we end up with
$\mathcal{Z}(\lambda_1) = d_{1,3}(\lambda_1 - \mu_1)$ as expected.

\subsection{Case $L=2$} \label{sec:L2}

Our goal in solving explicitly the $\gen{ZH}$-system for small values of the lattice length $L$ is to provide evidences that our system of equations indeed constrain the partition functions $\mathcal{Z}$, $\mathcal{H}$ and $\bar{\mathcal{H}}$ up to an overall multiplicative constant. In particular, we are interested in showing the existence of unique polynomial solutions with structure described in Lemmas \ref{pol1} and \ref{pol2}.
Therefore, taking into account the aforementioned Lemmas as well as Corollaries \ref{corFH} and \ref{corFHb}, we can write for $L=2$
\< \label{PH}
\mathcal{H}(\lambda_0 \mid \lambda_1, \lambda_2) &=& \sum_{i = 0}^3 \sum_{j = 0}^1 \sum_{k = 0}^3 \phi_{i,j,k} \; e^{2 i \lambda_0 + 2 j \lambda_1 + 2 k \lambda_2} \nonumber \\
\bar{\mathcal{H}}(\lambda_0 , \lambda_1 \mid \lambda_2) &=& \sum_{i = 0}^3 \sum_{j = 0}^1 \sum_{k = 0}^3 \bar{\phi}_{i,j,k} \; e^{2 i \lambda_0 + 2 j \lambda_1 + 2 k \lambda_2} 
\>
with coefficients $\phi_{i,j,k}$ and $\bar{\phi}_{i,j,k}$ still undetermined. Although the partition function $\mathcal{Z}$ is also a polynomial according to Lemma \ref{pol1}, one can directly read it off from formulae \eqref{ZX1} and \eqref{ZX2}. In this way, we can restrict our attention to the functions $\mathcal{H}$ and $\bar{\mathcal{H}}$ in order to solve the $\gen{ZH}$-system. 

Now turning our attention to the $\gen{ZH}$-system, it is then convenient to eliminate the function $\mathcal{Z}$ from 
our problem. The latter task can then be simply accomplished through the identification of \eqref{ZX1} and \eqref{ZX2}.  Therefore, we are left with a reduced system of equations reading
\< \label{ZH2}
&& \frac{\Upsilon_0 (\lambda_0, \lambda_1) \; \bar{\Omega}_1 (\lambda_0, \lambda_1) }{\mathcal{W}(\lambda_0 , \lambda_1)} \mathcal{H}(\lambda_2 \mid \lambda_0 , \lambda_1) + \frac{\Upsilon_1 (\lambda_0, \lambda_1) \; \bar{\Omega}_1 (\lambda_0, \lambda_1) }{\mathcal{W}(\lambda_0 , \lambda_1)} \mathcal{H}(\lambda_2 \mid \lambda_1 , \lambda_0)  \nonumber \\
&& \quad - \; \frac{\bar{\Upsilon}_0 (\lambda_0, \lambda_1) \; \Omega_1 (\lambda_0, \lambda_1) }{\mathcal{W}(\lambda_0 , \lambda_1)} \bar{\mathcal{H}}(\lambda_1 , \lambda_0 \mid \lambda_2) - \frac{\bar{\Upsilon}_1 (\lambda_0, \lambda_1) \; \Omega_1 (\lambda_0, \lambda_1) }{\mathcal{W}(\lambda_0 , \lambda_1)} \bar{\mathcal{H}}(\lambda_0 , \lambda_1 \mid \lambda_2) = \nonumber \\
&& \frac{\Omega_0 (\lambda_1, \lambda_{\bar{0}}) \; \bar{\Upsilon}_1 (\lambda_1, \lambda_{\bar{0}}) }{\mathcal{W}(\lambda_1 , \lambda_{\bar{0}})} \bar{\mathcal{H}}(\lambda_1 , \lambda_{\bar{0}} \mid \lambda_2) + \frac{\Omega_0 (\lambda_1, \lambda_{\bar{0}}) \; \bar{\Upsilon}_0 (\lambda_1, \lambda_{\bar{0}}) }{\mathcal{W}(\lambda_1 , \lambda_{\bar{0}})} \bar{\mathcal{H}}(\lambda_{\bar{0}}, \lambda_1  \mid \lambda_2) \nonumber \\
&& \quad - \;  \frac{\bar{\Omega}_0 (\lambda_1, \lambda_{\bar{0}}) \; \Upsilon_1 (\lambda_1, \lambda_{\bar{0}}) }{\mathcal{W}(\lambda_1 , \lambda_{\bar{0}})} \mathcal{H}(\lambda_2 \mid \lambda_{\bar{0}} , \lambda_1 ) - \frac{\bar{\Omega}_0 (\lambda_1, \lambda_{\bar{0}}) \; \Upsilon_0 (\lambda_1, \lambda_{\bar{0}}) }{\mathcal{W}(\lambda_1 , \lambda_{\bar{0}})} \mathcal{H}(\lambda_2 \mid \lambda_1, \lambda_{\bar{0}} ) \nonumber \\ \nonumber \\
&& \bar{\omega}(\lambda_2) \; \bar{\mathcal{H}}(\lambda_0 , \lambda_2 \mid \lambda_1) = \mathcal{M}_0^{(1)} \omega(\lambda_0) \left[ \mathcal{N}_{0,1}^{(1)} \;  \mathcal{H} (\lambda_2 \mid \lambda_0 , \lambda_1) + \mathcal{N}_{0,2}^{(1)} \; \mathcal{H} (\lambda_1 \mid \lambda_0 , \lambda_2) \right] \nonumber \\
&& \qquad\qquad\qquad\qquad\qquad + \; \mathcal{M}_1^{(1)} \omega(\lambda_1) \left[ \mathcal{N}_{1,0}^{(1)} \; \mathcal{H} (\lambda_2 \mid \lambda_1 , \lambda_0) + \mathcal{N}_{1,2}^{(1)} \; \mathcal{H} (\lambda_0 \mid \lambda_1 , \lambda_2) \right] 
\>
for $L=2$. The latter then involves only the functions $\mathcal{H}$ and $\bar{\mathcal{H}}$.

The system of equations \eqref{ZX2} is linear and the use of expressions \eqref{PH} will consequently yield  linear algebraic equations for the coefficients $\phi_{i,j,k}$ and $\bar{\phi}_{i,j,k}$. In this way, the existence of unique trigonometric polynomials $\mathcal{H}$ and $\bar{\mathcal{H}}$ 
(up to an overall constant) solving \eqref{ZH2} will depend on having enough independent equations constraining our coefficients.  
Moreover, it is important to emphasize that \eqref{ZH2} is a system of equations on the variables $\lambda_0$, $\lambda_1$ and $\lambda_2$; although the coefficients in \eqref{ZH2} also depend on the inhomogeneity parameters $\mu_1$ and $\mu_2$. In this way, by setting $\mu_i$ to particular values we could only decrease the number of linearly independent equations while keeping the same number of coefficients $\phi_{i,j,k}$ and $\bar{\phi}_{i,j,k}$. Therefore, if we already find enough constraints for particular values of the anisotropy parameter, then it is certainly enough for generic values of the latter since we would have at least the same number of equations. 
We then proceed by fixing $\mu_i = 0$ for simplicity reasons and present the coefficients $\phi_{i,j,k}$ and $\bar{\phi}_{i,j,k}$ obtained from the resolution of \eqref{ZX2} in \Appref{app:COEF}.

\subsection{Case $L=3$} \label{sec:L3}

As for the case $L=3$ we proceed along the same lines employed in \Secref{sec:L2} for $L=2$. We then start by considering a reduced version of the 
$\gen{ZH}$-system obtained through the elimination of the partition function $\mathcal{Z}$. More precisely, we consider the equation resulting from the identification of \eqref{ZX1} and \eqref{ZX2}; in addition to the second relation in \eqref{hhb} for $L=3$. The latter equations read
\< \label{ZH3}
&& \frac{\Upsilon_0 (\lambda_0, \lambda_1) \; \bar{\Omega}_1 (\lambda_0, \lambda_1) }{\mathcal{W}(\lambda_0 , \lambda_1)} \mathcal{H}(\lambda_2, \lambda_3 \mid \lambda_0 , \lambda_1) + \frac{\Upsilon_1 (\lambda_0, \lambda_1) \; \bar{\Omega}_1 (\lambda_0, \lambda_1) }{\mathcal{W}(\lambda_0 , \lambda_1)} \mathcal{H}(\lambda_2 , \lambda_3 \mid \lambda_1 , \lambda_0)  \nonumber \\
&& \quad - \; \frac{\bar{\Upsilon}_0 (\lambda_0, \lambda_1) \; \Omega_1 (\lambda_0, \lambda_1) }{\mathcal{W}(\lambda_0 , \lambda_1)} \bar{\mathcal{H}}(\lambda_1 , \lambda_0 \mid \lambda_2 , \lambda_3) - \frac{\bar{\Upsilon}_1 (\lambda_0, \lambda_1) \; \Omega_1 (\lambda_0, \lambda_1) }{\mathcal{W}(\lambda_0 , \lambda_1)} \bar{\mathcal{H}}(\lambda_0 , \lambda_1 \mid \lambda_2 , \lambda_3) = \nonumber \\
&& \frac{\Omega_0 (\lambda_1, \lambda_{\bar{0}}) \; \bar{\Upsilon}_1 (\lambda_1, \lambda_{\bar{0}}) }{\mathcal{W}(\lambda_1 , \lambda_{\bar{0}})} \bar{\mathcal{H}}(\lambda_1 , \lambda_{\bar{0}} \mid \lambda_2 , \lambda_3) + \frac{\Omega_0 (\lambda_1, \lambda_{\bar{0}}) \; \bar{\Upsilon}_0 (\lambda_1, \lambda_{\bar{0}}) }{\mathcal{W}(\lambda_1 , \lambda_{\bar{0}})} \bar{\mathcal{H}}(\lambda_{\bar{0}}, \lambda_1  \mid \lambda_2 , \lambda_3) \nonumber \\
&& \quad - \;  \frac{\bar{\Omega}_0 (\lambda_1, \lambda_{\bar{0}}) \; \Upsilon_1 (\lambda_1, \lambda_{\bar{0}}) }{\mathcal{W}(\lambda_1 , \lambda_{\bar{0}})} \mathcal{H}(\lambda_2 , \lambda_3 \mid \lambda_{\bar{0}} , \lambda_1 ) - \frac{\bar{\Omega}_0 (\lambda_1, \lambda_{\bar{0}}) \; \Upsilon_0 (\lambda_1, \lambda_{\bar{0}}) }{\mathcal{W}(\lambda_1 , \lambda_{\bar{0}})} \mathcal{H}(\lambda_2 , \lambda_3 \mid \lambda_1, \lambda_{\bar{0}} ) \nonumber \\ \nonumber \\
&& \bar{\omega}(\lambda_3) \; \bar{\mathcal{H}}(\lambda_0 , \lambda_3 \mid \lambda_1, \lambda_2) = \nonumber \\ 
&& \mathcal{M}_0^{(2)} \omega(\lambda_0) \left[ \mathcal{N}_{0,1}^{(2)} \;  \mathcal{H} (\lambda_2, \lambda_3 \mid \lambda_0 , \lambda_1) + \mathcal{N}_{0,2}^{(2)} \; \mathcal{H} (\lambda_1, \lambda_3 \mid \lambda_0 , \lambda_2) + \mathcal{N}_{0,3}^{(2)} \; \mathcal{H} (\lambda_1, \lambda_2 \mid \lambda_0 , \lambda_3) \right] \nonumber \\
&& + \; \mathcal{M}_1^{(2)} \omega(\lambda_1) \left[ \mathcal{N}_{1,0}^{(2)} \;  \mathcal{H} (\lambda_2, \lambda_3 \mid \lambda_1 , \lambda_0) + \mathcal{N}_{1,2}^{(2)} \; \mathcal{H} (\lambda_0, \lambda_3 \mid \lambda_1 , \lambda_2) + \mathcal{N}_{1,3}^{(2)} \; \mathcal{H} (\lambda_0, \lambda_2 \mid \lambda_1 , \lambda_3) \right] \nonumber \\
&& + \; \mathcal{M}_2^{(2)} \omega(\lambda_2) \left[ \mathcal{N}_{2,0}^{(2)} \;  \mathcal{H} (\lambda_1, \lambda_3 \mid \lambda_2 , \lambda_0) + \mathcal{N}_{2,1}^{(2)} \; \mathcal{H} (\lambda_0, \lambda_3 \mid \lambda_2 , \lambda_1) + \mathcal{N}_{2,3}^{(2)} \; \mathcal{H} (\lambda_0, \lambda_1 \mid \lambda_2 , \lambda_3) \right] \nonumber \\
\>
and it is worth remarking the function $\mathcal{H}$ is symmetric on the first two arguments; while $\bar{\mathcal{H}}$ is symmetric on the last two arguments. 

Next we write 
\< \label{PH3}
\mathcal{H}(\lambda_0 , \lambda_1 \mid \lambda_2, \lambda_3) &=& \sum_{i = 0}^5 \sum_{j = 0}^5 \sum_{k = 0}^2 \sum_{l = 0}^5 \phi_{i,j,k,l} \; e^{2 i \lambda_0 + 2 j \lambda_1 + 2 k \lambda_2 + 2 l \lambda_3} \nonumber \\
\bar{\mathcal{H}}(\lambda_0 , \lambda_1 \mid \lambda_2 , \lambda_3) &=& \sum_{i = 0}^5 \sum_{j = 0}^2 \sum_{k = 0}^5 \sum_{l = 0}^5 \bar{\phi}_{i,j,k,l} \; e^{2 i \lambda_0 + 2 j \lambda_1 + 2 k \lambda_2 + 2 l \lambda_3} 
\>
in accordance with the polynomial structure discussed in \Secref{sec:POL}. Moreover, for our purposes here we can also set $\mu_j = 0$ using the same arguments discussed in \Secref{sec:L2} for the case $L=2$. In this way, the substitution of \eqref{PH3} in the system of functional equations \eqref{ZH3}
yields a system of linear algebraic equations for the coefficients $\phi_{i,j,k,l}$ and $\bar{\phi}_{i,j,k,l}$. The resolution of the latter shows all coefficients $\phi_{i,j,k,l}$ and $\bar{\phi}_{i,j,k,l}$ are fixed except for one. Due to the large amount of coefficients present in this case we have preferred not to display the solutions as we did for the case $L=2$. However, our results for the case $L=3$ also corroborates our claim that the $\gen{ZH}$-system is sufficient  to characterize the partition function $\mathcal{Z}$ up to an overall multiplicative factor.

\section{Concluding remarks} \label{sec:CONCL}

The main result of this paper is the system of functional equations formed by \eqref{zh}, \eqref{zhb} and \eqref{hhb} describing partition functions of two integrable nineteen-vertex models, namely the IK and FZ models, with three different types of domain-wall boundary conditions. We refer to such system of equations as $\gen{ZH}$-system and it provides a relation between the partition functions $\mathcal{Z}$, $\mathcal{F}$ and $\bar{\mathcal{F}}$ defined in \eqref{ZFF} in algebraic manner. 
A schematic representation of the $\gen{ZH}$-system can also be found in \Figref{fig:schema}. 

The Yang-Baxter algebra attached to nineteen-vertex models is the origin of the $\gen{ZH}$-system; and the derivation of the latter system of equations follows the AF method previously devised for six-vertex models. Although the idea employed here is essentially the same as the one used in the case of the six-vertex model; the derivation of the $\gen{ZH}$-system still encompasses some additional mechanisms.
For instance, one can notice the similar role played by the functions $\Lambda$ ($\bar{\Lambda}$) and $\omega$ ($\bar{\omega}$) defined in
\eqref{LOM} in the coefficients of \eqref{zh} and \eqref{zhb}. In the case of the six-vertex model we only have the presence of terms with the same origin as $\Lambda$ and $\bar{\Lambda}$; which are direct consequences of the weight-modules discussed in \Secref{sec:sing}. The terms $\omega$ and $\bar{\omega}$, in their turn, arises from the existence of simple zeroes as shown by Lemmas \ref{zeroF} and \ref{zeroFb}.

At first look the structure of the $\gen{ZH}$-system seems to be completely different from the functional equations obtained for the six-vertex model. However, there are still important similarities worth remarking. For instance, the relations between the functions
$\mathcal{H}$ and $\bar{\mathcal{H}}$ stated in Lemma \ref{HHb} can be regarded as a \emph{doubled} and \emph{inhomogeneous} version of the six-vertex model's equations. This is essentially due to the commutation relations \eqref{GEB}; which can be recognized as the same
relations appearing in the six-vertex model.

Although the partition functions $\mathcal{Z}$, $\mathcal{F}$ and $\bar{\mathcal{F}}$ satisfy the $\gen{ZH}$-system by construction, it is not \emph{a priori} clear if our system of equations is indeed capable of fixing the aforementioned quantities uniquely. In fact, we have not presented a rigorous proof of the latter property in this work and leave it as a conjecture supported by complementary results. For instance, in \Secref{sec:L1} through \Secref{sec:L3} we have discussed the explicit resolution of the $\gen{ZH}$-system for 
lattice lengths $L=1,2,3$; taking into account the polynomial structure expected from our partition functions.
This explicit analysis for small lattice lengths shows the $\gen{ZH}$-system is indeed capable of fixing our partition functions up to an overall multiplicative constant. The latter can then be fixed by evaluating any of our three partition functions at particular values of the spectral parameters. In \Secref{sec:Z0} we have then found the specialization $\mathcal{Z}(\mu_1, \mu_2, \dots, \mu_L)$ can be easily obtained and use \eqref{Z0} to fix the overall constant. The latter is not a fundamental quantity from the Statistical Mechanics perspective; but it becomes relevant, for instance, for possible applications in Enumerative Combinatorics similar to the counting of Alternating-Sign-Matrices \cite{Kuperberg_1996}.

Moreover, the explicit inspection of the $\gen{ZH}$-system for $L=2$ presented in \Secref{sec:L2} and \Appref{app:COEF} shows an interesting difference between the IK and FZ models considered in this work. For instance, one can notice several coefficients of the functions $\mathcal{H}$ and 
$\bar{\mathcal{H}}$ vanish for the FZ model, while they are all different from zero for the IK model. Such functions are building blocks of the partition
function $\mathcal{Z}$, as it can be seen from formulae \eqref{ZX1} and \eqref{ZX2}, and this feature might justify the possibility of expressing 
$\mathcal{Z}$ for the FZ model as a determinant according to the work \cite{Caradoc_2006}.
As far as the general solution of the $\gen{ZH}$-system is concerned, this problem has eluded us so far but we hope the methods put forward in \cite{Galleas_2012, Galleas_2013} and \cite{Galleas_2016a, Galleas_2016b} to shed some light into possible multiple contour integral or determinantal solutions.

\section{Acknowledgements} \label{sec:ACK}
The authors thank N. Beisert for discussions and comments.

\bibliographystyle{alpha}
\bibliography{references}

\appendix
%
%
%
%

\section{The sub-algebra $\mathscr{S}_{\mathcal{A}, \mathcal{B} , \mathcal{E}}$} \label{app:SUB}
 
 As for nineteen-vertex models described by the $\mathcal{R}$-matrix \eqref{bw}, the associated Yang-Baxter algebra comprises eighty-one commutation relations involving the entries of the monodromy matrix \eqref{abcd}. Here, however, we are interested only in the sub-algebra spanned by the generators
 $\mathcal{A}$, $\mathcal{B}$ and $\mathcal{E}$. We refer to that sub-algebra as $\mathscr{S}_{\mathcal{A}, \mathcal{B} , \mathcal{E}}$ and it consists
 of the following commutation rules:
\< \label{G1}
\mathcal{A}(\lambda_1) \mathcal{A}(\lambda_2) &=& \mathcal{A}(\lambda_2) \mathcal{A}(\lambda_1) \nonumber \\
\mathcal{A}(\lambda_1) \mathcal{B}(\lambda_2) &=& \frac{a(\lambda_2-\lambda_1)}{b(\lambda_2-\lambda_1)} \mathcal{B}(\lambda_2)  \mathcal{A}(\lambda_1) - \frac{c(\lambda_2-\lambda_1)}{b(\lambda_2-\lambda_1)} \mathcal{B}(\lambda_1)  \mathcal{A}(\lambda_2) \nonumber \\
\mathcal{A}(\lambda_1) \mathcal{E}(\lambda_2) &=& \frac{a(\lambda_2-\lambda_1)}{d_{3,3}(\lambda_2-\lambda_1)} \mathcal{E}(\lambda_2) \mathcal{A}(\lambda_1) - \frac{d_{1,3}(\lambda_2-\lambda_1)}{d_{3,3}(\lambda_2-\lambda_1)} \mathcal{E}(\lambda_1) \mathcal{A}(\lambda_2) \nonumber \\
&& \qquad \qquad \qquad\qquad \qquad \qquad \qquad - \; \frac{d_{2,3}(\lambda_2-\lambda_1)}{d_{3,3}(\lambda_2-\lambda_1)} \mathcal{B}(\lambda_1) \mathcal{B}(\lambda_2) 
\>
 
\< \label{G2}
\mathcal{B}(\lambda_1) \mathcal{A}(\lambda_2) &=& \frac{a(\lambda_2-\lambda_1)}{b(\lambda_2-\lambda_1)} \mathcal{A}(\lambda_2)  \mathcal{B}(\lambda_1) -\frac{\bar{c}(\lambda_2-\lambda_1)}{b(\lambda_2-\lambda_1)} \mathcal{A}(\lambda_1)  \mathcal{B}(\lambda_2) \nonumber \\
\mathcal{B}(\lambda_1) \mathcal{B}(\lambda_2) &=& \frac{a(\lambda_2-\lambda_1)}{d_{2,1}(\lambda_2-\lambda_1)} \mathcal{A}(\lambda_2)  \mathcal{E}(\lambda_1) - \frac{d_{3,1}(\lambda_2-\lambda_1)}{d_{2,1}(\lambda_2-\lambda_1)} \mathcal{A}(\lambda_1)  \mathcal{E}(\lambda_2) \nonumber \\
&& \qquad \qquad \qquad\qquad \qquad \qquad \qquad - \; \frac{d_{1,1}(\lambda_2-\lambda_1)}{d_{2,1}(\lambda_2-\lambda_1)} \mathcal{E}(\lambda_1) \mathcal{A}(\lambda_2) \nonumber \\
\mathcal{B}(\lambda_1) \mathcal{E}(\lambda_2) &=& \frac{a(\lambda_2-\lambda_1)}{b(\lambda_2-\lambda_1)} \mathcal{E}(\lambda_2)  \mathcal{B}(\lambda_1) -\frac{c(\lambda_2-\lambda_1)}{b(\lambda_2-\lambda_1)} \mathcal{E}(\lambda_1)  \mathcal{B}(\lambda_2) \nonumber \\
\>

\< \label{G3}
\mathcal{E}(\lambda_1) \mathcal{A}(\lambda_2) &=& \frac{a(\lambda_2-\lambda_1)}{d_{1,2}(\lambda_2-\lambda_1)} \mathcal{B}(\lambda_2) \mathcal{B}(\lambda_1) - \frac{d_{2,2}(\lambda_2-\lambda_1)}{d_{1,2}(\lambda_2-\lambda_1)} \mathcal{B}(\lambda_1) \mathcal{B}(\lambda_2) \nonumber \\
&& \qquad \qquad \qquad\qquad \qquad \qquad \qquad - \; \frac{d_{3,2}(\lambda_2-\lambda_1)}{d_{1,2}(\lambda_2-\lambda_1)} \mathcal{A}(\lambda_1) \mathcal{E}(\lambda_2) \nonumber \\
\mathcal{E}(\lambda_1) \mathcal{B}(\lambda_2) &=& \frac{a(\lambda_2-\lambda_1)}{b(\lambda_2-\lambda_1)} \mathcal{B}(\lambda_2)  \mathcal{E}(\lambda_1) -\frac{\bar{c}(\lambda_2-\lambda_1)}{b(\lambda_2-\lambda_1)} \mathcal{B}(\lambda_1)  \mathcal{E}(\lambda_2) \nonumber \\
\mathcal{E}(\lambda_1) \mathcal{E}(\lambda_2) &=& \mathcal{E}(\lambda_2) \mathcal{E}(\lambda_1)  \nonumber \\
\>

 \newpage

\section{Coefficients $\phi_{i,j,k}$ and $\bar{\phi}_{i,j,k}$} \label{app:COEF}

In this appendix we present the coefficients for the functions $\mathcal{H}$ and $\bar{\mathcal{H}}$ according to formulae \eqref{PH} obtained through the resolution of the $\gen{ZH}$-system for the case $L=2$ . In particular, we explicit the coefficients $\phi_{i,j,k}$ and $\bar{\phi}_{i,j,k}$ for the IK model in Tables \ref{T1} and \ref{T2} respectively. Tables \ref{T3} and \ref{T4} then contains respectively the results for $\phi_{i,j,k}$ and $\bar{\phi}_{i,j,k}$ associated to the FZ model.

\begin{table}[h]
\caption{Coefficients of the function $\mathcal{H}$ for the IK model with $L=2$.}
\label{T1}
\begin{center}
\begin{tabular}{cccccccc}
\multicolumn{4}{c}{\begin{tabular}{|c|c|c|c|}
\hline
i & j & k & $\phi_{i,j,k}/\phi_{0,0,0}$ \\ 
\hline & & & \\[-1em]
0 & 0 & 0 & 1 \\
0 & 0 & 1 & $-\frac{2 \left(2 q^4-2q-1\right)}{q^2 \left(q^2+1\right)\left(q^2+q+1\right)}$ \\
0 & 0 & 2 & $\frac{q^7+q^5-8 q^4-3q^3+q^2+q+1}{q^5 \left(q^2+1\right) \left(q^2+q+1\right)}$ \\
0 & 0 & 3 & $\frac{2 \left(q^4-q-1\right)}{q^5 \left(q^2+1\right) \left(q^2+q+1\right)}$ \\
1 & 0 & 0 & $-\frac{2 \left(q^4-2 q^3-2 q+1\right)}{(q-1) q\left(q^2+1\right) \left(q^2+q+1\right)}$ \\
1 & 0 & 1 & $-\frac{q^{10}-4 q^8+5q^7-q^6+17 q^5-13 q^4-3 q^3+q+1}{q^5 \left(q^5+q^3-q^2-1\right)}$ \\
1 & 0 & 2 & $-\frac{2 \left(q^7-8 q^6+7 q^5+q^4+5 q^3-2 q^2-3 q+1\right)}{q^6 \left(q^5+q^3-q^2-1\right)}$ \\
1 & 0 & 3 & $-\frac{q^8-4 q^7+2q^6-8 q^5+2 q^4+4 q^3-1}{q^8 \left(q^5+q^3-q^2-1\right)}$ \\
2 & 0 & 0 & $-\frac{q^4+q^3+6 q^2+q+1}{q^3 \left(q^2+1\right) \left(q^2+q+1\right)}$ \\
2 & 0 & 1 & $\frac{2 \left(6q^4-q^3-3 q-1\right)}{q^5 \left(q^2+1\right) \left(q^2+q+1\right)}$ \\
2 & 0 & 2 & $-\frac{5 q^7-4 q^6+q^5-12 q^4-3 q^3+q^2+q+1}{q^8 \left(q^2+1\right) \left(q^2+q+1\right)}$ \\
2 & 0 & 3 & $-\frac{2 \left(q^6+2 q^4-q-1\right)}{q^8 \left(q^2+1\right) \left(q^2+q+1\right)}$ \\
3 & 0 & 0 & $\frac{2}{q^3-q^6}$ \\
3 & 0 & 1 & $\frac{4 q^3-q^2-1}{q^6 \left(q^3-1\right)}$ \\
3 & 0 & 2 & $-\frac{2 \left(q^3-q^2-1\right)}{q^6 \left(q^3-1\right)}$ \\
3 & 0 & 3 & $-\frac{q^2+1}{q^6 \left(q^3-1\right)}$ \\
\hline
\end{tabular}} &
\multicolumn{4}{c}{\begin{tabular}{|c|c|c|c|}
\hline
i & j & k & $\phi_{i,j,k}/\phi_{0,0,0}$ \\
\hline & & & \\[-1em]
0 & 1 & 0 & $-\frac{2}{q^2+1}$ \\
0 & 1 & 1 & $\frac{q^7+q^6+5 q^5-3q^4-4 q^3-3 q^2+1}{q^5 \left(q^2+1\right) \left(q^2+q+1\right)}$ \\
0 & 1 & 2 & $-\frac{2 \left(q^5-q^4-q^3-4 q^2+2\right)}{q^5 \left(q^2+1\right) \left(q^2+q+1\right)}$ \\
0 & 1 & 3 & $-\frac{3 q^4-q^3-2q^2-q-1}{q^7 \left(q^2+1\right) \left(q^2+q+1\right)}$ \\
1 & 1 & 0 & $\frac{q^6+q^4-8 q^3+q^2+1}{q^3 \left(q^5+q^3-q^2-1\right)}$ \\
1 & 1 & 1 & $-\frac{2 \left(2 q^6-7 q^5+q^4-4 q^3+8 q^2+q-3\right)}{q^5 \left(q^5+q^3-q^2-1\right)}$ \\
1 & 1 & 2 & $\frac{q^{11}-5 q^{10}+2 q^9-12 q^8+22 q^7+q^6-3 q^4-5 q^3+4 q^2-1}{q^{10} \left(q^5+q^3-q^2-1\right)}$ \\
1 & 1 & 3 & $\frac{2 \left(q^6-4 q^5+q^4-2 q^3+q^2+2 q-1\right)}{q^8 \left(q^5+q^3-q^2-1\right)}$ \\
2 & 1 & 0 & $\frac{2 \left(2 q^2+q+2\right)}{q^3 \left(q^2+1\right) \left(q^2+q+1\right)}$ \\
2 & 1 & 1 & $-\frac{5 q^7+q^6+9 q^5-7 q^4-4 q^3-3 q^2+1}{q^8 \left(q^2+1\right) \left(q^2+q+1\right)}$ \\
2 & 1 & 2 & $\frac{2 \left(q^7+2 q^5-5 q^4-q^3-4 q^2+2\right)}{q^8 \left(q^2+1\right) \left(q^2+q+1\right)}$ \\
2 & 1 & 3 & $\frac{4 q^4-q^2-q-1}{q^{10} \left(q^2+q+1\right)}$ \\
3 & 1 & 0 & $\frac{q^2+1}{q^5 \left(q^3-1\right)}$ \\
3 & 1 & 1 & $-\frac{2 \left(q^3+q-1\right)}{q^6 \left(q^3-1\right)}$ \\
3 & 1 & 2 & $\frac{q^3+q-4}{q^6 \left(q^3-1\right)}$ \\
3 & 1 & 3 & $\frac{2}{q^6 \left(q^3-1\right)}$ \\
\hline
\end{tabular}} 
\end{tabular}
\end{center}
\end{table}

\newpage
\begin{table}[h]
\caption{Coefficients of the function $\bar{\mathcal{H}}$ for the IK model with $L=2$.}
\label{T2}
\begin{center}
\begin{tabular}{cccccccc}
\multicolumn{4}{c}{\begin{tabular}{|c|c|c|c|}
\hline & & & \\[-1em]
i & j & k & $\bar{\phi}_{i,j,k}/\phi_{0,0,0}$ \\ 
\hline & & & \\[-1em]
0 & 0 & 0 & $q^4$ \\
0 & 0 & 1 & $-\frac{2 q^3 \left(q^4-2 q^3-2 q+1\right)}{(q-1) \left(q^2+1\right) \left(q^2+q+1\right)}$ \\
0 & 0 & 2 & $-\frac{q \left(q^4+q^3+6 q^2+q+1\right)}{\left(q^2+1\right) \left(q^2+q+1\right)}$ \\
0 & 0 & 3 & $-\frac{2 q}{(q-1) \left(q^2+q+1\right)}$ \\
1 & 0 & 0 & $-\frac{2 q^3 \left(q^4+2 q^3-2\right)}{\left(q^2+1\right) \left(q^2+q+1\right)}$ \\
1 & 0 & 1 & $\frac{q^{10}+q^9-3 q^7-13 q^6+17 q^5-q^4+5 q^3-4 q^2+1}{(q-1) q^2 \left(q^2+1\right) \left(q^2+q+1\right)}$ \\
1 & 0 & 2 & $\frac{2 \left(q^4+3 q^3+q-6\right)}{\left(q^2+1\right) \left(q^2+q+1\right)}$ \\
1 & 0 & 3 & $\frac{q^3+q-4}{(q-1) q^2 \left(q^2+q+1\right)}$ \\
2 & 0 & 0 & $\frac{q^7+q^6+q^5-3 q^4-8 q^3+q^2+1}{\left(q^2+1\right) \left(q^2+q+1\right)}$ \\
2 & 0 & 1 & $-\frac{2 \left(q^7-3 q^6-2 q^5+5 q^4+q^3+7 q^2-8 q+1\right)}{(q-1) q\left(q^2+1\right) \left(q^2+q+1\right)}$ \\
2 & 0 & 2 & $-\frac{q^7+q^6+q^5-3 q^4-12 q^3+q^2-4 q+5}{q^3 \left(q^2+1\right) \left(q^2+q+1\right)}$ \\
2 & 0 & 3 & $\frac{2 \left(q^3+q-1\right)}{(q-1) q^5 \left(q^2+q+1\right)}$ \\
3 & 0 & 0 & $\frac{2 \left(q^4+q^3-1\right)}{\left(q^2+1\right) \left(q^2+q+1\right)}$ \\
3 & 0 & 1 & $-\frac{q^8-4 q^5-2 q^4+8 q^3-2 q^2+4 q-1}{(q-1) q^3 \left(q^2+1\right) \left(q^2+q+1\right)}$ \\
3 & 0 & 2 & $-\frac{2\left(q^6+q^5-2 q^2-1\right)}{q^5 \left(q^2+1\right) \left(q^2+q+1\right)}$ \\
3 & 0 & 3 & $\frac{q^2+1}{(q-1) q^7 \left(q^2+q+1\right)}$ \\
\hline
\end{tabular}} &
\multicolumn{4}{c}{\begin{tabular}{|c|c|c|c|}
\hline & & & \\[-1em]
i & j & k & $\bar{\phi}_{i,j,k}/\phi_{0,0,0}$ \\
\hline & & & \\[-1em]
 0 & 1 & 0 & $\frac{2 q^3}{q^2+1}$ \\
 0 & 1 & 1 & $-\frac{q^6+q^4-8 q^3+q^2+1}{(q-1) \left(q^2+1\right) \left(q^2+q+1\right)}$ \\
 0 & 1 & 2 & $-\frac{2 \left(2 q^2+q+2\right)}{\left(q^2+1\right) \left(q^2+q+1\right)}$ \\
 0 & 1 & 3 & $-\frac{q^2+1}{(q-1) q^2 \left(q^2+q+1\right)}$ \\
 1 & 1 & 0 & $\frac{q^7-3 q^5-4 q^4-3 q^3+5 q^2+q+1}{\left(q^2+1\right) \left(q^2+q+1\right)}$ \\
 1 & 1 & 1 & $\frac{2 \left(3 q^6-q^5-8 q^4+4 q^3-q^2+7 q-2\right)}{(q-1) q \left(q^2+1\right) \left(q^2+q+1\right)}$ \\
 1 & 1 & 2 & $-\frac{q^7-3 q^5-4 q^4-7 q^3+9 q^2+q+5}{q^3 \left(q^2+1\right) \left(q^2+q+1\right)}$ \\
 1 & 1 & 3 & $\frac{2 \left(q^3-q^2-1\right)}{(q-1) q^5 \left(q^2+q+1\right)}$ \\
 2 & 1 & 0 & $\frac{2 \left(2 q^5-4 q^3-q^2-q+1\right)}{q \left(q^2+1\right) \left(q^2+q+1\right)}$ \\
 2 & 1 & 1 & $\frac{q^{11}-4 q^9+5 q^8+3 q^7-q^5-22 q^4+12 q^3-2 q^2+5 q-1}{(q-1) q^4 \left(q^2+1\right) \left(q^2+q+1\right)}$ \\
 2 & 1 & 2 & $-\frac{2 \left(2 q^7-4 q^5-q^4-5 q^3+2 q^2+1\right)}{q^6 \left(q^2+1\right)\left(q^2+q+1\right)}$ \\
 2 & 1 & 3 & $\frac{4 q^3-q^2-1}{(q-1) q^8 \left(q^2+q+1\right)}$ \\
 3 & 1 & 0 & $\frac{q^4+q^3+2 q^2+q-3}{q \left(q^2+1\right) \left(q^2+q+1\right)}$ \\
 3 & 1 & 1 & $-\frac{2 \left(q^6-2 q^5-q^4+2 q^3-q^2+4 q-1\right)}{(q-1) q^4 \left(q^2+1\right) \left(q^2+q+1\right)}$ \\
 3 & 1 & 2 & $-\frac{q^4+q^3+q^2-4}{q^6 \left(q^2+q+1\right)}$ \\
 3 & 1 & 3 & $\frac{2}{(q-1) q^8 \left(q^2+q+1\right)}$ \\
 \hline
\end{tabular}} 
\end{tabular}

\end{center}
\end{table}

\newpage
\begin{table}[h]
\caption{Coefficients of the function $\mathcal{H}$ for the FZ model with $L=2$.}
\label{T3}
\begin{center}
\begin{tabular}{cccccccc}
\multicolumn{4}{c}{\begin{tabular}{|c|c|c|c|}
\hline
i & j & k & $\phi_{i,j,k}/\phi_{0,0,0}$ \\ 
\hline & & & \\[-1em]
0 & 0 & 0 & $1$ \\
0 & 0 & 1 & $-\frac{2 (q+2)}{q^2+1}$ \\
0 & 0 & 2 & $\frac{(2 q+1) \left(q^2+4 q+1\right)}{q \left(q^2+1\right) \left(q^2+q+1\right)}$ \\
0 & 0 & 3 & $-\frac{2}{q \left(q^2+1\right)}$ \\
1 & 0 & 0 & $-\frac{2 (q+1)}{q^2+1}$ \\
1 & 0 & 1 & $-\frac{(q+1) \left(q^4-q^3-8 q^2-9 q-1\right)}{q \left(q^2+1\right) \left(q^2+q+1\right)}$ \\
1 & 0 & 2 & $\frac{2 (q+1) \left(q^3-2 q^2-5 q-3\right)}{q \left(q^2+1\right) \left(q^2+q+1\right)}$ \\
1 & 0 & 3 & $\frac{(q+1) \left(q^2+4 q+1\right)}{q^2 \left(q^2+1\right) \left(q^2+q+1\right)}$ \\
2 & 0 & 0 & $\frac{q^2+4 q+1}{\left(q^2+1\right) \left(q^2+q+1\right)}$ \\
2 & 0 & 1 & $\frac{2 \left(q^4-q^3-5 q^2-3 q-1\right)}{q \left(q^2+1\right) \left(q^2+q+1\right)}$ \\
2 & 0 & 2 & $-\frac{q^5-4 q^3-9 q^2-5 q-1}{q^2 \left(q^2+1\right) \left(q^2+q+1\right)}$ \\
2 & 0 & 3 & $-\frac{2}{q^2 \left(q^2+1\right)}$ \\
3 & 0 & 0 & $0$ \\
3 & 0 & 1 & $0$ \\
3 & 0 & 2 & $0$ \\
3 & 0 & 3 & $0$ \\
\hline
\end{tabular}} &
\multicolumn{4}{c}{\begin{tabular}{|c|c|c|c|}
\hline
i & j & k & $\phi_{i,j,k}/\phi_{0,0,0}$ \\
\hline & & & \\[-1em]
0 & 1 & 0 & $-\frac{2}{q^2+1}$ \\
0 & 1 & 1 & $\frac{q^5+5 q^4+9 q^3+4 q^2-1}{q^3 \left(q^2+1\right) \left(q^2+q+1\right)}$ \\
0 & 1 & 2 & $-\frac{2 \left(q^4+3 q^3+5 q^2+q-1\right)}{q^3 \left(q^2+1\right) \left(q^2+q+1\right)}$ \\
0 & 1 & 3 & $\frac{q^2+4 q+1}{q^2 \left(q^2+1\right) \left(q^2+q+1\right)}$ \\
1 & 1 & 0 & $\frac{(q+1) \left(q^2+4 q+1\right)}{q \left(q^2+1\right) \left(q^2+q+1\right)}$ \\
1 & 1 & 1 & $-\frac{2 (q+1) \left(3 q^3+5 q^2+2 q-1\right)}{q^3 \left(q^2+1\right) \left(q^2+q+1\right)}$ \\
1 & 1 & 2 & $\frac{(q+1) \left(q^4+9 q^3+8 q^2+q-1\right)}{q^4 \left(q^2+1\right) \left(q^2+q+1\right)}$ \\
1 & 1 & 3 & $-\frac{2 (q+1)}{q^3 \left(q^2+1\right)}$ \\
2 & 1 & 0 & $-\frac{2}{q \left(q^2+1\right)}$ \\
2 & 1 & 1 & $\frac{(q+2) \left(q^2+4 q+1\right)}{q^2 \left(q^2+1\right)\left(q^2+q+1\right)}$ \\
2 & 1 & 2 & $-\frac{2 (2 q+1)}{q^3 \left(q^2+1\right)}$ \\
2 & 1 & 3 & $\frac{1}{q^4}$ \\
3 & 1 & 0 & $0$ \\
3 & 1 & 1 & $0$ \\
3 & 1 & 2 & $0$ \\
3 & 1 & 3 & $0$ \\
\hline
\end{tabular}} 
\end{tabular}
\end{center}
\end{table}

\newpage
\begin{table}[h]
\caption{Coefficients of the function $\bar{\mathcal{H}}$ for the FZ model with $L=2$.}
\label{T4}
\begin{center}
\begin{tabular}{cccccccc}
\multicolumn{4}{c}{\begin{tabular}{|c|c|c|c|}
\hline & & & \\[-1em]
i & j & k & $\bar{\phi}_{i,j,k}/\phi_{0,0,0}$ \\ 
\hline & & & \\[-1em]
0 & 0 & 0 & $1$ \\
0 & 0 & 1 & $-\frac{2 (q+1)}{q^2+1}$ \\
0 & 0 & 2 & $\frac{q^2+4 q+1}{q^4+q^3+2 q^2+q+1}$ \\
0 & 0 & 3 & $0$ \\
1 & 0 & 0 & $-\frac{4 q+2}{q^2+1}$ \\
1 & 0 & 1 & $\frac{(q+1) \left(q^4+9 q^3+8 q^2+q-1\right)}{q^2 \left(q^2+1\right) \left(q^2+q+1\right)}$ \\
1 & 0 & 2 & $-\frac{2 \left(q^4+3 q^3+5 q^2+q-1\right)}{q^2 \left(q^2+1\right) \left(q^2+q+1\right)}$ \\
1 & 0 & 3 & $0$ \\
2 & 0 & 0 & $\frac{(q+2) \left(q^2+4 q+1\right)}{\left(q^2+1\right) \left(q^2+q+1\right)}$ \\
2 & 0 & 1 & $-\frac{2 (q+1) \left(3 q^3+5 q^2+2 q-1\right)}{q^2 \left(q^2+1\right) \left(q^2+q+1\right)}$ \\
2 & 0 & 2 & $\frac{q^5+5 q^4+9 q^3+4 q^2-1}{q^3 \left(q^2+1\right) \left(q^2+q+1\right)}$ \\
2 & 0 & 3 & $0$ \\
3 & 0 & 0 & $-\frac{2}{q^2+1}$ \\
3 & 0 & 1 & $\frac{q^3+5 q^2+5 q+1}{q^5+q^4+2 q^3+q^2+q}$ \\
3 & 0 & 2 & $-\frac{2}{q^3+q}$ \\
3 & 0 & 3 & $0$ \\
\hline
\end{tabular}} &
\multicolumn{4}{c}{\begin{tabular}{|c|c|c|c|}
\hline & & & \\[-1em]
i & j & k & $\bar{\phi}_{i,j,k}/\phi_{0,0,0}$ \\
\hline & & & \\[-1em]
0 & 1 & 0 & $-\frac{2 q}{q^2+1}$ \\
0 & 1 & 1 & $\frac{(q+1) \left(q^2+4 q+1\right)}{\left(q^2+1\right) \left(q^2+q+1\right)}$ \\
0 & 1 & 2 & $-\frac{2}{q^2+1}$ \\
0 & 1 & 3 & $0$ \\
1 & 1 & 0 & $\frac{-q^5+4 q^3+9 q^2+5 q+1}{q^4+q^3+2 q^2+q+1}$ \\
1 & 1 & 1 & $\frac{2 (q+1) \left(q^3-2 q^2-5 q-3\right)}{\left(q^2+1\right) \left(q^2+q+1\right)}$ \\
1 & 1 & 2 & $\frac{2 q^3+9 q^2+6 q+1}{q^5+q^4+2 q^3+q^2+q}$ \\
1 & 1 & 3 & $0$ \\
2 & 1 & 0 & $\frac{2 \left(q^4-q^3-5 q^2-3 q-1\right)}{\left(q^2+1\right) \left(q^2+q+1\right)}$ \\
2 & 1 & 1 & $\frac{-q^5+9 q^3+17 q^2+10 q+1}{q^5+q^4+2 q^3+q^2+q}$ \\
2 & 1 & 2 & $-\frac{2 (q+2)}{q^3+q}$ \\
2 & 1 & 3 & $0$ \\
3 & 1 & 0 & $\frac{q^2+4 q+1}{q^4+q^3+2 q^2+q+1}$ \\
3 & 1 & 1 & $-\frac{2 (q+1)}{q^3+q}$ \\
3 & 1 & 2 & $\frac{1}{q^2}$ \\
3 & 1 & 3 & $0$ \\
\hline
\end{tabular}} 
\end{tabular}

\end{center}
\end{table}

\end{document}